\documentclass[dvips]{llncs}

\makeatletter
\let\LLNCSproof\proof                         
\let\LLNCSendproof\endproof              
\let\proof\@undefined                        
\let\endproof\@undefined                  
\makeatother

\usepackage{amscd, amsmath, amssymb, amsbsy, stmaryrd, amsthm}

\let\proof\LLNCSproof                         
\let\endproof\LLNCSendproof              

\usepackage{sublabel}
\usepackage{ifthen}
\usepackage{graphicx}
\usepackage{xspace}
\usepackage{times}
\usepackage[subfigure]{tocloft}
\usepackage{subfigure}
\usepackage{float}
\usepackage{xypic}
\usepackage{vaucanson-g}
\usepackage{bbold}
\usepackage{wrapfig}

\usepackage{hyperref}

\bibliographystyle{splncs03.bst}

\title{\bf B\"uchi Automata can have Smaller Quotients\vspace{-4mm}}
\date{Preliminary Draft --- October 2010}
\author{Lorenzo Clemente}
\institute{LFCS. School of Informatics. University of Edinburgh. UK}


\newtheorem{lem}{Lemma}

\newtheorem{fact}{Fact}

\theoremstyle{remark}
\newtheorem*{unremark}{Remark} 

\newboolean{pedantic}
\setboolean{pedantic}{false}

\newboolean{ffalse}
\setboolean{ffalse}{false}


\newcommand{\st}{\ |\ }
\newcommand{\setof}[2]{\{ #1 \st #2 \}}


\newcommand{\restrict}[2]{\left.{#1}\right|_{#2}}
\newcommand{\card}[1]{\left|{#1}\right|}
\newcommand{\last}[1]{\mathrm{last}({#1})}


\newcommand{\eqsimrel}{\approx}

\newcommand{\lang}[2]{\mathcal L^{#1}({#2})}
\newcommand{\tracelang}[1]{\mathcal T\!r(#1)}
\newcommand{\omegatracelang}[1]{\mathcal T\!r^\omega(#1)}
\newcommand{\omegalang}[1]{\lang{\omega}{#1}}
\newcommand{\finlang}[1]{\lang{\mathrm{fin}}{#1}}

\newcommand{\kdegame}[3]{\ensuremath{G^{{#1}\textrm{-de}}({#2},{#3})}}
\newcommand{\fxsimgame}[3]{\ensuremath{G^{\textrm{de}}_{#1}({#2},{#3})}}
\newcommand{\kfxsimgame}[4]{\ensuremath{G^{#1\textrm{-de}}_{#2}({#3},{#4})}}

\newcommand{\accgame}[3]{\ensuremath{G^{#1}({#2},{#3})}}
\newcommand{\accomegagame}[2]{\ensuremath{\accgame{\omega}{#1}{#2}}}
\newcommand{\accfingame}[2]{\ensuremath{\accgame{\mathrm{fin}}{#1}{#2}}}

\newcommand{\pair}[2]{\ensuremath{\langle{#1},{#2}\rangle}}
\newcommand{\triple}[3]{\ensuremath{\langle {#1}, {#2}, {#3} \rangle}}
\newcommand{\quadruple}[4]{\langle {#1}, {#2}, {#3}, {#4} \rangle}
\newcommand{\quintuple}[5]{\langle {#1}, {#2}, {#3}, {#4}, {#5} \rangle}

\newcommand{\conf}[2]{\pair{#1}{#2}}

\newcommand{\set}[1]{\bold{#1}}

\newcommand{\quot}[2]{{#1}_{#2}}
\newcommand{\ABA}[1]{{\ensuremath{\mathcal{#1}}}}

\newcommand{\NBA}[1]{\ensuremath{\mathcal{#1}}}
\newcommand{\alttuple}{(Q, \Sigma, q_I, \Delta, E, U, F)}

\newcommand{\NBAtuple}{(Q, \Sigma, I, \Delta, F)}
\newcommand{\ABAtuple}{(A, \Sigma, \delta, \alpha)}
\newcommand{\NBAquottuple}[1]{([Q], \Sigma, [I], \Delta_{#1}, [F])}

\newcommand{\Nat}{\mathbb{N}}

\newcommand{\nextp}{\mathsf{next}}
\newcommand{\update}{\mathsf{up}}


\newcommand{\T}{\ensuremath{T}}

\newcommand{\CPrename}{\mathsf{CPre}}
\newcommand{\CPre}[2]{\CPrename^{#1}({#2})}

\newcommand{\true}{\mathtt{true}}
\newcommand{\false}{\mathtt{false}}

\newcommand{\acceptingsince}[4]{\mathsf{accepting}^{#3}_{#4}({#1}, {#2})}
\newcommand{\goodsince}[4]{\mathsf{good}^{#3}_{#4}({#1}, {#2})}
\newcommand{\rank}[1]{\mathsf{rank}_q^w({#1})}



\newcommand{\floor}[1]{\lfloor{#1}\rfloor}
\newcommand{\goesto}[1]{\stackrel{#1}{\longrightarrow}}
\newcommand{\jgoesto}[1]{\stackrel{#1}{\dashrightarrow}}
\newcommand{\goto}[1]{\stackrel{#1}{\Longrightarrow}}

\newcommand{\cont}{\subseteq}

\newcommand{\decont}{\cont^\textrm{de}}

\newcommand{\fxsim}{\sim_\textrm{fx}}

\newcommand{\fxdesim}{\fxsim^\textrm{de}}
\newcommand{\fxkdesim}[1]{\fxsim^\textrm{${#1}$-de}}

\newcommand{\wdesim}[1]{\sqsubseteq_{#1}^\textrm{de}}
\newcommand{\wkdesim}[2]{\sqsubseteq_{#1}^\textrm{${#2}$-de}}

\renewcommand{\sim}{\sqsubseteq}
\renewcommand{\simeq}{\approx}

\newcommand{\fwdisim}{\sim^{\textrm{di}}_\textrm{fw}}
\newcommand{\desim}{\sim^\textrm{de}}
\newcommand{\kdesim}[1]{\desim_{#1}}

\newcommand{\diproxysim}{\sim^{\textrm{di}}_\textrm{xy}}
\newcommand{\diproxysimeq}{\approx^{\textrm{di}}_\textrm{xy}}
\newcommand{\deproxysim}{\sim^{\textrm{de}}_\textrm{xy}}

\newcommand{\fwdisimeq}{\simeq^{\textrm{di}}_\textrm{fw}}

\newcommand{\defxsimeq}{\approx^\textrm{de}_\textrm{fx}}

\newcommand{\prefixle}{<_{\textrm{prf}}}
\newcommand{\prefixleq}{\leq_{\textrm{prf}}}

\newcommand{\countfname}{\mathsf{cnt\textrm{-}final}}
\newcommand{\countf}[2]{\countfname({#1}, {#2})}
\newcommand{\countflast}[1]{\countfname({#1})}

\newcommand{\bwsim}{\sqsubseteq_{\textrm{bw}}}
\newcommand{\bwsimeq}{\simeq_{\textrm{bw}}}

\newcommand{\bwdisim}{\bwsim^{\textrm{di}}}
\newcommand{\bwdisimrev}{(\sqsubseteq^\textrm{di}_\textrm{bw})^{-1}}
\newcommand{\bwdisimeq}{\bwsimeq^{\textrm{di}}}

\newcommand{\taudenaught}{\tau_0^\textrm{de}}
\newcommand{\taude}{\tau_1^\textrm{de}}
\newcommand{\taufxde}{\tau_1^\textrm{fx-de}}

\newcounter{myenumerate}
\setcounter{myenumerate}{0}






\newcommand{\ignore}[1]{}




\newcommand{\appendixstatement}[3]{
\paragraph{\bf {#1}~\ref{#2}.}{\em #3}
}

\newlistof{inappendix}{app}{Appendix}
\setcounter{inappendix}{1}

\newcommand{\appendixlemma}[5][unnamedlemma]{%
\ifappendix%
\label{thesection:#2}
\def\argone{#1}
\ifthenelse{\equal{\argone}{unnamedlemma}}{\begin{lem}}{\begin{lem}[#1]}\label{#2}%
#3%
\end{lem}%
\addtocontents{app}{fulllemmastatement\roman{lem}}%
\addtocontents{app}{fulllemmaproof\roman{lem}}%
\expandafter\def\csname fulllemmastatement\roman{lem}\endcsname{%
\setcounter{lem}{\ref{#2}}%
\addtocounter{lem}{-1}%
\ifthenelse{\equal{#1}{unnamedlemma}}{\begin{lem}[in Section~\ref{thesection:#2}]}{\begin{lem}[#1, in Section~\ref{thesection:#2}]}#4\end{lem}}%
\expandafter\def\csname fulllemmaproof\roman{lem}\endcsname{%
\begin{proof}#5\end{proof}}%
\else%
\begin{lem}\label{#2}%
#3%
\end{lem}%
\begin{proof}%
#5%
\end{proof}%
\fi%
}

\let\oparagraph\paragraph
\def\paragraph#1{\vspace{-2mm}\oparagraph{#1}\vspace{0mm}}

\begin{document}

\maketitle

\vspace{-5mm}
\begin{abstract}
\noindent
We study novel simulation-like preorders for quotienting nondeterministic B\"uchi automata. 
We define \emph{fixed-word delayed simulation},
a new preorder coarser than delayed simulation.
We argue that fixed-word simulation is the coarsest forward simulation-like preorder which can be used for quotienting B\"uchi automata,
thus improving our understanding of the limits of quotienting.
Also, we show that computing fixed-word simulation is PSPACE-complete.

On the practical side,
we introduce \emph{proxy simulations},
which are novel polynomial-time computable preorders sound for quotienting.
In particular, \emph{delayed proxy simulation} induce quotients that can be smaller by an arbitrarily large factor than direct backward simulation.
We derive proxy simulations as the product of a theory of refinement transformers:
A \emph{refinement transformer} maps preorders nondecreasingly,
preserving certain properties.
We study under which general conditions refinement transformers are sound for quotienting.

\end{abstract}
\vspace{-5mm}

\section{Introduction}

B\"uchi automata minimization is an important topic in automata theory,
both for the theoretical understanding of automata over infinite words
and for practical applications.
Minimizing an automaton means reducing the number of its states as much as possible,
while preserving the recognized language.
Minimal automata need not be unique,
and their structure does not necessarily bear any resemblance to the original model;
in the realm of infinite words, 
this holds even for deterministic models.
This hints at why exact minimization has high complexity:
Indeed, minimality checking is PSPACE-hard for nondeterministic models (already over finite words \cite{ravikumar:hard:1991}),
and NP-hard for deterministic B\"uchi automata \cite{schewe:DBA:minimization}.
Moreover, even approximating the minimal model is hard \cite{gramlich:minimizing:journal:2007}.

By posing suitable restrictions on the minimization procedure,
it is nonetheless possible to trade exact minimality for efficiency.
In the approach of \emph{quotienting},
smaller automata are obtained by merging together equivalent states,
under appropriately defined equivalences.
In particular, quotienting by simulation equivalence has proven to be an effective heuristics for reducing the size of automata in cases of practical relevance.

The notion of \emph{simulation preorder} and \emph{equivalence} \cite{Milner:newbook} is a crucial tool for comparing the behaviour of systems.
It is best described via a game between two players, Duplicator and Spoiler,
where the former tries to stepwise match the moves of the latter.
But not every simulation preorder can be used for quotienting:
We call a preorder \emph{good for quotienting} (GFQ)
if the quotient automaton (w.r.t. the induced equivalence) recognizes the same language as the original automaton.
In particular,
a necessary condition for a simulation to be GFQ is to take into account the acceptance condition:
For example, in \emph{direct} simulation \cite{DHW:simulation},
Duplicator has the additional requirement to visit an accepting state whenever Spoiler does so,
while in the coarser \emph{fair} simulation \cite{fairsimulation:02},
Duplicator has to visit infinitely many accepting states if Spoiler does so.
But, while direct simulation is GFQ \cite{aziz:minimizing},
fair simulation is not \cite{fairbisimulation:00}.%
\footnote{In fact, for B\"uchi automata it is well-known that also language equivalence is not GFQ.}
This prompted the development of \emph{delayed} simulation \cite{etessami:etal:fairsimulations:05},
a GFQ preorder intermediate between direct and fair simulation.

We study the border of GFQ preorders.
In our first attempt we generalize delayed simulation to \emph{delayed containment}.
While in simulation the two players take turns in selecting transitions,
in containment the game ends in one round:
First Spoiler picks an infinite path,
and then Duplicator has to match it with another infinite path.
The winning condition is delayed-like:
Every accepting state of Spoiler has to be matched by an accepting state of Duplicator,
possibly occurring later.
Therefore, in delayed containment Duplicator is much stronger than in simulation;
in other words, containment is coarser than simulation.
In fact, it is \emph{too coarse}:
We give a counterexample where delayed containment is not GFQ.
We henceforth turn our attention to finer preorders.

In our second attempt, we remedy to the deficiency above by introducing \emph{fixed-word delayed simulation},
an intermediate notion between simulation and containment.
In fixed-word simulation, Spoiler does not reveal the whole path in advance like in containment;
instead, she only declares the input word beforehand.
Then, the simulation game starts,
but now transitions can be taken only if they match the word fixed earlier by Spoiler.
Unlike containment, fixed-word delayed simulation is GFQ, as we show.

We proceed by looking at even coarser GFQ preorders.
We enrich fixed-word simulation by allowing Duplicator to use \emph{multiple pebbles},
in the style of \cite{etessami:hierarchy02}.
The question arises as whether Duplicator gains more power by ``hedging her bets''
when she already knows the input word in advance.
By using an ordinal ranking argument (reminiscent of \cite{KV01:weak}),
we establish that this is not the case,
and that the multipebble hierarchy collapses to the 1-pebble case,
i.e., to fixed-word delayed simulation itself.
Incidentally, this also shows that the whole delayed multipebble hierarchy from \cite{etessami:hierarchy02} is entirely contained in fixed-word delayed simulation---the containment being strict.

For what concerns the complexity of computing fixed-word simulation,
we establish that it is PSPACE-complete,
by a mutual reduction from B\"uchi automata universality.

With the aim of getting tractable preorders,
we then look at a different way of obtaining GFQ relations,
by introducing a theory of refinement transformers:
A \emph{refinement transformer} maps a preorder $\preceq$ to a coarser preorder $\preceq'$,
s.t., once $\preceq$ is known,
$\preceq'$ can be computed with only a polynomial time overhead.
The idea is to play a simulation-like game,
where we allow Duplicator to ``jump'' to $\preceq$-bigger states, called \emph{proxies},
after Spoiler has selected her transition.
Duplicator can then reply with a transition from the proxy instead of the original state.
We say that proxy states are \emph{dynamic}
in the sense that they depend on the transition selected by Spoiler.%
\footnote{Proxies are strongly related to mediators \cite{mediating:fsttcs2009}. We compare them in depth in Section~\ref{sec:proxy}.}
Under certain conditions,
we show that refinement transformers induce GFQ preorders.

Finally, we introduce \emph{proxy simulations},
which are novel polynomial time GFQ preorders obtained by applying refinement transformers to a concrete preorder $\preceq$,
namely, to \emph{backward} direct simulation (called reverse simulation in \cite{efficientltl:2000}).
We define two versions of proxy simulation, direct and delayed,
the latter being coarser than the former,
and both coarser than direct backward simulation.
Moreover, we show that the delayed variant can achieve quotients smaller than direct proxy simulation by an arbitrarily large factor.
Full proofs can be found in the appendix.

\paragraph{Related work.}

Delayed simulation \cite{etessami:etal:fairsimulations:05} has been extended to generalized automata \cite{piterman:generalized06},
to multiple pebbles \cite{etessami:hierarchy02},
to alternating automata \cite{wilke:fritz:simulations:05} and to the combination of the last two \cite{multipebble:concur10}.
Fair simulation has been used for state space reduction in \cite{fairminimization:02}.
The abstract idea of mixing forward and backward modes in quotienting can be traced back at least to \cite{raimi:phdthesis};
in the context of alternating automata,
it has been studied in \cite{mediating:fsttcs2009}.

\section{Preliminaries}
\label{sec:prelim}


\paragraph{Games.}

For a finite sequence $\pi = e_0e_1\cdots e_{k-1}$,
let $\card \pi = k$ be its length,
and let $\last \pi = e_{k-1}$ be its last element.
If $\pi$ is infinite, then take $\card \pi = \omega$.

A \emph{game} is a tuple $G = (P, P_0, P_1, p_I, \Gamma, \Gamma_0, \Gamma_1, W)$,
where $P$ is the set of positions, partitioned into disjoint sets $P_0$ and $P_1$,
$p_I \in P_0$ is the initial position, $\Gamma = \Gamma_0 \cup \Gamma_1$ is the set of moves,
where $\Gamma_0 \subseteq P_0 \times P_1$ and $\Gamma_1 \subseteq P_1 \times P_0$
are the set of moves of Player 0 and Player 1, respectively,
and $W \subseteq P_0^\omega$ is the winning condition.
A \emph{path} is a finite or infinite sequence of states
$\pi = p_0^0 p_0^1 p_1^0 p_1^1 \cdots$
starting in $p_I$,
such that, for all $i < \card \pi$,
$(p_i^0, p_i^1) \in \Gamma_0$ and $(p_i^1, p_{i+1}^0) \in \Gamma_1$.
\emph{Partial plays} and \emph{plays} are finite and infinite paths, respectively.
We assume that there are no dead ends in the game. 
A play is \emph{winning} for Player 1 iff $p_0^0p_1^0p_2^0 \cdots \in W$;
otherwise, is it winning for Player 0.

A \emph{strategy} for Player 0 is a partial function $\sigma_0 : (P_0P_1)^*P_0 \mapsto P_1$
s.t., for any partial play $\pi \in (P_0P_1)^*P_0$,
if $\sigma_0$ is defined on $\pi$,
then $\pi \cdot \sigma_0(\pi)$ is again a partial play.
A play $\pi$ 
is \emph{$\sigma_0$-conform}
iff, for every $i \geq 0$,
$p_i^1 = \sigma_0(p_0^0 p_0^1 \cdots p_i^0)$.
Similarly, a strategy for Player 1 is a partial function $\sigma_1 : (P_0P_1)^+ \mapsto P_0$
s.t., for any partial play $\pi \in (P_0P_1)^+$,
if $\sigma_1$ is defined on $\pi$,
then $\pi \cdot \sigma_1(\pi)$ is again a partial play.
A play $\pi$ 
is $\sigma_1$-conform
iff, for every $i \geq 0$,
$p_{i+1}^0 = \sigma_0(p_0^0 p_0^1 \cdots p_i^0 p_i^1)$.
While we do not require strategies to be total functions,
we do require that a strategy $\sigma$ is defined on all $\sigma$-conform partial plays.

A strategy $\sigma_i$ is a \emph{winning strategy} for Player $i$
iff all $\sigma_i$-conform plays are winning for Player $i$.
We say that Player $i$ wins the game $G$ if she has a winning strategy.

\ignore{
Moreover, since we apply strategies \emph{only} to conform plays,
players do not need complete knowledge of the entire play,
but they only need to know the content of those positions in the play which belong to them;
indeed, they can figure out the position of the adversary by applying the strategy itself to the appropriate prefix.

Formally, an \emph{adversary-blind strategy} for Player 0 is a function $\sigma'_0 : (P_0)^+ \mapsto P_1$.
An adversary-blind strategy $\sigma'_0$ immediately induces a strategy $\sigma_0$,
obtained by filtering out the positions of the adversary:
Let $\pi = p_0^0 p_0^1 p_1^0 p_1^1 \cdots p_k^0$ be a $\sigma_0$-conform partial play.
Then, take $\sigma_0(\pi) := \sigma'_0(p_0^0 p_1^0 \cdots p_k^0)$.
Conversely, a strategy $\sigma_0$ induces an adversary-blind strategy $\sigma'_0$ as follows:
Let $\pi = p_0^0 p_1^0 \cdots p_k^0$ be a $\sigma'_0$-conform partial play.
Consider the following sequence $\pi_0 = p_0^0$ and, for $0 \leq i < k$,
$\pi_{2i+1} = \pi_{2i} \cdot \sigma_0(\pi_{2i})$
and $\pi_{2i+2} = \pi_{2i+1} \cdot p_{i+1}^0$.
Then, $\sigma'_0(\pi)$ is defined to be the last state in $\pi_{2k+1}$,
i.e., $\sigma'_0(\pi) := \sigma_0(\pi_{2k})$.
It is immediate to see that $\sigma_0$ is winning iff $\sigma'_0$ is winning in both constructions.
Therefore, strategies and adversary-blind strategies are equivalent.

A Player 0 strategy $\sigma_0$ has \emph{finite memory}
iff there exists a finite pointed set $(M, m_0)$, with $m_0 \in M$,
and two functions $\nextp : P_0 \times M \mapsto P_1$
and $\update : P_1 \times M \times P_0 \mapsto M$
s.t., for any partial play $\pi = p_0^0 p_0^1 \cdots p_i^0$,
$\delta(\pi) = \nextp(p_i^0, \update^*(m_0, \pi))$,
where the function $\update^*$ is inductively defined in a bottom-up fashion as
$\update^*(m, \varepsilon) = m$ and,
for $k \geq 0$,
$\update^*(m, p_0^0 p_0^1 \cdots p_k^0p_k^1p_{k+1}^0) = \update(p_k^1, \update^*(m, p_0^0 p_0^1 \cdots p_k^0), p_{k+1}^0)$.
Sometimes we identify finite memory strategies and their representation in terms of $\nextp, \update$.

Finally, a Player 0 strategy $\sigma_0$ is \emph{memoryless} if it has finite memory and $M = \{m_0\}$.
Therefore, we can identify a memoryless strategy with a partial function $\sigma_0 : P_0 \mapsto P_1$.
}

\paragraph{Automata.}

A \emph{nondeterministic B\"uchi automaton} (NBA) is a tuple $\NBA Q = \NBAtuple$,
where $Q$ is a finite set of states,
$\Sigma$ is a finite alphabet,
$I \subseteq Q$ is the set of initial states,
$F \subseteq Q$ is the set of final states
and $\Delta \subseteq Q \times \Sigma \times Q$ is the transition relation.
We also write $q \goesto a q'$ instead of $(q, a, q') \in \Delta$,
and just $q \goesto {} q'$ when $\exists a \in \Sigma \cdot q \goesto a q'$.
For two sets of states $\set q, \set q' \subseteq Q$,
we write $\set q \goto a \set q'$ iff
$\forall q' \in \set q' \cdot \exists q \in \set q \cdot q \goesto a q'$.\footnote{
This kind of backward-compatible transition had already appeared in \cite{fwbw_simulations1995}.
}
For a state $q \in Q$,
let $[q \in F] = 1$ if $q$ is accepting,
and $0$ otherwise.
We assume that every state is reachable from some initial state,
and that the transition relation is total.

For a finite or infinite sequence of states $\rho = q_0q_1 \cdots$ and an index $i \leq \card \rho$,
let $\countf{\rho}{i}$ be the number of final states occuring in $\rho$ up to (and including) the $i$-th element.
Formally,
$	\countf{\rho}{i} = \sum_{0 \leq k < i} [q_k \in F]$, with  $\countf \rho 0 = 0$.
Let $\countflast \rho = \countf \rho {\card \rho}$.
If $\rho$ is infinite,
then $\countflast{\rho} = \omega$ iff $\rho$ contains infinitely many accepting states.

Fix a finite or infinite word $w = a_0a_1\cdots$.
A \emph{path $\pi$ over $w$} is a sequence $q_0 \goesto {a_0} q_1 \goesto {a_1} q_2 \cdots$ of length $\card w + 1$.
A path is \emph{initial} if it starts in an initial state $q_0 \in I$,
it is a \emph{run} if it is initial and infinite,
and it is \emph{fair} if $\countflast \pi = \omega$.
An \emph{accepting run} is a run which is fair.
The \emph{language} $\omegalang {\NBA Q}$ of a NBA $\NBA Q$ is the set of infinite words which admit an accepting run,
i.e., $\omegalang {\NBA Q} = \{ w \in \Sigma^\omega \st \textrm{ there exists an accepting run $\pi$ over $w$}\}$.

\ignore{
\paragraph{Alternating Automata.}

In this paper, an Alternating B\"uchi Automaton (ABA) $\ABA{Q}$ is a tuple $\alttuple$, where $Q$ is a finite set of states,
$\Sigma$ is a finite alphabet, $q_I$ is the initial state, $\{E, U\}$ is a partition of $Q$ into \emph{existential} and \emph{universal} states, $\Delta \subseteq Q \times \Sigma \times Q$ is the transition relation and $F\subseteq Q$ is the set of accepting states.
We say that a state $q$ is accepting if $q \in F$.
In the rest of the paper, we will use $n$ to denote the cardinality of $Q$.
For a set of states $B\subseteq Q$ and $X\in\{E, U\}$, we write $\restrict{B}{X}$ for the restriction of $B$ to $X$, i.e., $\restrict{B}{X} := B \cap X$.
A Nondeterministic B\"uchi Automaton (NBA) is an ABA with $U = \emptyset$, i.e., where all choice are existential.

An ABA $\ABA{Q}$ accepts a language of infinite words $\omegalang{\ABA{Q}}$.
The acceptance condition is best described in a game-theoretic way \cite{gurevich:games:82}.
Given an input word $w\in\Sigma^\omega$, the \emph{acceptance game} $\accomegagame{\ABA{Q}}{w}$ is played by two players, Pathfinder and Automaton.
Existential states are controlled by Automaton, while Pathfinder controls universal states.
Automaton wins the game $\accomegagame{\ABA{Q}}{w}$ iff she has a winning strategy s.t. for any Pathfinder counter-strategy the resulting computation visits some accepting state in $F$ infinitely often.
The language $\omegalang{\ABA{Q}}$ accepted by $\ABA{Q}$ is defined as the set of words $w\in\Sigma^\omega$ s.t. Automaton wins $\accomegagame{\ABA{Q}}{w}$.

If we view an ABA $\ABA{Q}$ as an acceptor of \emph{finite} words, then we obtain an Alternating Finite Automaton (AFA).
For $w = w_0\dots w_m\in\Sigma^*$, the finite acceptance game $\accfingame{\ABA{Q}}{w}$ is defined as above for $\accomegagame{\ABA{Q}}{w}$, except that the game stops when the last symbol $w_m$ of $w$ has been read:
The winning condition is $w_m\in F$.
The language $\finlang{\ABA{Q}}$ accepted by $\ABA{Q}$ is defined as $\omegalang{\ABA{Q}}$ above.

An Alternating Transition System (ATS) $\ABA{Q}$ is an AFA $\alttuple$ where all states are made accepting, i.e., $F := Q$.
Then, the trace language of $\ABA{Q}$ is defined as the language of $\ABA{Q}$, i.e., $\tracelang{\ABA{Q}} := \finlang{\ABA{Q}}$.
We also consider the infinite trace language $\omegatracelang{\ABA{Q}} := \omegalang{\ABA{Q}}$.

}

\paragraph{Quotients.}

Let $\NBA{Q} = \NBAtuple$ be a NBA
and let $R$ be any binary relation on $Q$.
We say that $\eqsimrel_R$ is the \emph{equivalence induced by $R$} if
$\eqsimrel_R$ is the largest equivalence contained in the transitive and reflexive closure of $R$.
I.e., $\eqsimrel_R = R^* \!\!\cap (R^*)^{-1}$.
Let the function $[\cdot]_R: Q \mapsto 2^Q$ map each element $q\in Q$ to the equivalence class $[q]_R \subseteq Q$ it belongs to, i.e.,
$[q]_R := \{ q' \in Q \st q \eqsimrel_R q' \}$.
We overload $[P]_R$ on sets $P\subseteq Q$ by taking the set of equivalence classes.
When clear from the context, we avoid noting the dependence of $\eqsimrel$ and $[\cdot]$ on $R$.

An equivalence $\eqsimrel$ on $\NBA Q$ induces the \emph{quotient automaton}
$\quot{\NBA{Q}}{\eqsimrel}\!\! =\!\! \NBAquottuple{\eqsimrel}$,
where, for any $q, q' \in Q$ and $a\in\Sigma$,
$([q], a, [q']) \in \Delta_{\eqsimrel}$ iff $(q, a, q') \in \Delta$.
This is called a \emph{na\"ive} quotient
since both initial/final states and transitions are induced representative-wise.
When we quotient w.r.t. a relation $R$ which is not itself an equivalence,
we actually mean quotenting w.r.t. the induced equivalence $\approx$.
We say that $R$ is \emph{good for quotienting} (GFQ)
if quotienting $\NBA Q$ w.r.t. $R$ preserves the language,
that is, $\omegalang{\NBA Q} = \omegalang{\quot{\NBA{Q}}{\eqsimrel}}$.

\ignore{
We state a basic property of na\"ive quotients:
For two equivalences $\eqsimrel_0 \subseteq \eqsimrel_1$,
$\quot{\NBA{Q}}{\eqsimrel_1}$ recognizes at least as much as $\quot{\NBA{Q}}{\eqsimrel_0}$.
This is immediate, since, for any $q \in Q$, $[q]_{\eqsimrel_0} \subseteq [q]_{\eqsimrel_1}$.
Therefore, an accepting run $[q_0]_{\eqsimrel_0} \goesto {a_0} [q_1]_{\eqsimrel_0} \goesto {a_1} \cdots$ in $\quot{\NBA{Q}}{\eqsimrel_0}$
directly induces an accepting run $[q_0]_{\eqsimrel_1} \goesto {a_0} [q_1]_{\eqsimrel_1} \goesto {a_1} \cdots$ in $\quot{\NBA{Q}}{\eqsimrel_1}$.
}

\begin{lemma}\label{lem:quot_contains}
	For two equivalences $\eqsimrel_0, \eqsimrel_1$,
	if $\eqsimrel_0 \subseteq \eqsimrel_1$,
	then $\omegalang{\quot{\NBA{Q}}{\eqsimrel_0}} \subseteq \omegalang{\quot{\NBA{Q}}{\eqsimrel_1}}$.
	In particular, by letting $\eqsimrel_0$ be the identity,
	$\omegalang{\NBA Q} \subseteq \omegalang{\quot{\NBA{Q}}{\eqsimrel_1}}$.
\end{lemma}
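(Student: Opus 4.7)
The plan is to exhibit a canonical map from $\eqsimrel_0$-classes to $\eqsimrel_1$-classes and show that it lifts accepting runs from the finer quotient to the coarser one. Concretely, since $\eqsimrel_0 \subseteq \eqsimrel_1$, every $\eqsimrel_0$-class is contained in a unique $\eqsimrel_1$-class, so the assignment $\varphi([q]_{\eqsimrel_0}) := [q]_{\eqsimrel_1}$ is a well-defined (representative-independent) surjection $\varphi: [Q]_{\eqsimrel_0} \to [Q]_{\eqsimrel_1}$.

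Next I would check that $\varphi$ respects the three ingredients of the NBA structure, using only the fact that both quotients are na\"ive, i.e. transitions and initial/final marks are lifted representative-wise. For transitions: if $([q]_{\eqsimrel_0}, a, [q']_{\eqsimrel_0}) \in \Delta_{\eqsimrel_0}$, then by definition there exist $r \in [q]_{\eqsimrel_0}$ and $r' \in [q']_{\eqsimrel_0}$ with $(r, a, r') \in \Delta$; since $r \in [q]_{\eqsimrel_1}$ and $r' \in [q']_{\eqsimrel_1}$, the same witness gives $(\varphi([q]_{\eqsimrel_0}), a, \varphi([q']_{\eqsimrel_0})) \in \Delta_{\eqsimrel_1}$. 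The initial and final cases are essentially identical: an initial (resp.\ final) $\eqsimrel_0$-class contains an initial (resp.\ final) representative $r$, and $r$ is a fortiori in $\varphi([q]_{\eqsimrel_0})$, which is therefore also initial (resp.\ final).

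With $\varphi$ shown to be a morphism of B\"uchi automata, any accepting run $[q_0]_{\eqsimrel_0} \goesto{a_0} [q_1]_{\eqsimrel_0} \goesto{a_1} \cdots$ in $\quot{\NBA Q}{\eqsimrel_0}$ maps pointwise to a run $\varphi([q_0]_{\eqsimrel_0}) \goesto{a_0} \varphi([q_1]_{\eqsimrel_0}) \goesto{a_1} \cdots$ in $\quot{\NBA Q}{\eqsimrel_1}$ on the same word, which is initial, and which is fair because infinitely many $[q_i]_{\eqsimrel_0}$ are final and hence so are their $\varphi$-images. This proves the language inclusion. For the ``in particular'' part, taking $\eqsimrel_0$ to be the identity makes every class a singleton, so $\quot{\NBA Q}{\eqsimrel_0}$ is isomorphic to $\NBA Q$ and the first inclusion specializes to $\omegalang{\NBA Q} \subseteq \omegalang{\quot{\NBA Q}{\eqsimrel_1}}$.

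The argument is entirely routine; there is no real obstacle beyond making sure the representative-wise definition of $\Delta_{\eqsimrel}$ is used symmetrically, so that a single transition witness in $\Delta$ transfers from the finer to the coarser quotient without any existential choice being lost.
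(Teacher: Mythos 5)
Your proof is correct and follows the same approach as the paper's (commented-out) argument: since $[q]_{\eqsimrel_0} \subseteq [q]_{\eqsimrel_1}$ for each $q$, every transition, initial class, and final class transfers directly, so an accepting run in the finer quotient induces one in the coarser quotient on the same word. You merely spell out the morphism $\varphi$ more explicitly than the paper does, but the substance is identical.
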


\ignore{
%
%
\begin{corollary}\label{cor:quot_partial}
	Let $\preceq_0, \preceq_1$ be preorders with $\preceq_0 \subseteq \preceq_1$.
	If $\preceq_0$ is GFQ, then $\preceq_1$ is GFQ.
\end{corollary}
}

\ignore{
\paragraph{Multipebble delayed simulation \cite{etessami:hierarchy02}.}
Let $k \geq 1$.
In the $k$-pebble delayed simulation game $\kdegame k q s$ 
the set of positions of Spoiler is $Q \times 2^Q$,
the set of positions of Duplicator is $Q \times 2^Q \times \Sigma \times Q$,
$\pair q {\{s\}}$ is the initial position,
and transitions are determined as follows.
Spoiler can select a move $(\pair q {\set s}, \quadruple q {\set s} a {q'}) \in \Gamma^{k\textrm{-de}}_0$ iff
$q \goesto a q'$,
and Duplicator can select a move $(\quadruple q {\set s} a {q'}, \pair {q'} {\set s'}) \in \Gamma^{k\textrm{-de}}_1$ iff
$\set s \goto a \set s'$ and $\card {\set s'} \leq k$.

Before defining the winning set $W^{k\textrm{-de}}$ we need some preparation.
Given an infinite sequence $\pi = \pair {q_0} {\set s_0} \pair {q_1} {\set s_1} \cdots$
over $w=a_0 a_1\cdots$ and a round $j \geq 0$,
we say that a state $s \in \set s_j$ \emph{has been accepting} in $\pi$ since some previous round $i \leq j$,
written $\acceptingsince{s}{\pi}{i}{j}$ iff either $s \in F$,
or $i < j$ and there exists $\hat s \in \set s_{j-1}$ s.t. $\hat s \goesto {a_{j-1}} s$
and $\acceptingsince{\hat s}{\pi}{i}{j-1}$.
We say that $\set s_j$ is \emph{good since round $i \leq j$} in $\pi$,
written $\goodsince{\set s_j}{\pi}{i}{j}$, iff at round $j$ every state $s \in \set s_j$ has been accepting since round $i$,
and $j$ is the least round for which this holds \cite{etessami:hierarchy02}.
Duplicator wins a play $\pi \in W^{k\textrm{-de}}$ if, whenever $q_i \in F$
there exists $j \geq i$ s.t. $\goodsince{\set s_j}{\pi}{i}{j}$.
%
%
We write $q \kdesim k s$ iff Duplicator wins $\kdegame k q s$.
}

\section{Quotienting with forward simulations}
\label{sec:fw}

In this section we study several generalizations of delayed simulation,
in order to investigate the border of good for quotienting (GFQ) forward-like preorders.
%
%
%
In our first attempt we introduce \emph{delayed containment},
which is obtained as a modification of the usual simulation interaction between players:
In the delayed containment game between $q$ and $s$ there are only two rounds.
Spoiler moves first and selects both an infinite word $w=a_0a_1\cdots$ and an infinite path $q_0 \goesto {a_0} q_1 \goesto {a_1} \cdots$ over $w$ starting in $q = q_0$;
then, Duplicator replies with an infinite path $s_0 \goesto {a_0} s_1 \goesto {a_1} \cdots$ over $w$ starting in $s = s_0$.
The winning condition is delayed-like:
$\forall i \cdot q_i \!\in\! F \!\!\implies\!\! \exists j \geq i \cdot s_j \!\in\! F$.
If Duplicator wins the delayed containment game between $q$ and $s$,
we write $q \decont s$.
Clearly, $\decont$ is a preorder implying language containment.
One might wonder whether delayed-containment is GFQ.
Unfortunately, this is not the case
(see Figure~\ref{fig:de-cont-not-GFQ} in the Appendix).
Therefore, $\decont$ is too coarse for quotienting,
and we shall look at finer relations.

\begin{lemma}\label{lem:decont_not_GFQ}
	$\decont$ is not a GFQ preorder.
\end{lemma}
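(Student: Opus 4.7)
Since the claim is a negative statement, I would prove it by counterexample: exhibit an NBA $\NBA{Q}$ containing two states $q, s$ such that $q \decont s$ and $s \decont q$ both hold, yet $\omegalang{\NBA{Q}} \subsetneq \omegalang{\quot{\NBA{Q}}{\approx}}$ (the reverse inclusion being automatic by Lemma~\ref{lem:quot_contains}).

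The intuition guiding the construction is that in delayed containment Spoiler commits her entire infinite path up front, so Duplicator's reply may depend on arbitrarily far-future information. Such ``global'' winning strategies are not in general compositional: two states may containment-simulate each other only via strategies that cannot be concatenated with themselves or with one another. Quotienting, by contrast, is an intrinsically local construction---once $q \approx s$ their outgoing transitions are pooled into the class $[q]$, so quotient runs are free to zig-zag between transitions originating from $q$ and from $s$. Such woven runs need not correspond to any single run of $\NBA{Q}$, and this is precisely where spurious accepting runs can sneak in.

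Concretely, I would search for a small NBA containing two ``sibling'' states $q$ and $s$ such that: (i) every path from $q$ is matched with delay by some path from $s$ and vice versa, yielding both containments; (ii) each sibling subautomaton alone contributes the same set of words to $\omegalang{\NBA{Q}}$, so neither introduces anomalies by itself; but (iii) there exists a word $w \notin \omegalang{\NBA{Q}}$ accepted in the quotient by a run that alternates between representatives of $[q] = [s]$, exploiting the fact that the two sides' accepting states are positioned at complementary rounds so that the zig-zag run collectively visits finals infinitely often while neither pure sibling run does.

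The verification then decomposes into three steps: first, exhibit explicit global Duplicator strategies witnessing $q \decont s$ and $s \decont q$, crucially using lookahead on Spoiler's announced path; second, compute the induced equivalence classes and the resulting quotient NBA; third, pick the candidate word $w$, produce an explicit accepting quotient run on it, and show via case analysis over the (finitely many) initial runs of $\NBA{Q}$ on $w$ that none visits a final state infinitely often. The main obstacle will be this last step: the example must be small enough that every putative original run on $w$ can be enumerated and ruled out, while still being rich enough that a fair zig-zag run exists in the quotient.
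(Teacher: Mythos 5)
Your overall approach---prove non-GFQ by exhibiting a counterexample, relying on Lemma~\ref{lem:quot_contains} for the automatic inclusion $\omegalang{\NBA Q} \subseteq \omegalang{\quot{\NBA Q}{\approx}}$ so only the strictness must be shown---is correct and is exactly the route the paper takes. Your intuition about why delayed containment fails (global Duplicator strategies are non-compositional, while the quotient makes a local pooling of transitions) is also sound.

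However, your ``proof'' is a plan, not a proof: you describe the properties a counterexample should have and sketch a three-step verification, but you never actually produce the automaton, the equivalence classes, the quotient, the offending word, or the case analysis ruling out original accepting runs. As written, nothing is proven; the lemma remains open until the construction is exhibited. In addition, your proposed mechanism is more elaborate than what is actually needed. You imagine two sibling states whose accepting positions sit at ``complementary rounds'' so that a zig-zag run in the quotient collects finals from both sides. The paper's counterexample is much simpler: a unary automaton $p_0 \goesto a p_1 \goesto a p_1 \goesto a p_2 \goesto a p_3 \goesto a p_3 \cdots$ with $p_0, p_2 \in F$. Here $p_0 \deconteq p_1$ (Duplicator's global lookahead lets her match the at-most-two accepting visits of any Spoiler path, in either direction), but no original run visits $F$ infinitely often, so $a^\omega \notin \omegalang{\NBA Q}$. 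Merging $p_0$ with $p_1$ produces an accepting quotient state with a self-loop, and $a^\omega$ is then accepted. The point is not that two accepting sequences interleave at complementary times; it is that an accepting state ($p_0$) gets absorbed into a non-accepting loop state ($p_1$), so the quotient can revisit ``acceptance'' forever where the original could not. Your proposal would benefit from committing to this (or any) concrete automaton and actually carrying out the three steps you outline.
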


\subsection{Fixed-word delayed simulation}

Our second attempt at generalizing delayed simulation still retains the flavour of containment.
While in containment $\decont$ Spoiler reveals both the input word $w$ and a path over $w$,
in \emph{fixed-word} simulation $\fxdesim$ Spoiler reveals $w$ only.
Then, after $w$ has been fixed,
the game proceeds like in delayed simulation,
with the proviso that transitions match symbols in $w$.%
\footnote{
The related notion of fixed-word \emph{fair} simulation clearly coincides with $\omega$-language inclusion.
}
Formally,
let $w=a_0 a_1 \cdots \in \Sigma^\omega$.
In the $w$-simulation game $\fxsimgame w q s$
the set of positions of Spoiler is $P_0 = Q \times Q \times \Nat$,
the set of positions of Duplicator is $P_1 = Q \times Q \times Q \times \Nat$
and $\triple q s 0$ is the initial position.
Transitions are determined as follows:
Spoiler can select a move of the form $(\triple q s i, \quadruple q s {q'} i) \in \Gamma^{w\textrm{-de}}_0$
if $q \goesto {a_i} q'$,
and Duplicator can select a move of the form $(\quadruple q s {q'} i, \triple {q'} {s'} {i+1}) \in \Gamma^{w\textrm{-de}}_1$
if $s \goesto {a_i} s'$.
Notice that the input symbol $a_i$ is fixed,
and it has to match the corresponding symbol in $w$.
The winning condition is 
$	W = \{ \triple {q_0} {s_0} 0 \triple {q_1} {s_1} 1 \cdots \st \forall i \cdot q_i \!\in\! F \!\!\implies\!\! \exists j \geq i \cdot s_j \!\in\! F \}$.
%
Let $q \wdesim w s$ iff Duplicator wins the $w$-simulation game $\fxsimgame w q s$,
and $q \fxdesim s$ iff $q \wdesim w s$ for all $w \in \Sigma^\omega$.
Clearly, fixed-word simulation is a preorder implying containment.
\begin{fact}
	$\fxdesim$ is a reflexive and transitive relation,
	and $\forall q,s \in Q \cdot q \fxdesim s\! \implies\! q \decont s$.
\end{fact}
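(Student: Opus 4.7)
The fact collects three standard closure/comparison properties of a game-based preorder, so the plan is to handle each by exhibiting a Duplicator strategy of the appropriate form.

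\textbf{Reflexivity.} For any $q \in Q$ and any $w \in \Sigma^\omega$, I would have Duplicator play the copycat strategy in $\fxsimgame w q q$: whenever Spoiler chooses $q_i \goesto{a_i} q_{i+1}$, Duplicator replies with the identical transition on her side. The resulting play has $s_i = q_i$ for every $i$, so the delayed acceptance condition $q_i \in F \Rightarrow \exists j \geq i \cdot s_j \in F$ is satisfied by taking $j = i$.

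\textbf{Transitivity.} Assume $q \fxdesim s$ and $s \fxdesim t$, and fix $w \in \Sigma^\omega$. Let $\sigma_1$ and $\sigma_2$ be winning Duplicator strategies in $\fxsimgame w q s$ and $\fxsimgame w s t$, respectively. I would define Duplicator's composite strategy in $\fxsimgame w q t$ by running the two games in parallel: Spoiler's $q$-side move is passed as Spoiler input to $\sigma_1$, which produces an $s$-side transition; this $s$-side transition is in turn passed as Spoiler input to $\sigma_2$, which produces the $t$-side transition Duplicator actually plays. Call the resulting paths $q_0 q_1 \cdots$, $s_0 s_1 \cdots$ and $t_0 t_1 \cdots$. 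For the winning condition, suppose $q_i \in F$. Since $\sigma_1$ wins, there is some $j \geq i$ with $s_j \in F$; since $\sigma_2$ wins against the $s$-side path (which it has seen as Spoiler's play), there is some $k \geq j$ with $t_k \in F$. Then $k \geq i$, as required.

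\textbf{Implication to $\decont$.} Suppose $q \fxdesim s$ and let Spoiler's containment move consist of a word $w = a_0 a_1 \cdots$ together with a path $\pi = q_0 \goesto{a_0} q_1 \goesto{a_1} \cdots$ with $q_0 = q$. Let $\sigma$ be a Duplicator winning strategy in $\fxsimgame w q s$. I would simulate $\sigma$ against the fixed Spoiler play $\pi$ (this is well-defined because the $w$-simulation game lets Spoiler commit to any path over $w$) and output the induced $s$-side path as Duplicator's containment reply. The delayed winning condition of $\fxsimgame w q s$ is literally the condition required for $q \decont s$.

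\textbf{Main obstacle.} None of the three parts is deep; the only point requiring care is the bookkeeping in transitivity, namely verifying that the two component strategies can be driven in lockstep (each round's outputs in one game are fed as the next round's Spoiler input in the other) and that the delayed acceptance condition composes monotonically: the witnessing indices satisfy $i \leq j \leq k$, so Duplicator's $t$-side always catches up after Spoiler's $q$-side visits $F$.
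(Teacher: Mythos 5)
The paper states this as a \emph{Fact} without giving a proof, treating all three claims as routine consequences of the game definitions. Your argument is correct and is the standard one: copycat for reflexivity, a lockstep relay of the two winning strategies (with the monotone chain of witness indices $i \le j \le k$) for transitivity, and, for the implication into $\decont$, specializing the $w$-simulation winning strategy to the single path Spoiler has pre-declared. There is nothing to compare against in the paper, and your write-up supplies exactly the details the paper omits.
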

\ignore{
\footnote{
Alternatively, one could define an explicit game for $\fxdesim$,
where Spoiler selects $w$ first,
and then the game evolves like in $\fxsimgame w q s$.
}
}

Unlike delayed containment,
fixed-word delayed simulation is GFQ.
Moreover, fixed-word delayed simulation quotients can be more succint than (multipebble) delayed simulation quotients
by an arbitrarily large factor.
See Figure~\ref{fig:fx-de-quot-succinct} in the Appendix.
\begin{theorem}\label{thm:fx_de_sim-GFQ}
	$\fxdesim$ is good for quotienting.
\end{theorem}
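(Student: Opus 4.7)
I aim to establish $\omegalang{\NBA Q} = \omegalang{\quot{\NBA Q}{\eqsimrel}}$. The inclusion $\omegalang{\NBA Q} \subseteq \omegalang{\quot{\NBA Q}{\eqsimrel}}$ is immediate from Lemma~\ref{lem:quot_contains}. For the converse, I start from an accepting quotient run $[q_0] \goesto{a_0} [q_1] \goesto{a_1} \cdots$ over $w = a_0 a_1 \cdots$ and build a real accepting run $\tau_0 \goesto{a_0} \tau_1 \goesto{a_1} \cdots$ of $\NBA Q$ by stitching together Duplicator's winning strategies in a sequence of fixed-word simulation games.

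The construction is initialised at any $\tau_0 \in [q_0] \cap I$ and maintains the invariant $q_i \fxdesim \tau_i$. The basic step extending $\tau_i$ to $\tau_{i+1}$ proceeds as follows: by the quotient transition $[q_i] \goesto{a_i} [q_{i+1}]$, pick representatives $u_i \in [q_i]$ and $v_{i+1} \in [q_{i+1}]$ with $u_i \goesto{a_i} v_{i+1}$; from $u_i \eqsimrel q_i \fxdesim \tau_i$ I obtain $u_i \fxdesim \tau_i$, and one step of Duplicator's winning strategy in the $a_i a_{i+1}\cdots$-simulation game from $(u_i, \tau_i)$ answers Spoiler's move $u_i \goesto{a_i} v_{i+1}$ with a real transition $\tau_i \goesto{a_i} \tau_{i+1}$ in $\NBA Q$ satisfying $v_{i+1} \fxdesim \tau_{i+1}$, whence $q_{i+1} \fxdesim \tau_{i+1}$ by transitivity.

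To force $\tau$ to visit $F$ infinitely often, I partition the rounds into phases triggered by indices $j$ with $[q_j] \in [F]$ (an infinite set by assumption). At the start of a phase $j$, I choose some $f_j \in [q_j] \cap F$; since $f_j \eqsimrel q_j \fxdesim \tau_j$, Duplicator has a winning strategy $\psi^{(j)}$ in the $a_j a_{j+1}\cdots$-simulation game from $(f_j, \tau_j)$. I then drive $\psi^{(j)}$ round-by-round using a Spoiler path $S_j = f_j, S_{j+1}, \ldots$ obtained by iterating the basic step so that $q_k \fxdesim S_k$ for all $k \geq j$; the corresponding Duplicator responses provide the next segment of $\tau$. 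Because $S_j \in F$, the delayed winning condition of $\psi^{(j)}$ guarantees $\tau_k \in F$ for some $k \geq j$; I close the phase and launch the next one at the smallest index $j' > k$ with $[q_{j'}] \in [F]$. Iterating yields an unbounded sequence of $F$-visits in $\tau$, so $\tau$ is accepting.

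The main technical obstacle is keeping the invariant $q_i \fxdesim \tau_i$ alive across phase boundaries: the relation $f_{j'} \fxdesim \tau_{j'}$ required to launch a new game must be inherited from the invariant at the end of the previous phase. The key ingredient is that a winning move by Duplicator leaves her in a position from which she still wins a \emph{fresh} simulation game (the pending $F$-obligation only strengthens her constraint and so dropping it can only make things easier), so $v \fxdesim \tau'$ holds after every one-step response. A secondary check, also routine, is that the auxiliary Spoiler path $S$ is a genuine run of $\NBA Q$ and not merely a formal sequence, which holds because each $S_k$ is itself produced as a real Duplicator transition in an auxiliary fixed-word simulation game.
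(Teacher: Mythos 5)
Your approach is a direct, game-theoretic proof: simulate the accepting quotient run inside $\NBA Q$ by stitching together copies of Duplicator's winning fixed-word strategies, restarting from a final representative $f_j$ at each accepting quotient round to force infinitely many $F$-visits. The paper's published proof is quite different: it observes that $\fxdesim$ is (the transpose of) the transformer $\taufxde$ applied to the identity relation, invokes Lemma~\ref{lem:taufxde_GFQ} to get that this relation is jumping-safe, and then applies the general correctness theorem (Theorem~\ref{thm:Upsilon_is_GFQ}). Your direct argument is nevertheless a legitimate alternative; it buys self-containedness at the cost of an explicit phase bookkeeping, whereas the paper's route buys reuse of the coherent-sequences machinery.

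There is, however, one genuine error in your proof as stated: the invariant $q_i \fxdesim \tau_i$ cannot be maintained and is in fact false in general after the first round. The relation $\fxdesim$ quantifies universally over \emph{all} infinite words, but after one round of the $w_i$-game Duplicator's chosen successor $\tau_{i+1}$ is tailored specifically to the suffix $w_{i+1} = a_{i+1}a_{i+2}\cdots$; nothing forces Duplicator to win from $(v_{i+1}, \tau_{i+1})$ on an unrelated word $w'$. Your justification --- that a winning move leaves Duplicator winning a ``fresh'' game because dropping the pending $F$-obligation only makes things easier --- is correct about the delayed obligation, but it is a non-sequitur for the word-quantification issue: it does not show $v_{i+1} \fxdesim \tau_{i+1}$. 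The correct invariant is the word-specific one, $q_i \wdesim{w_i} \tau_i$ with $w_i = a_i a_{i+1}\cdots$. Once you substitute this, your argument goes through unchanged: you only ever launch games over the current suffix, and the phase launch uses $f_j \approx q_j \Rightarrow f_j \fxdesim q_j \Rightarrow f_j \wdesim{w_j} q_j \wdesim{w_j} \tau_j$ (the universal $\fxdesim$ on the class representatives is what lets you specialize to the needed suffix). In short, the proof structure is sound, but the invariant is over-claimed and the reasoning offered in its support addresses the wrong obstacle; repair it by replacing $\fxdesim$ with $\wdesim{w_i}$ throughout the construction.
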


\ignore{ 
Before proving the theorem, we first need the following lemma.%
\footnote{Lemma~\ref{lem:fx_de_sim-GFQ} above is a suitable adaption of Lemma~13 in \cite{etessami:etal:fairsimulations:05} in our context of fixed-word delayed simulation.}
\begin{lemma}\label{lem:fx_de_sim-GFQ}
	Let $\NBA{Q}$ be a BA and let $w=a_0a_1\cdots \in \Sigma^\omega$ be any $\omega$-word.
	
	\begin{enumerate}
	\item If $q \wdesim w s$, then,
	for any $q'$ s.t. $q \goesto {a_0} q'$,
	there exists $s'$ s.t. $s \goesto {a_0} s'$ and
	$q' \wdesim {w'} s'$,
	where $w'=a_1a_2\cdots\in\Sigma^\omega$.
	
	\item If $q_0 \wdesim w s_0$ and 
	$\pi_0 = [q_0] \goesto {a_0} [q_1] \goesto {a_1} [q_2] \cdots$
	is a finite or infinite run of $\quot {\NBA Q} \defxsimeq$ over $w$,
	then there exists an run $\pi_1 = s_0 \goesto {a_0} s_1 \goesto {a_1} s_2 \cdots$
	of $\NBA Q$ over $w$ of the same length s.t., for any $i$,
	$q_i \wdesim {w_i} s_i$,
	where $w_i = a_ia_{i+1}\cdots \in \Sigma^\omega$.
	
	\item If $q_0 \wdesim w r_0$, with $q_0\in F$,
	and $\pi_0 = [q_0] \goesto {a_0} [q_1] \goesto {a_1} [q_2] \cdots$
	is an infinite run of $\quot {\NBA Q} \defxsimeq$ over $w$,
	then there exists an finite run $\pi_2 = r_0 \goesto {a_0} r_1 \goesto {a_1} \cdots \goesto {a_{k-1}} r_k$
	of $\NBA Q$ over $w$ with $r_k \in F$,
	s.t., for any $i$,
	$q_i \wdesim {w_i} s_i$,
	where $w_i = a_ia_{i+1}\cdots \in \Sigma^\omega$.
	\end{enumerate}
\end{lemma}
\begin{proof}
	Let $w \in \Sigma^\omega$.
	Point 1) follows directly from the definition of $\wdesim w$,
	by noticing that winning strategies for Duplicator are (necessarily) $\wdesim{}$-preserving.
	
	Point 2) follows from Point 1) and induction.
	Specifically, we build $\pi_1 = s_0 \goesto {a_0} s_1 \goesto {a_1} s_2 \cdots$
	as the limit of a sequence of longer and longer finite paths $\pi^0, \pi^1, \pi^2, \dots$,
	where the next path $\pi^{i+1} = s_0 \goesto {a_0} \cdots \goesto {a_i} s_{i+1}$
	(strictly) extends the previous one $\pi^i = s_0 \goesto {a_0} \cdots \goesto {a_{i-1}} s_i$,
	while mantaining the following invariant:
	For every $i$, $q_i \wdesim {w_i} s_i$,
	where $w_i = a_ia_{i+1}\cdots$.
	
	For $i = 0$, we take $\pi^0 = s_0$, and $q_0 \wdesim {w_0} s_0$ follows by assumption.
	For $i>0$, assume $\pi^i = s_0 \goesto {a_0} \cdots \goesto {a_{i-1}} s_i$ has been already built.
	By inductive hypothesis,
	it holds that a) $q_i \wdesim {w_i} s_i$.
	Notice that $[q_i] \goesto {a_i} [q_{i+1}]$ implies that there exist $\hat q \in [q_i]$ and $\hat q' \in [q_{i+1}]$
	s.t. $\hat q \goesto {a_i} \hat q'$.
	This entails that $\hat q \defxsimeq q_i$ and $\hat q' \defxsimeq q_{i+1}$,
	and, in particular, that b) $\hat q \wdesim {w_i} q_i$ and c) $q_{i+1} \wdesim {w_{i+1}} \hat q'$.
	From a), b) and transitivity, we have $\hat q \wdesim {w_i} s_i$.
	Thus, from Point 1), there exists $s_{i+1}$ s.t.
	$s_i \goesto {a_i} s_{i+1}$ and $\hat q' \wdesim {w_{i+1}} s_{i+1}$.
	Finally, from c) and transitivity, we get $q_{i+1} \wdesim {w_{i+1}} s_{i+1}$,
	which proves the invariant.
	
	For Point 3), take $s_0 := q_0$ and, by Point 2),
	we know there exists an infinite run $\pi_1 = s_0 \goesto {a_0} s_1 \goesto {a_1} s_2 \cdots$
	starting at a final state $s_0 \in F$.
	By assumption, we have $q_0 = s_0 \wdesim w r_0$.
	Therefore, Duplicator has a winning strategy $f$ in the $w$-delayed simulation game between $s_0$ and $r_0$.
	Let $\pi' = r_0 \goesto {a_0} r_1 \goesto {a_1} r_2 \cdots$ be the infinite path obtained by playing according to $f$.
	Then, by the winning condition of delayed simulation,
	there exists $k$ s.t. $r_k \in F$.
	Truncating $\pi'$ to length $k$ gives the required $\pi_2$.
\end{proof}
\begin{proof}[of Theorem~\ref{thm:fx_de_sim-GFQ}]
	
	The direction $\omegalang{\NBA Q} \subseteq \omegalang{\quot{\NBA{Q}}{\defxsimeq}}$ clearly holds,
	since an accepting run $\pi = q_0 \goesto {a_0} q_1 \goesto {a_1} q_2 \cdots$ in $Q$
	gives directly an accepting run $\pi' = [q_0] \goesto {a_0} [q_1] \goesto {a_1} [q_2] \cdots$ in $\quot{\NBA{Q}}{\defxsimeq}$.
	(Notice that this follows directly from the definition of quotient automaton,
	and it holds for any equivalence on states, not just $\defxsimeq$).
	
	The other direction $\omegalang{\quot{\NBA{Q}}{\defxsimeq}} \subseteq \omegalang{Q}$ requires more care.
	Let $w = a_0a_1\cdots \in \Sigma^\omega$,
	and let $\pi = [q_0] \goesto {a_0} [q_1] \goesto {a_1} [q_2] \cdots$ be an accepting run over $w$ in $\quot{\NBA{Q}}{\defxsimeq}$.
	Since $\pi$ is accepting, we have that, for infinitely many $i$'s,
	$[q_i] \in [F]$,
	i.e., $[q_i] \cap F \neq \emptyset$.
	When this is the case, we can assume w.l.o.g. that $q_i \in F$.
	(Otherwise, just change representative and pick a $q_i' \in [q_i] \cap F$,
	where $[q_i']$ is clearly the same as $[q_i]$.)
	Similarly, w.l.o.g. we assume that $q_0 = q_I$ is the initial state of $Q$.
	
	We build an infinite accepting run $\pi' = s_0 \goesto {a_0} s_1 \goesto {a_1} s_2 \cdots$ over $w$
	as the limit of a sequence finite runs $\pi^0, \pi^1, \dots$,
	where $\pi^h$ is a (strict) prefix of $\pi^{h+1}$,
	s.t., for every $h$, $\pi^h$ contains at least $h$ accepting states,
	and $q_i \wdesim {w_i} s_i$,
	where $i+1$ is the length of $\pi^h$
	and $w_i = a_ia_{i+1}\cdots$.
	We define $\pi^h$ inductively.
	
	Initially, for $h=0$, we set $s_0 := q_0$, so that $\pi^0 = q_0$.
	For $h > 0$, inductively assume that $\pi^h = s_0 \goesto {a_0} \cdots \goesto {a_{i-1}} s_i$ has been already built,
	and, by the invariant, we have $q_i \wdesim {w_i} s_i$.
	Since $\pi$ is fair, i.e., it visits infinitely many accepting states,
	there exists $j > i$ s.t. $q_j \in F$,
	By Point 2) of Lemma~\ref{lem:fx_de_sim-GFQ},
	there exists a finite run $\hat\pi = s_i \goesto {a_i} \cdots \goesto {a_{j-1}} s_j$
	s.t. $q_j \wdesim {w_j} s_j$.
	Since $q_j \in F$, by Point 3) of the same lemma,
	there exists $k \geq j$ s.t. $\hat\pi' = s_j \goesto {a_j} \cdots \goesto {a_{k-1}} s_k$
	s.t. $s_k \in F$ and $q_k \wdesim {w_k} s_k$.
	Take $\pi^{h+1}$ to be the concatenation of $\pi^h$, $\hat\pi$ and $\hat\pi'$,
	i.e., $\pi^{h+1} = s_0 \goesto {a_0} \cdots \goesto {a_{k-1}} s_k$.
	By inductive hypothesis, $\pi^h$ visits at least $h$ acceptings states,
	therefore $\pi^{h+1}$ visits at least $h+1$ accepting states.
	
\end{proof}

}

\paragraph{Complexity of delayed fixed word simulation.}

Let $q, s$ be two states in $\NBA Q$.
We reduce the problem of checking $q \fxdesim s$
to the universality problem of a suitable alternating B\"uchi product automaton (ABA) $\ABA A$.
We design $\ABA A$ to accept exactly those words $w$
s.t. Duplicator wins $\fxsimgame w q s$.
Then, by the definition of $\fxdesim$,
it is enough to check whether $\ABA A$ has universal language.
See \cite{vardi:alternating} (or Appendix~\ref{app:ABA}) for background on ABAs.


The idea is to enrich configurations in the fixed-word simulation game
by adding an obligation bit recording whether Duplicator has any pending constraint to visit an accepting state.
Initially the bit is 0,
and it is set to 1 whenever Spoiler is accepting;
a reset to 0 can occur afterwards,
if and when Duplicator visits an accepting state.

Let $\NBA Q = \NBAtuple$ be a NBA.
We define a product ABA
$\ABA A = (A, \Sigma, \delta, \alpha)$ as follows:
The set of states is $A = Q \times Q \times \{0,1\}$,
final states are of the form $\alpha = Q \times Q \times \{0\}$
and, for any $\langle q, s, b \rangle \in A$ and $a \in \Sigma$,
\vspace{-3.6ex}
%
\[
\vspace{-1.6ex}
\delta(\langle q, s, b \rangle, a) = \bigwedge_{q \goesto a q'} \bigvee_{s \goesto a s'} \langle q', s', b' \rangle, \quad \textrm{ where } 
b' = 	\left\{\begin{array}{ll}
		0 & \textrm{ if } s \in F \\
		1 & \textrm{ if } q \in F \wedge s \not\in F \\
		b & \textrm{ otherwise }
		\end{array}\right.
\]
%
It follows directly from the definitions that $q \fxdesim s$ iff $\omegalang{\langle q, s, 0 \rangle} = \Sigma^\omega$.
\ignore{
(Duplicator corresponds to Automaton, while Spoiler corresponds to Pathfinder.)%
\footnote{
Computing $\desim$ can be reduced to universality of \emph{unary} ABA's in exactly the same way (which is the same as emptiness in this case).
Given an alphabet $\Sigma$ and a fixed symbol $\hat a \in \Sigma$,
we define an ABA $\ABA A$ as above,
but with alphabet $\hat\Sigma = \{ \hat a \}$ and transition relation
\begin{align*}
\delta(\langle q, s, b \rangle, \hat a) = \bigwedge_{a \in \Sigma} \bigwedge_{q \goesto a q'} \bigvee_{s \goesto a s'} \langle q', s', b' \rangle . 
\end{align*}
Then, $q \desim s$ iff $\langle q, s, 0 \rangle$ is universal (over $\hat\Sigma$).
}
}
A reduction in the other direction is immediate already for NBAs:
In fact, an NBA $\NBA Q$ is universal iff $\NBA U \fxdesim \NBA Q$,
where $\NBA U$ is the trivial, universal one-state automaton with an accepting $\Sigma$-loop.
It is well-known that universality is PSPACE-complete for ABAs/NBAs \cite{kupfermanvardi:fair_verification}.

\begin{theorem}
	Computing fixed-word delayed simulation is PSPACE-complete.
\end{theorem}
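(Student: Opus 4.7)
The plan is to prove PSPACE membership via the polynomial-time reduction to ABA universality sketched above, and PSPACE-hardness via a polynomial-time reduction from NBA universality, invoking Kupferman--Vardi's PSPACE-completeness of B\"uchi universality \cite{kupfermanvardi:fair_verification} in both directions.

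For the upper bound, the task is to verify the claim $q \fxdesim s$ iff $\omegalang{\langle q, s, 0 \rangle} = \Sigma^\omega$. Fix $w = a_0 a_1 \cdots \in \Sigma^\omega$ and observe that the ABA acceptance game on $w$ starting from $\langle q, s, 0 \rangle$ is essentially isomorphic to $\fxsimgame{w}{q}{s}$: Pathfinder picks a conjunct $q \goesto{a_i} q'$ (playing Spoiler), and Automaton picks the matching disjunct $s \goesto{a_i} s'$ (playing Duplicator). The bit $b$ acts as an obligation counter recording whether Duplicator owes an acceptance visit in response to a past Spoiler acceptance: it is reset to $0$ whenever $s \in F$, set to $1$ when $q \in F$ without a simultaneous Duplicator visit, and otherwise retained. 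Thus visiting $\alpha = Q \times Q \times \{0\}$ infinitely often along a run is equivalent to the obligation being discharged infinitely often, which in turn is equivalent to the delayed condition $\forall i \cdot q_i \in F \implies \exists j \geq i \cdot s_j \in F$ on the corresponding fixed-word play. Winning strategies translate back and forth for each fixed $w$, yielding the equivalence. Since $\ABA A$ has $2\card Q^2$ states and is constructed in polynomial time, the PSPACE bound for ABA universality transfers to $\fxdesim$.

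For the lower bound, given an arbitrary NBA $\NBA Q$ (WLOG with a single initial state $q_I$, otherwise add a fresh initial state with transitions mimicking the originals), consider the disjoint union of $\NBA Q$ with a one-state NBA $\NBA U$ whose unique state $u$ is both initial and accepting and carries an $a$-self-loop for every $a \in \Sigma$. In $\fxsimgame{w}{u}{q_I}$, Spoiler has no real choice: the only available path from $u$ is the all-$u$ path, every state of which is accepting. Therefore Duplicator wins iff she can build an infinite run of $\NBA Q$ on $w$ from $q_I$ visiting $F$ infinitely often, i.e., iff $w \in \omegalang{\NBA Q}$. Hence $u \fxdesim q_I$ iff $\omegalang{\NBA Q} = \Sigma^\omega$, producing a polynomial-time reduction from NBA universality to computing $\fxdesim$, which is therefore PSPACE-hard.

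The main subtlety lies in the correctness of the obligation-bit encoding in the upper bound: the update rule must handle the edge case where $q$ and $s$ are simultaneously accepting (resolved by prioritizing the reset branch), and one must argue that discharging the bit infinitely often along every run of $\ABA A$ on $w$ faithfully encodes the delayed winning condition in $\fxsimgame{w}{q}{s}$ (the forward direction uses that Spoiler's every accepting visit eventually flips the bit to $1$ unless already cleared; the backward direction uses that a persistent $1$ after some round witnesses an unmatched Spoiler acceptance). Once this correspondence is established, everything else reduces to routine translation between game plays and alternating runs.
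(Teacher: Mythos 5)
Your proposal matches the paper's argument exactly: PSPACE membership via reduction of $q \fxdesim s$ to universality of the obligation-bit product ABA, and PSPACE-hardness via the observation that $\NBA Q$ is universal iff the one-state universal automaton $\NBA U$ is fixed-word delayed simulated by $\NBA Q$. The paper leaves the game correspondence and bit-update semantics implicit; you have filled in those details correctly.
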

	
\subsection{Multipebble fixed-word delayed simulation}

Having established that fixed-word simulation is GFQ,
the next question is whether we can find other natural GFQ preorders between fixed-word and delayed containment.
A natural attempt is to add a multipebble facility on top of $\fxdesim$.
Intuitively, when Duplicator uses multiple pebbles she can ``hedge her bets''
by moving pebbles to several successors.
This allows Duplicator to delay committing to any particular choice by arbitrarily many steps:
In particular, she can always gain knowledge on any \emph{finite} number of moves by Spoiler.
Perhaps surprisingly, we show that \emph{Duplicator does not gain more power by using pebbles}.
This is stated in Theorem~\ref{thm:fx_sim_equiv},
and it is the major technical result of this section.
It follows that, once Duplicator knows the input word in advance,
there is no difference between knowing only the next step by Spoiler,
or the next $l$ steps, for any finite $l > 1$.
Yet, if we allow $l = \omega$ lookahead, then we recover delayed containment $\decont$,
which is not GFQ by Lemma~\ref{lem:decont_not_GFQ}.
Therefore, w.r.t. to the degree of lookahead,
$\fxdesim$ is the coarsest GFQ relation included in $\decont$.


%
We now define the multipebble fixed-word delayed simulation.
Let $k \geq 1$ and $w = a_0 a_1 \cdots \in \Sigma^\omega$.
In the $k$-multipebble $w$-delayed simulation game $\kfxsimgame k w q s$
the set of positions of Spoiler is $Q \times 2^Q \times \Nat$,
the set of positions of Duplicator is $Q \times 2^Q \times Q \times \Nat$,
the initial position is $\triple q {\{s\}} 0$,
and transitions are: $(\triple q {\set s} i, \quadruple q {\set s} {q'} i) \in \Gamma_0$ iff 
$q \goesto {a_i} q'$,
and $(\quadruple q {\set s} {q'} i, \triple {q'} {\set s'} {i+1}) \in \Gamma_1$ iff 
$\set s \goto {a_i} \set s'$ and $\card {\set s'} \leq k$.

Before defining the winning set we need some preparation.
Given an infinite sequence $\pi = \triple {q_0} {\set s_0} 0 \triple {q_1} {\set s_1} 1 \cdots$
over $w=a_0 a_1\cdots$ and a round $j \geq 0$,
we say that a state $s \in \set s_j$ \emph{has been accepting} since some previous round $i \leq j$,
written $\acceptingsince{s}{\pi}{i}{j}$, iff either $s \in F$,
or $i < j$ and there exists $\hat s \in \set s_{j-1}$ s.t. $\hat s \goesto {a_{j-1}} s$
and $\acceptingsince{\hat s}{\pi}{i}{j-1}$.
We say that $\set s_j$ is \emph{good since round $i \leq j$},
written $\goodsince{\set s_j}{\pi}{i}{j}$,
iff at round $j$ every state $s \in \set s_j$ has been accepting since round $i$,
and $j$ is the least round for which this holds \cite{etessami:hierarchy02}.
Duplicator wins a play if, whenever $q_i \in F$
there exists $j \geq i$ s.t. $\goodsince{\set s_j}{\pi}{i}{j}$.
%
We write $q \wkdesim w k s$ iff Duplicator wins $\kfxsimgame k w q s$,
and we write $q \fxkdesim k s$ iff $\forall w \in \Sigma^\omega \cdot q \wkdesim w k s$.

Clearly, pebble simulations induce a non-decreasing hierarcy:
$\fxkdesim 1\ \subseteq\ \fxkdesim 2\ \subseteq \cdots$.
We establish that the hierarchy actually collapses to the $k=1$ level.
This result is non-trivial, since the delayed winning condition requires reasoning not only about the \emph{possibility} of Duplicator to visit accepting states in the future,
but also about exactly \emph{when} such a visit occurs.
Technically, our argument uses a ranking argument similar to \cite{KV01:weak} (see Appendix~\ref{app:fx_sim_equiv}),
with the notable difference that our ranks are \emph{ordinals} ($\leq\omega^2$), instead of natural numbers.
We need ordinals to represent how long a player can delay visiting accepting states,
and how this events nest with each other.
Finally, notice that the result above implies that the multipebble delayed simulation hierarchy of \cite{etessami:hierarchy02} is entirely contained in $\fxdesim$, and the containment is strict (Fig.~\ref{fig:fx-de-quot-succinct} in the appendix).
\begin{theorem}\label{thm:fx_sim_equiv}
	For any NBA $\NBA Q$, $k \geq 1$ and states $q,s \in Q$,
	$q \fxkdesim k s$ iff $q \fxdesim s$.
\end{theorem}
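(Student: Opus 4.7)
The plan is as follows. The easy inclusion $\fxdesim\ \subseteq\ \fxkdesim{k}$ is immediate: any 1-pebble winning strategy for Duplicator can be replayed verbatim in the $k$-pebble game without ever branching. For the converse, I fix $w = a_0a_1\cdots$, suppose Duplicator wins $\kfxsimgame{k}{w}{q}{s}$ via a winning strategy $\sigma$, and construct a 1-pebble winning strategy $\tau$ in $\fxsimgame{w}{q}{s}$.

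The key device is an ordinal-valued rank $\rho$ on Duplicator-winning configurations $c = \triple{q'}{\set{s}'}{i}$ of the $k$-pebble game, bounded above by $\omega^2$. Writing a rank as $\omega\cdot m + n$ with $m,n < \omega$, the component $n$ measures how many more $\sigma$-rounds remain (against a worst-case Spoiler continuation) before $\set{s}'$ becomes good-since the currently oldest undischarged Spoiler-accepting round, while $m$ accounts for obligations that may still be nested on top of it. I would obtain $\rho$ by a least-fixed-point iteration: $\rho(c) = 0$ when no obligation is outstanding, and $\rho(c) \leq \alpha$ when every Spoiler move from $c$ forces $\sigma$ into a configuration whose rank is strictly below $\alpha$. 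The iteration converges below $\omega^2$ because $\sigma$ is winning: the finite component $n$ strictly decreases while the oldest obligation is being worked off, and may only be reset when Spoiler opens a fresh nested obligation, which can happen only finitely often before the current one is discharged. Ordinal ranks --- rather than the natural-number ranks of \cite{KV01:weak} --- are needed precisely because Spoiler may stack arbitrarily many nested obligations before the innermost is discharged.

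With $\rho$ in hand, Duplicator's 1-pebble strategy $\tau$ simulates $\sigma$ in parallel. Alongside her single pebble $s_i$, $\tau$ maintains the set $\set{s}_i$ that $\sigma$ would play against Spoiler's actual moves. After Spoiler plays $q_i \goesto{a_i} q_{i+1}$, $\tau$ queries $\sigma$ for $\set{s}_{i+1}$ and selects an $a_i$-successor $s_{i+1} \in \set{s}_{i+1}$ of $s_i$ whose singleton configuration $\triple{q_{i+1}}{\{s_{i+1}\}}{i+1}$ is itself Duplicator-winning in the $k$-pebble game, choosing one that minimizes $\rho$. The heart of the argument is that such a minimizer always exists, and that along any $\tau$-play the induced rank is a well-founded progress measure strictly decreasing except on rounds where Spoiler opens a new obligation; this suffices to discharge each obligation in finitely many steps and thereby meet the 1-pebble delayed winning condition. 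The main technical obstacle is the singleton-collapse lemma --- showing that from any winning $k$-pebble configuration one can always commit to a single pebble whose singleton configuration remains winning --- which is precisely where the ordinal rank is critically used to coordinate the choice across rounds and to absorb the nested-obligation structure encoded in the $\omega^2$ bound.
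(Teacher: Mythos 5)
Your easy direction is fine, and you correctly sense that the hard direction must be an ordinal-ranking argument with bound $\omega^2$; that matches the paper. But the core mechanism you propose has a gap that I don't think can be patched in place. The parallel-simulation device asks $\tau$ to maintain a single pebble $s_i$ alongside the multipebble set $\set{s}_i$ played by $\sigma$, and at each round to choose $s_{i+1}\in\set{s}_{i+1}$ with $s_i\goesto{a_i}s_{i+1}$. This step is not available in general: the multipebble move $\set s_i\goto{a_i}\set s_{i+1}$ guarantees only that every pebble in $\set s_{i+1}$ has \emph{some} predecessor in $\set s_i$, not that the particular state $s_i$ you committed to has any $a_i$-successor landing inside $\set s_{i+1}$. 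So the single pebble can become ``stranded'' relative to $\sigma$'s cloud, and the claimed invariant breaks at the first round where that happens.

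Beyond that, your ``singleton-collapse lemma'' is essentially the theorem restated (commit to one pebble whose singleton configuration remains $k$-pebble winning, round after round), and you flag it as the main obstacle without supplying an argument; the progress measure $\rho$ you describe is a standard fixpoint rank for showing a fixed strategy is winning \emph{within one game}, and does not by itself yield the reduction from $k$ pebbles to $1$. The paper resolves this differently: the ordinal ranks are assigned not to game configurations but to vertices of the run DAG $G^w_q$ obtained by unravelling the automaton from $q$ over $w$, by transfinitely pruning dead ends (successor stages) and inert vertices (limit stages). This rank is intrinsic to the automaton and the word, independent of any strategy, and it yields a two-sided characterization: $\myrank q {q,0}\le\myrank s {s,0}$ implies Duplicator wins the 1-pebble game (Lemma~\ref{lem:ordinal_implies_fx_sim}), while winning even the $|Q|$-pebble game forces $\myrank q {q,0}\le\myrank s {s,0}$ (Lemma~\ref{lem:nfx_sim_implies_ordinal}, proved contrapositively by having Spoiler play down her own high-rank DAG to stack obligations faster than any bounded pebble cloud can discharge them). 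That strategy-independent, DAG-level rank is the glue you are missing between the $k$-pebble and $1$-pebble games, and it is what makes the ``commit to one pebble'' step provable rather than assumed.
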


\section{Jumping-safe relations}
\label{sec:correctness}

In this section we present the general technique which is used throughout the paper to establish that preorders are GFQ.
We introduce \emph{jumping-safe relations}, which are shown to be GFQ (Theorem~\ref{thm:Upsilon_is_GFQ}).
In Section~\ref{sec:transformers} we use jumping-safety as an invariant when applying refinement transformers.
We start off with an analysis of acceping runs.

\paragraph{Coherent sequences of paths.}
Fix an infinite word $w \in \Sigma^\omega$.
Let $\Pi := \pi_0, \pi_1, \dots$ be an infinite sequence of longer and longer \emph{finite} initial paths in $\NBA Q$ over (prefixes of) $w$.
We are interested in finding a sufficient condition for the existence of an accepting run over $w$.
A necessary condition is that the number of final states in $\pi_i$ grows unboundedly as $i$ goes to $\omega$.
In the case of deterministic automata this condition is also sufficient:
Indeed, in a deterministic automaton there exists a unique run over $w$,
which is accepting exactly when the number of accepting stated visited by its prefixes goes to infinity.
In this case, we say that the $\pi_i$'s are \emph{strongly coherent} since they next path extends the previous one.
\begin{wrapfigure}{r}{0.3\textwidth}
	\centering
	\VCDraw{
		\begin{VCPicture}{(0,-0.1)(3,1)}


		\State[q]{(0,0)}{Q} \Initial[w]{Q}
		\FinalState[s]{(3,0)}{S}

		\LoopN[0.6]{Q}{a,b}
		\EdgeL{Q}{S}{a}
		\LoopN[0.6]{S}{a}

		\end{VCPicture}
	}
	\caption{Automaton $\NBA Q$.}
	\label{fig:unbounded_final_not_sufficient}
	\vspace{-2.1ex}
\end{wrapfigure}
Unfortunately, in the general case of nondeterministic automata
it is quite possible to have paths that visit arbitrarily many final states
but no accepting run exists.
This occurs because final states can appear arbitrarily late.
Indeed, consider Figure~\ref{fig:unbounded_final_not_sufficient}.
Take $w = a b a^2 b a^3 b \cdots$:
For every prefx $w_i = a b a^2 b \cdots a^i$
there exists a path $\pi_i = q q \cdots q \cdot s^i$ over $w_i$ visiting a final state $i$ times.
Still, $w \not\in \omegalang {\NBA Q}$.

Therefore, we forbid accepting states to ``clump away'' in the tail of the path.
We ensure this by imposing the existence of an infinite sequence of indices $j_0, j_1, \cdots$
s.t., for all $i$, and for all $k_i$ big enough,
the number of final states in $\pi_{k_i}$ up to the $j_i$-th state is at least $i$.
In this way, we are guaranteed that at least $i$ final states are present within $j_i$ steps in all but finitely many paths.

\begin{definition}\label{def:coherent_sequences}
Let $\Pi := \pi_0, \pi_1, \dots$ be an infinite sequence of finite paths.
We say that $\Pi$ is a \emph{coherent sequence of paths} if the following property holds:
\begin{align}\label{eq:inf_seq_prop}
	\forall i \cdot \exists j \cdot \exists h \cdot \forall k \geq h \cdot j < |\pi_k| \wedge \countf{\pi_k}{j} \geq i \ .
\end{align}
\end{definition}
%
%
%
\begin{lemma}\label{lem:increasing_seq_invariant_under_subseq}
	If $\Pi$ is coherent, then any infinite subsequence $\Pi'$ thereof is coherent.
\end{lemma}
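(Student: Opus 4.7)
The plan is to unfold the definition of coherence directly and show that every existential witness given by coherence of $\Pi$ can be reused for the subsequence $\Pi'$, with only the threshold index shifted.

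First I would fix an enumeration $\Pi' = \pi'_0, \pi'_1, \dots$ of the subsequence, given by a strictly increasing map $n \mapsto k_n$ with $\pi'_n = \pi_{k_n}$. To verify that $\Pi'$ satisfies property~(\ref{eq:inf_seq_prop}), fix an arbitrary $i$. By coherence of $\Pi$ applied to the same $i$, there exist $j$ and $h$ such that for every $k \geq h$, both $j < |\pi_k|$ and $\countf{\pi_k}{j} \geq i$ hold.

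Next I would choose a threshold $h'$ for the subsequence. Since $k_0 < k_1 < \cdots$ is strictly increasing in $\Nat$, it is unbounded, so pick any $h'$ with $k_{h'} \geq h$; then for all $n \geq h'$ we have $k_n \geq k_{h'} \geq h$, hence $j < |\pi_{k_n}| = |\pi'_n|$ and $\countf{\pi'_n}{j} = \countf{\pi_{k_n}}{j} \geq i$. This establishes property~(\ref{eq:inf_seq_prop}) for $\Pi'$ with the same $j$ and the new threshold $h'$.

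There is no real obstacle here: the lemma is essentially a monotonicity property of the ``for all sufficiently large $k$'' quantifier under passage to infinite subsequences, and the only thing to check is that the choice of $j$ does not need to be altered. The mild subtlety, if any, is to notice that the witness $j$ in the definition of coherence is uniform enough to survive subsequencing, which it is because the condition $j < |\pi_k| \wedge \countf{\pi_k}{j} \geq i$ is stated pointwise for each index $k \geq h$, so restricting the indices to a cofinal subset preserves it.
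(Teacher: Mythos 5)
Your proof is correct and is essentially identical to the paper's: both unfold the definition of coherence, reuse the same witness $j$ for the subsequence, and shift the threshold using the unboundedness of the strictly increasing index map. The only cosmetic difference is that the paper picks the minimal $h'$ with $f(h') \geq h$, whereas you allow any such $h'$; this changes nothing.
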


We sketch below the proof that coherent sequences induce fair paths.
Let $\Pi = \pi_0, \pi_1, \dots$ be a coherent sequence of paths in $\NBA Q$.
Let $i = 1$, and let $j_1$ be the index witnessing $\Pi$ is coherent.
Since the $\pi_k$'s are branches in a finitely branching tree,
there are only a finite number of different prefixes of length $j_1$.
Therefore, there exists a prefix $\rho_1$ which is common to infinitely many paths.
Let $\Pi' = \pi'_0, \pi'_1, \dots$ be the infinite subsequence of $\Pi$
containing only suffixes of $\rho_1$.
Clearly $\rho_1$ contains at least $1$ final state,
and each $\pi'$ in $\Pi'$ extends $\rho_1$.
By Lemma~\ref{lem:increasing_seq_invariant_under_subseq},
$\Pi'$ is coherent.
For $i = 2$,
we can apply the reasoning again to $\Pi'$,
and we obtain a longer prefix $\rho_2$ extending $\rho_1$,
and containing at least $2$ final states.
Let $\Pi''$ be the coherent subsequence of $\Pi'$ containing only suffixes of $\rho_2$.
In this fashion,
we obtain an infinite sequence of \emph{strongly} coherent (finite) paths $\rho_1, \rho_2, \cdots$
s.t. $\rho_i$ extends $\rho_{i-1}$ and contains at least $i$ final states.
The infinite path to which the sequence converges is the fair path we are after.
\begin{lemma} \label{lem:increasing_seq_of_paths}
	Let $w \in \Sigma^\omega$ and $\pi_0, \pi_1, \dots$ as above.
	If $\pi_0, \pi_1, \dots$ is coherent,
	then there exists a fair path $\rho$ over $w$.
	Moreover, if all $\pi_i$'s are initial,
	then $\rho$ is initial.
\end{lemma}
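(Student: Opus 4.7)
The plan is to formalize the sketch preceding the lemma as an inductive König-style argument, producing a nested chain $\rho_1 \prefixle \rho_2 \prefixle \cdots$ of finite prefixes whose union is the desired fair path $\rho$. Specifically, I will construct simultaneously (i) an increasing sequence of finite paths $\rho_i$ over prefixes of $w$ with $\countflast{\rho_i} \geq i$, and (ii) an infinite coherent subsequence $\Pi_i \subseteq \Pi$ each of whose members has $\rho_i$ as a prefix. The limit $\rho := \lim_i \rho_i$ is then an infinite path over $w$; since each $\rho_i$ is a prefix of $\rho$ and visits at least $i$ accepting states, $\rho$ is fair.

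The induction proceeds as follows. For the base case, take $\Pi_0 := \Pi$ and $\rho_0$ the empty path. For the inductive step, assuming $\Pi_i$ is coherent and all its members extend $\rho_i$, apply the coherence property~(\ref{eq:inf_seq_prop}) to $\Pi_i$ with parameter $i+1$: there exists an index $j$ such that, for all but finitely many paths $\pi \in \Pi_i$, we have $j < |\pi|$ and $\countf{\pi}{j} \geq i+1$. Discarding the finitely many bad members (the remainder is still infinite and still coherent by Lemma~\ref{lem:increasing_seq_invariant_under_subseq}), each surviving $\pi$ has a length-$j$ prefix over $a_0 \cdots a_{j-1}$. Because $\NBA Q$ is finitely branching, only finitely many such prefixes exist, so by the pigeonhole principle some particular prefix $\rho_{i+1}$ of length $j$ is shared by infinitely many of them. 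Let $\Pi_{i+1}$ be this infinite sub-subsequence; by Lemma~\ref{lem:increasing_seq_invariant_under_subseq} it is coherent, every member extends $\rho_{i+1}$, $\rho_{i+1}$ extends $\rho_i$ (since both are prefixes of each common member), and $\countflast{\rho_{i+1}} = \countf{\pi}{j} \geq i+1$ for any witnessing $\pi$.

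Taking $\rho := \lim_i \rho_i$ yields an infinite path over $w$ because $|\rho_i| \geq j$ grows without bound and each $\rho_{i+1}$ properly extends $\rho_i$ along transitions consistent with $w$. Fairness of $\rho$ follows immediately since $\countflast{\rho} \geq \sup_i \countflast{\rho_i} = \omega$. For the ``moreover'' clause, note that every $\pi_k$ in the original sequence starts at an initial state in $I$, hence every prefix $\rho_i$ (shared by members of $\Pi_i \subseteq \Pi$) likewise starts in $I$, and so does $\rho$.

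The main technical obstacle is simply bookkeeping the two simultaneous inductions (on $\rho_i$ and $\Pi_i$) and verifying that the pigeonhole step preserves coherence; both are handled cleanly by invoking Lemma~\ref{lem:increasing_seq_invariant_under_subseq}, since dropping finitely many elements and then picking an infinite sub-subsequence are both subsequence operations. No other subtleties arise: the argument does not depend on the $\pi_k$'s being initial beyond the last sentence, and the ordering of the $\pi_k$'s is irrelevant since coherence is preserved under arbitrary infinite sub-selection.
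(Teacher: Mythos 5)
Your proof follows essentially the same approach as the paper: a nested induction simultaneously building the prefix chain $\rho_0 \prefixle \rho_1 \prefixle \cdots$ and the shrinking coherent subsequences $\Pi_0 \supseteq \Pi_1 \supseteq \cdots$, a K\"onig/pigeonhole argument at each step, and Lemma~\ref{lem:increasing_seq_invariant_under_subseq} to preserve coherence under sub-selection. One small point you leave implicit is that the witness index $j$ extracted at step $i+1$ must be chosen $\geq |\rho_i|$ — otherwise the common length-$j$ prefix could be a prefix of $\rho_i$ rather than an extension of it; this is harmless because the $\pi_k$ grow unboundedly in length and $\countf{}{}$ is nondecreasing, and the paper handles it with an explicit ``w.l.o.g.\ $m_{l-1} < j$''.
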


\paragraph{Jumping-safe relations.}

We established that coherent sequences induce accepting paths.
Next, we introduce \emph{jumping-safe} relations,
which are designed to induce coherent sequences (and thus accepting paths) when used in quotienting.
The idea is to view a path in the quotient automaton as a jumping path in the original automaton,
where a ``jumping path'' is one that can take arbitrary jumps to equivalent states.
Jumping-safe relations allows us to transform the sequence of prefixes of an accepting jumping path
into a coherent sequence of non-jumping paths;
by Lemma~\ref{lem:increasing_seq_of_paths}, this induces a (nonjumping) accepting path.

Fix a word $w = a_0a_1\cdots \in \Sigma^\omega$,
and let $R$ be a binary relation over $Q$.
An \emph{$R$-jumping path} is an infinite sequence
\begin{align}\label{eq:jumping_path}
	\pi = q_0\ R\ q^F_0\ R\ \hat q_0 \goesto {a_0} q_1\ R\ q^F_1\ R\ \hat q_1 \goesto {a_1} q_2 \cdots,
\end{align}
and we say that $\pi$ is \emph{initial} if $q_0 \in I$,
and \emph{fair} if $q^F_i \in F$ for infinitely many $i$'s.

\begin{definition}
A binary relation $R$ is \emph{jumping-safe} iff
for any initial $R$-jumping path $\pi$
there exists an infinite sequence of initial finite paths $\pi_0, \pi_1, \dots$ over suitable prefixes of $w$
s.t. $\last{\pi_i}\ R\ q_i$ and,
if $\pi$ is fair, then $\pi_0, \pi_1, \dots$ is coherent.
\end{definition}
\begin{theorem}\label{thm:Upsilon_is_GFQ}
	Jumping-safe preorders are good for quotienting.
\end{theorem}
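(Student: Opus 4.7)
The inclusion $\omegalang{\NBA Q} \subseteq \omegalang{\quot{\NBA Q}{\approx_R}}$ is immediate from Lemma~\ref{lem:quot_contains}, taking $\eqsimrel_0$ to be the identity and $\eqsimrel_1 = {\approx_R}$. So the plan is to prove the reverse inclusion: an accepting run in the quotient yields an accepting run in $\NBA Q$.

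Fix $w = a_0 a_1 \cdots \in \omegalang{\quot{\NBA Q}{\approx_R}}$ and let $[p_0] \goesto{a_0} [p_1] \goesto{a_1} \cdots$ be an accepting run over $w$ in the quotient. The first step is to lift this to an initial fair $R$-jumping path in $\NBA Q$. By definition of the quotient, for every $i$ there exist $\hat q_i \in [p_i]$ and $q_{i+1} \in [p_{i+1}]$ with $\hat q_i \goesto{a_i} q_{i+1}$; since $[p_0] \in [I]$, I may further pick $q_0 \in [p_0] \cap I$. For each $i$, I set $q_i^F$ to be some element of $[p_i] \cap F$ whenever this set is nonempty, and $q_i^F := q_i$ otherwise. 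Because $R$ is a preorder, $\approx_R = R \cap R^{-1} \subseteq R$, so $q_i \ R\ q_i^F\ R\ \hat q_i$ at every step, giving a jumping path
\[
\pi \;=\; q_0\ R\ q_0^F\ R\ \hat q_0 \goesto{a_0} q_1\ R\ q_1^F\ R\ \hat q_1 \goesto{a_1} q_2 \cdots
\]
that is initial (since $q_0 \in I$). Moreover, since the quotient run is accepting, infinitely many $[p_i]$ meet $F$, so $q_i^F \in F$ for infinitely many $i$, i.e., $\pi$ is fair.

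Now I apply the jumping-safety of $R$ to $\pi$: there exists an infinite sequence $\pi_0, \pi_1, \dots$ of initial finite paths in $\NBA Q$ (over prefixes of $w$) with $\last{\pi_k}\ R\ q_k$, and the sequence is coherent because $\pi$ is fair. By Lemma~\ref{lem:increasing_seq_of_paths}, there exists an initial fair path $\rho$ over $w$ in $\NBA Q$, i.e., an accepting run, so $w \in \omegalang{\NBA Q}$, concluding the proof.

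The technical heart of the argument is hidden in the hypothesis of jumping-safety; the only potential pitfall at this stage is the choice of $q_i^F$: we need $q_i \approx_R q_i^F$ so that the $R$-jumping-path shape is respected, and simultaneously $q_i^F \in F$ whenever possible so that fairness of the quotient run transfers to fairness of $\pi$. Both are achieved because $F$-membership is a property of equivalence classes in the quotient, so any final witness inside $[p_i]$ is $\approx_R$-related (hence $R$-related) to $q_i$.
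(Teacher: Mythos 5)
Your proposal is correct and follows essentially the same route as the paper's proof: lift the accepting quotient run to an initial fair $R$-jumping path (using $\approx_R \subseteq R$ to justify the jumps and picking a final representative whenever the class meets $F$), apply jumping-safety to get a coherent sequence of initial finite paths, and conclude with Lemma~\ref{lem:increasing_seq_of_paths}. You spell out the choice of $q_i^F$ and the consistency of representatives a bit more explicitly than the paper, but the argument is the same.
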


In Section~\ref{sec:transformers} we introduce refinement transformers,
which are designed to preserve jumping-safety.
Then, in Section~\ref{sec:proxy} we specialize the approach to \emph{backward direct simulation} $\bwdisim$ \cite{efficientltl:2000},
which provides an initial jumping-safe preorder, and which we introduce next:
$\bwdisim$ is the coarsest preorder s.t. $q \bwdisim s$ implies
1) $\forall (q' \goesto a q) \cdot \exists (s' \goesto  a s) \cdot q' \bwdisim s'$,
2) $q \in F \implies s \in F$, and
3) $q \in I \implies s \in I$.
\begin{fact}\label{fact:bwdisim_jumping_safe}
	$\bwdisim$ is jumping-safe and computable in polynomial time.
\end{fact}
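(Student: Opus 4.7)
The fact has two parts. Polynomial-time computability follows by a standard argument: $\bwdisim$ is the greatest binary relation satisfying the three monotone closure conditions in its definition, so it can be computed by iterated refinement of an initial candidate (e.g.\ the set of pairs $(q,s)$ with $q \in F \Rightarrow s \in F$ and $q \in I \Rightarrow s \in I$), removing at each step the pairs that violate condition~1). This is a standard simulation fixpoint computation run on the reversed transition relation, and a textbook refinement algorithm finishes in polynomial time.

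Jumping-safety is the substantive part. The engine of the proof is a \emph{lifting lemma}: if $p \bwdisim s$ and $\rho$ is an initial finite path over a word $u$ ending in $p$, then there is an initial path $\rho'$ over $u$ ending in $s$ whose states are pointwise $\bwdisim$-related to those of $\rho$, and in particular finality is preserved position-by-position via property~2). I would prove this by a backward induction along the positions of $\rho$: property~1) produces the lifted predecessors at each step, property~3) guarantees that the first state of $\rho'$ lies in $I$, and property~2) transports finality.

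Using the lemma, I would construct the witness sequence $\pi_0, \pi_1, \ldots$ inductively. Set $\pi_0 := q_0$ (a trivial initial path, since $q_0 \in I$). Given $\pi_k$ with $\last{\pi_k} \bwdisim q_k$, transitivity with $q_k \bwdisim q^F_k \bwdisim \hat q_k$ gives $\last{\pi_k} \bwdisim \hat q_k$; apply the lifting lemma to obtain an initial path $\pi'_k$ over $a_0 \cdots a_{k-1}$ ending in $\hat q_k$, and set $\pi_{k+1} := \pi'_k \cdot q_{k+1}$ by appending the real transition $\hat q_k \goesto{a_k} q_{k+1}$. Then $\last{\pi_{k+1}} = q_{k+1} \bwdisim q_{k+1}$, as required.

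The delicate step is coherence when $\pi$ is fair. Two observations drive it. First, property~2) applied to $q^F_k \bwdisim \hat q_k$ yields $\hat q_k \in F$ whenever $q^F_k \in F$; since $\hat q_k$ sits at position $k$ of $\pi_{k+1}$ by construction, fairness of $\pi$ plants a final state at position $k$ in $\pi_{k+1}$ for infinitely many $k$. Second, each re-lifting preserves finality pointwise and the subsequent extensions only append new states at the end, so once position $j$ is final in some $\pi_k$ it remains final in every $\pi_{k'}$ with $k' \geq k$. Combining: given $i$, pick indices $m_1 < \cdots < m_i$ with $\hat q_{m_l} \in F$ and take $j := m_i + 1$, $h := m_i + 1$; for every $k \geq h$ the positions $m_1, \ldots, m_i$ are all final in $\pi_k$, so $\countf{\pi_k}{j} \geq i$ while $j < |\pi_k|$. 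The main obstacle I anticipate is precisely this monotonicity of finality across re-liftings --- without property~2), a position that was final in $\pi_k$ could be demoted in $\pi_{k+1}$, which would sink the coherence argument.
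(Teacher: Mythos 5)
The paper states this as a \textbf{Fact} with no accompanying proof, so there is no reference argument to compare against; your proposal is correct and is the natural argument the paper evidently considers immediate. The lifting lemma is exactly the right engine: backward induction along $\rho$ with condition~1) supplying the predecessors, condition~3) transporting membership in $I$ back to the first state, and condition~2) transporting finality position by position. The inductive construction maintaining $\last{\pi_k} = q_k$ (so that the invariant $\last{\pi_k} \bwdisim q_k$ holds by reflexivity) and re-lifting the whole prefix at each step is sound, and you have put your finger on the one genuinely non-obvious point: since $\pi_k[j] \bwdisim \pi'_k[j]$ pointwise and $\pi_{k+1}$ agrees with $\pi'_k$ on positions $0,\dots,k$, condition~2) makes finality at a fixed position monotone across successive re-liftings, which is precisely what makes the coherence quantifiers ($\forall i\,\exists j\,\exists h\,\forall k \geq h$) come out with $j := m_i+1$ and $h := m_i+1$. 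The bookkeeping also checks: $\hat q_{m_l}\in F$ sits at position $m_l$ of $\pi_{m_l+1}$ and hence, by monotonicity, of every $\pi_k$ with $k \geq m_l+1$; and $|\pi_k| = k+1 > m_i+1 = j$ for $k \geq h$. The PTIME part is the standard refinement fixpoint for simulation run on the reversed transition relation with $I$ and $F$ treated as the observable predicates. No gaps.
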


\section{Refinement transformers}
\label{sec:transformers}

We study how to obtain GFQ preorders coarser than forward/backward simulation.
As a preliminary example, notice that it is not possible to generalize simultaneously both forward and backward simulations.
See the counterexample in Fig.~\ref{fig:fwbw_incorrect}, where
\begin{wrapfigure}{r}{0.4\textwidth}
	\centering
	\VCDraw{
		\begin{VCPicture}{(0,-0.5)(4,3)}
	
		\State[q_0]{(2,3)}{Q0} \Initial[nw]{Q0}
		\State[q_1]{(0,1.5)}{Q1}
		\State[q_2]{(2,1.5)}{Q2}
		\State[q_3]{(4,1.5)}{Q3}
		\FinalState[q_4]{(2,0)}{Q4}

		\EdgeR{Q0}{Q1}{a}
		\EdgeR{Q0}{Q2}{a}
		\EdgeL[0.4]{Q0}{Q3}{b}
		\EdgeR{Q1}{Q4}{a}
		\EdgeR{Q2}{Q4}{b}
		\EdgeL[0.2]{Q3}{Q4}{b}
		\LoopE[0.6]{Q4}{a}

		\end{VCPicture}
	}
	\caption{
	}
	\label{fig:fwbw_incorrect}
	
	\vspace{-5ex}
	
\end{wrapfigure}
any relation coarser than both forward and backward simulation is not GFQ.
Let $\bwdisimeq$ and $\fwdisimeq$ be backward and forward direct simulation equivalence, respectively.
We have $q_1 \bwdisimeq q_2 \fwdisimeq q_3$,
but ``glueing together'' $q_1, q_2, q_3$ would introduce the extraneous word $ba^\omega$. 
Therefore, one needs to choose whether to extend either forward or backward simulation.
The former approach has been pursued in the \emph{mediated preorders} of \cite{mediating:fsttcs2009}
(in the more general context of {alternating} automata).
Here, we extend backward refinements.
%
\label{sec:refinement_transformers}

We define a \emph{refinement transformer} $\tau_0$ mapping a relation $R$ to a new, coarser relation $\tau_0(R)$.
We present $\tau_0$ via a forward direct simulation-like game
where Duplicator is allowed to ``jump'' to $R$-bigger states---called \emph{proxies}. 
Formally, in the $\tau_0(R)$ simulation game
Spoiler's positions are in $Q\times Q$,
Duplicator's position are in $Q \times Q \times \Sigma \times Q$
and transitions are as follows:
Spoiler picks a transition $(\pair s q, \quadruple s q a {q'}) \in \Gamma_0$ simply when $q \goesto a {q'}$,
and Duplicator picks a transition $(\quadruple s q a {q'}, \pair {s'} {q'}) \in \Gamma_1$ iff
there exists a proxy $\hat s$ s.t. $s\ R\ \hat s$ and $\hat s \goesto a {s'}$.
The winning condition is:
%
$		\forall i \geq 0 \cdot q_i \in F \!\implies\! \hat s_i \in F$.
%
If Duplicator wins starting from the initial position $\pair s q$,
we write $s\ \tau_0(R)\ q$.
(Notice that we swapped the usual order between $q$ and $s$ here.)

\ignore{
\begin{remark}
	If we take $R$ to be the identity relation,
	we obtain the usual forward direct simulation $\fwdisim = \left[\tau_0(R)\right]^{-1}$.
\end{remark}
}

\begin{lemma}\label{lem:tau0_summary}
	For a preorder $R$, $R \subseteq R \circ \tau_0(R) \subseteq \tau_0(R)$.
\end{lemma}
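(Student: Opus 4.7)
The plan is to handle the two inclusions separately, with reflexivity of $R$ driving the first and transitivity of $R$ driving the second.

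For $R \subseteq R \circ \tau_0(R)$, the key observation is that $\tau_0(R)$ is reflexive whenever $R$ is. I would first show $q\ \tau_0(R)\ q$ for every $q$ by exhibiting the ``diagonal'' Duplicator strategy: at position $\pair {x}{x}$, when Spoiler plays $x \goesto a x'$, Duplicator picks the proxy $\hat s := x$ (valid because $x\ R\ x$ by reflexivity) and follows the identical transition to land in $\pair {x'}{x'}$. The winning condition $x \in F \implies \hat s \in F$ is trivially satisfied since $\hat s = x$. Then, given $s\ R\ q$, the pair $(q, q)$ serves as a witness: $s\ R\ q$ and $q\ \tau_0(R)\ q$.

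For $R \circ \tau_0(R) \subseteq \tau_0(R)$, suppose $s\ R\ t\ \tau_0(R)\ q$, and let $\sigma$ be Duplicator's winning strategy in the $\tau_0(R)$-game from $\pair{t}{q}$. I would build a Duplicator strategy $\sigma'$ for the game from $\pair{s}{q}$ by piggybacking on $\sigma$, based on the following invariant: after $i$ rounds, the game from $\pair s q$ is at a position $\pair{s_i}{q_i}$ with $s_i\ R\ t_i$, where $\pair{t_i}{q_i}$ is the corresponding position in the $\sigma$-conform play from $\pair t q$ against the same Spoiler moves. In round $i$, when Spoiler plays $q_i \goesto{a_i} q_{i+1}$, let $\sigma$ prescribe proxy $\hat t_i$ with $t_i\ R\ \hat t_i$ and $\hat t_i \goesto{a_i} t_{i+1}$; then $\sigma'$ chooses $\hat s_i := \hat t_i$ and $s_{i+1} := t_{i+1}$. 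Transitivity of $R$ yields $s_i\ R\ t_i\ R\ \hat t_i$, i.e.\ $s_i\ R\ \hat s_i$, so the move is legal; and since $\hat s_i = \hat t_i$, the constraint $q_i \in F \implies \hat s_i \in F$ is inherited from $\sigma$. The invariant for the next round is immediate since $s_{i+1} = t_{i+1}$, which is in $R$-relation to itself by reflexivity. Hence all $\sigma'$-conform plays are winning, giving $s\ \tau_0(R)\ q$.

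The only step that requires real thought is the second inclusion, and the substantive content there is simply that the proxy chosen by $\sigma$ from $t$ remains a legal proxy when viewed from the $R$-smaller state $s$ — which is exactly transitivity. No ordinal/ranking machinery is needed because both games stay synchronized on Spoiler's side, and the diagonal $s_i = t_i$ is restored after the very first round.
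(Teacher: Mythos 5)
Your proof is correct and follows essentially the same route as the paper, which decomposes the statement into two auxiliary lemmas: $R \subseteq \tau_0(R)$ (via the same ``Duplicator takes $\hat s := q$'' observation you use to get reflexivity of $\tau_0(R)$) and $R \circ \tau_0(R) \subseteq \tau_0(R)$ (via the same ``$\bar s\ R\ s\ R\ \hat s$ so the proxy remains legal by transitivity'' argument). The only cosmetic difference is the witness for the first inclusion — you factor $s\ R\ q$ as $s\ R\ q\ \tau_0(R)\ q$, whereas the paper factors it as $s\ R\ s\ \tau_0(R)\ q$ — but this is immaterial.
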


Unfortunately, $\tau_0(R)$ is not necessarily a transitive relation.
Therefore, it is not immediately clear how to define a suitable equivalence for quotienting.
Figure~\ref{fig:fwbw_incorrect} shows that taking the transitive closure of $\tau_0(R)$ is incorrect---%
already when $R$ is direct backward simulation $\bwdisim$:
Let $\preceq = \tau_0(\bwdisim)$ and let $\approx = \preceq \cap \preceq^{-1}$. 
We have $q_3 \approx q_2 \approx q_1\ \preceq\ q_3$,
but $q_3 \not \preceq q_1$, and forcing $q_1 \approx q_3$ is incorrect, as noted earlier.

Thus, $\tau_0(R)$ is not GFQ
and we need to look at its transitive fragments.
Let $T \subseteq \tau_0(R)$.
We say that $R$ is \emph{$F$-respecting} if $q\ R\ s \wedge q \in F\!\! \implies\!\! s \in F$,
that $T$ is \emph{self-respecting} if Duplicator wins by never leaving $T$,
that $T$ is \emph{appealing} if transitive and self-respecting,
and that $T$ \emph{improves on $R$} if $R \subseteq T$.
\begin{theorem}\label{thm:tau0_GFQ}
	Let $R$ a $F$-respecting preorder,
	and let $T \subseteq \tau_0(R)$ be an appealing, improving fragment of $\tau_0(R)$.
	If $R$ is jumping-safe, then $T$ is jumping-safe.
\end{theorem}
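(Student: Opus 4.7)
The plan is to reduce jumping-safety of $T$ to jumping-safety of $R$. Given an initial $T$-jumping path $\pi = q_0\ T\ q^F_0\ T\ \hat q_0 \goesto{a_0} q_1\ T\ q^F_1\ T\ \hat q_1 \goesto{a_1} q_2 \cdots$ over a word $w = a_0a_1\cdots$, I inductively build an initial $R$-jumping path $\pi' = p_0\ R\ p^F_0\ R\ \hat p_0 \goesto{a_0} p_1\ R\ p^F_1\ R\ \hat p_1 \goesto{a_1} p_2 \cdots$ over the same $w$, maintaining the invariant $p_i\ T\ q_i$ and arranging $p^F_i \in F$ whenever $q^F_i \in F$. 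Once $\pi'$ is built, $R$-jumping-safety yields initial finite paths $\sigma_0, \sigma_1, \ldots$ with $\last{\sigma_i}\ R\ p_i$, coherent whenever $\pi'$ is fair. Since $R \subseteq T$ (improving) and $T$ is transitive, $\last{\sigma_i}\ T\ p_i\ T\ q_i$, hence $\last{\sigma_i}\ T\ q_i$; coherence transfers verbatim, so $T$ is jumping-safe.

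For the base case set $p_0 := q_0 \in I$; reflexivity of $R$ combined with $R \subseteq T$ makes $T$ reflexive, so $p_0\ T\ q_0$. For the inductive step I use two nested $\tau_0(R)$-plays, each driven by Duplicator's self-respecting winning strategy, guaranteed by $T$ being appealing.

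In sub-step (a), from $q^F_i\ T\ \hat q_i$ and $T \subseteq \tau_0(R)$, Duplicator at $q^F_i$ wins against Spoiler at $\hat q_i$. Against Spoiler's move $\hat q_i \goesto{a_i} q_{i+1}$, Duplicator's strategy picks a proxy $\hat r_i$ with $q^F_i\ R\ \hat r_i$ and a transition $\hat r_i \goesto{a_i} r_{i+1}$, the resulting position satisfying $r_{i+1}\ T\ q_{i+1}$ by self-respectingness. Crucially, if $q^F_i \in F$ then $F$-respectingness of $R$ forces $\hat r_i \in F$.

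In sub-step (b), chaining $p_i\ T\ q_i\ T\ q^F_i$ with $q^F_i\ T\ \hat r_i$ (the latter obtained from $q^F_i\ R\ \hat r_i$ and $R \subseteq T$) gives $p_i\ T\ \hat r_i$ by transitivity. Duplicator at $p_i$ thus wins against Spoiler at $\hat r_i$; responding to the move $\hat r_i \goesto{a_i} r_{i+1}$, Duplicator picks a proxy $\hat p_i$ with $p_i\ R\ \hat p_i$ and a transition $\hat p_i \goesto{a_i} p_{i+1}$ with $p_{i+1}\ T\ r_{i+1}$ by self-respectingness, hence $p_{i+1}\ T\ q_{i+1}$ by transitivity, restoring the invariant. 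The $\tau_0(R)$ winning condition at round $0$ of this sub-play yields $\hat r_i \in F \Rightarrow \hat p_i \in F$, so combined with (a), $q^F_i \in F \Rightarrow \hat p_i \in F$. I then set $p^F_i := \hat p_i$; reflexivity of $R$ gives $\hat p_i\ R\ \hat p_i$, so $p_i\ R\ p^F_i\ R\ \hat p_i \goesto{a_i} p_{i+1}$ is a valid round of an $R$-jumping path. Since $q^F_i \in F$ implies $p^F_i \in F$, fairness of $\pi$ transfers to fairness of $\pi'$.

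The main obstacle is that $T$ is not assumed $F$-respecting, so fairness of $\pi$ cannot be translated into fairness of $\pi'$ by any one-shot argument. The nested two-stage construction is the key trick: sub-step (a) extracts from each witness $q^F_i \in F$ a genuinely accepting $R$-proxy $\hat r_i$, exploiting that $R$ \emph{is} $F$-respecting; sub-step (b) then propagates this accepting state into the outer play through the winning condition of $\tau_0(R)$ itself, producing an accepting $\hat p_i$ to serve as $p^F_i$ in $\pi'$.
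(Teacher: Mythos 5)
Your proof is correct and mirrors the paper's argument essentially step for step: the same inductive construction of an initial $R$-jumping path $p_0\ R\ \hat p_0 \goesto{a_0} p_1 \cdots$ with invariant $p_i\ T\ q_i$, and the same two-stage use of $\tau_0(R)$ in each round (first $q^F_i\ T\ \hat q_i$ to obtain the accepting intermediate proxy, then transitivity plus $R \subseteq T$ to move the constructed path forward), with $F$-respectingness of $R$ bridging acceptance to the proxy and the $\tau_0$ winning condition propagating it to $\hat p_i$. The only differences are notational (your $p_i,\hat r_i,\hat p_i,r_{i+1}$ correspond to the paper's $r_i,\hat q^F_i,\hat r_i,q'$) and the explicit observation that $p^F_i := \hat p_i$ via reflexivity of $R$, which the paper also invokes.
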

In particular, by Theorem~\ref{thm:Upsilon_is_GFQ}, $T$ is GFQ.
Notice that requiring that $R$ is GFQ is not sufficient here,
and we need the stronger invariant given by jumping-safety.

Given an appealing fragment $T \subseteq \tau_0(R)$,
a natural question is whether $\tau_0(T)$ improves on $\tau_0(R)$,
so that $\tau_0$ can be applied repeatedly to get bigger and bigger preorders.
We see in the next lemma that this is not the case.
\begin{lemma}\label{lem:tau0_notimproving}
	For any reflexive $R$, let $T \subseteq \tau_0(R)$ be any appealing fragment of $\tau_0(R)$.
	Then, $\tau_0(T) \subseteq \tau_0(R)$.
\end{lemma}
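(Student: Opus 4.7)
The plan is a coinductive argument. Recall that $\tau_0(R)$ can be presented as the greatest fixed point of the monotone operator $F_R(X) = \{(s,q) \mid \forall q \goesto{a} q',\ \exists \hat s, \bar s \cdot s\ R\ \hat s \wedge \hat s \goesto{a} \bar s \wedge (\bar s, q') \in X \wedge (q \in F \implies \hat s \in F)\}$. Accordingly, to show $\tau_0(T) \subseteq \tau_0(R)$ I would check that $\tau_0(T)$ is an $F_R$-post-fixed point, i.e., $\tau_0(T) \subseteq F_R(\tau_0(T))$.

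As a preliminary I would first establish the composition law $T \circ \tau_0(T) \subseteq \tau_0(T)$, a direct analog of the $R \circ \tau_0(R) \subseteq \tau_0(R)$ half of Lemma~\ref{lem:tau0_summary}, whose proof is identical and uses only transitivity of $T$: given $s\ T\ t$ and $t\ \tau_0(T)\ q$, any $T$-proxy $\tilde s$ picked by Duplicator from $(t, q)$ satisfies $t\ T\ \tilde s$, hence $s\ T\ \tilde s$, so the same proxy-and-step is legal from $(s, q)$. Next I would fix $(s, q) \in \tau_0(T)$ and a Spoiler move $q \goesto{a} q'$. Duplicator's winning strategy in the $\tau_0(T)$ game gives a proxy $\tilde s$ with $s\ T\ \tilde s$, $\tilde s \goesto{a} s''$, $(s'', q') \in \tau_0(T)$, and $q \in F \implies \tilde s \in F$. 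Since $T \subseteq \tau_0(R)$ and $T$ is self-respecting, Duplicator wins the $\tau_0(R)$ game from $(s, \tilde s) \in T$ by a strategy that never leaves $T$; its response to Spoiler's move $\tilde s \goesto{a} s''$ yields a proxy $\hat s$ with $s\ R\ \hat s$, $\hat s \goesto{a} \bar s$, $(\bar s, s'') \in T$, and $\tilde s \in F \implies \hat s \in F$. Chaining the acceptance implications gives $q \in F \implies \hat s \in F$, and the composition lemma applied to $(\bar s, s'') \in T$ and $(s'', q') \in \tau_0(T)$ gives $(\bar s, q') \in \tau_0(T)$. Hence $(s, q) \in F_R(\tau_0(T))$, as required, and the coinduction closes.

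The main subtlety will be ensuring that the landing pair $(\bar s, q')$ falls back inside $\tau_0(T)$. Without the self-respecting hypothesis one would only obtain $(\bar s, s'') \in \tau_0(R)$ and would need $\tau_0(R) \circ \tau_0(T) \subseteq \tau_0(T)$, which is not available in general: as observed just after Lemma~\ref{lem:tau0_summary}, even $\tau_0(R)$ itself need not be transitive. Self-respectingness is precisely what is needed to sharpen $\tau_0(R)$ down to $T$ at that single step, after which the easy composition lemma (using only transitivity of $T$) closes the loop. Observe that neither reflexivity of $T$ nor improvement of $T$ over $R$ plays any role in the argument; reflexivity of $R$ is used only implicitly through the background theory of $\tau_0$.
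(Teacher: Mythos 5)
Your proposal is correct and is essentially the paper's own argument: from a $\tau_0(T)$ move you extract the $T$-proxy $\tilde s$, play the self-respecting $\tau_0(R)$ strategy for $s\ T\ \tilde s$ against Spoiler's move $\tilde s \goesto a s''$ to obtain an $R$-proxy landing in a pair still in $T$, chain the acceptance implications, and close with the composition law $T \circ \tau_0(T) \subseteq \tau_0(T)$ (the paper invokes Lemma~\ref{lem:tau0_closed}). The recasting as a post-fixpoint/coinductive argument rather than an explicit strategy construction is cosmetic, and your side remark that reflexivity of $R$ is not actually used here is also accurate.
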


\paragraph{Efficient appealing fragments.}

By Theorems \ref{thm:Upsilon_is_GFQ} and \ref{thm:tau0_GFQ},
appealing fragments of $\tau_0$ are GFQ.
Yet, we have not specified any method for obtaining these.
Ideally, one looks for fragments having maximal cardinality
(which yelds maximal reduction under quotienting),
but finding them is computationally expensive. 
Instead, we define a new transformer $\tau_1$ which is guaranteed to produce only appealing fragments,%
\footnote{
$\tau_1$ needs not be the only solution to this problem:
Other ways of obtaining appealing fragments of $\tau_0$ might exist.
For this reason, we have given a separate treatment of $\tau_0$ in its generality,
together with the general correctness statement (Theorem~\ref{thm:tau0_GFQ}).
}
which, while not maximal in general,
are maximal amongst all \emph{improving} fragments (Lemma~\ref{lem:tau1_maximal}).

The reason why $\tau_0(R)$ is not transitive
is that only Duplicator is allowed to make ``$R$-jumps''.
This asymmetry is an obstacle to compose simulation games.
We recover transitivity by allowing Spoiler to jump as well,
thus restoring the symmetry.
Formally, the $\tau_1(R)$ simulation game is identical to the one for $\tau_0(R)$,
the only difference being that also Spoiler is now allowed to ``jump'',
i.e., she can pick a transition $(\pair s q, \quadruple s q a {q'})\!\in\!\Gamma_0$
iff there exists $\hat q$ s.t. $q\ R\ \hat q$ and  $\hat q \goesto a {q'}$.
The winning condition is:
%
$	\forall i \geq 0 \cdot \hat q_i \!\in\! F \!\!\implies\!\! \hat s_i \!\in\! F$. 
%
Let $s\ \tau_1(R)\ q$ if Duplicator wins from position $\pair s q$.
%
It is immediate to see that $\tau_1(R)$ is an appealing fragment of $\tau_0(R)$,
and that $\tau_1$ is improving on transitive relations $R$'s.
Thus, for a preorder $R$,
$R \subseteq \tau_1(R) \subseteq \tau_0(R)$.
By Theorems \ref{thm:Upsilon_is_GFQ} and \ref{thm:tau0_GFQ}, $\tau_1(R)$ is GFQ (if $R$ is $F$-respecting).

It turns out that $\tau_1(R)$ is actually the \emph{maximal} appealing, improving fragment of $\tau_0(R)$.
This is non-obvious,
since the class of appealing $T$'s is not closed under union%
---still, it admits a maximal element.
Therefore, $\tau_1$ is an optimal solution to the problem of finding appealing, improving fragments of $\tau_0(R)$.

\begin{lemma}\label{lem:tau1_maximal}
	For any $R$,
	let $T \subseteq \tau_0(R)$ be any appealing fragment of $\tau_0(R)$.
	If $R \subseteq T$ (i.e., $R$ is improving),
	then $T \subseteq \tau_1(R)$.
\end{lemma}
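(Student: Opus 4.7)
The plan is to prove that if $s\ T\ q$ then $s\ \tau_1(R)\ q$ by exhibiting a Duplicator winning strategy in the $\tau_1(R)$ game starting from $\pair s q$. I will construct the strategy round by round, maintaining the inductive invariant that at every Spoiler position $\pair {s_i} {q_i}$ reached in the play, $s_i\ T\ q_i$. The base case $i=0$ holds by assumption.

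At round $i$, Spoiler performs a $\tau_1$ move consisting of a jump $q_i\ R\ \hat q_i$ followed by a transition $\hat q_i \goesto {a_i} q_{i+1}$. My first step is to absorb Spoiler's jump into the invariant: since $R \subseteq T$ by the improving hypothesis, we have $q_i\ T\ \hat q_i$, and then transitivity of $T$ together with $s_i\ T\ q_i$ gives $s_i\ T\ \hat q_i$. Because $T \subseteq \tau_0(R)$ and $T$ is self-respecting, Duplicator has a winning $\tau_0(R)$-strategy from $\pair {s_i} {\hat q_i}$ whose reached positions never leave $T$. I apply only the \emph{first} round of this subgame-strategy: in $\tau_0$ Spoiler does not jump, so she can be made to play $\hat q_i \goesto {a_i} q_{i+1}$ directly, and Duplicator's response prescribes a proxy $\hat s_i$ with $s_i\ R\ \hat s_i$ and a successor $s_{i+1}$ with $\hat s_i \goesto {a_i} s_{i+1}$ and $s_{i+1}\ T\ q_{i+1}$ (self-respectingness). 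In the outer $\tau_1(R)$ game I play exactly this $\hat s_i, s_{i+1}$, and the invariant is preserved.

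Finally, I verify the $\tau_1(R)$ winning condition $\hat q_i \in F \implies \hat s_i \in F$ for every $i$. But this is exactly the round-$0$ instance of the $\tau_0(R)$ winning condition for the subgame launched from $\pair {s_i} {\hat q_i}$: at that round the Spoiler state is $\hat q_i$ and Duplicator's proxy is $\hat s_i$, and since Duplicator wins that subgame the implication holds. Hence the constructed play is winning in $\tau_1(R)$ and $s\ \tau_1(R)\ q$.

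The main subtlety, and the step where care is needed, is that $T$ being appealing does not a priori supply a single global strategy compatible with the $\tau_1(R)$ rules. My workaround is to \emph{restart} a fresh $\tau_0(R)$ self-respecting subgame at the beginning of every round, consuming only its first move; the invariant $s_i\ T\ q_i$ is precisely what allows these subgames to be chained, while the absorption of Spoiler's jump into $T$ via $R \subseteq T$ and transitivity is what converts the asymmetric $\tau_0$-style response into a legal $\tau_1$-reply with the correct acceptance bookkeeping.
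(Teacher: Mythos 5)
Your proposal is correct and follows essentially the same argument as the paper: in both, you absorb Spoiler's $R$-jump into the invariant $s_i\ T\ q_i$ via $R \subseteq T$ and transitivity, invoke self-respectingness of $T \subseteq \tau_0(R)$ to get a $T$-preserving one-step reply with proxy $\hat s_i$, and read off the acceptance clause $\hat q_i \in F \implies \hat s_i \in F$ from the round-$0$ $\tau_0$ winning condition. Your explicit "restart a fresh self-respecting $\tau_0$ subgame each round" is the same content as the paper's single-round invariant-preservation step, just spelled out more verbosely.
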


\ignore{ 
\begin{lemma}
	For $R$ a preorder,
	if $R$ respects final states and $\Upsilon(R)$ holds,
	then $\Upsilon(\tau(R))$ holds.
\end{lemma}
\begin{proof}
	Assume that $R$ respects final states and that $\Upsilon(R)$ holds.
	We have to show $\Upsilon(\tau(R))$.
	To this end, let $w = a_0a_1\cdots \in \Sigma^\omega$, and, for any $i$,
	let $\hat q_i\ \tau(R)\ q^F_i\ \tau(R)\ q_i$ be states in $Q$ s.t. $\hat q_i \goesto {a_i} q_{i+1}$ and $q_0 \in I$.
	We need to prove properties $\Upsilon_1$) and $\Upsilon_2$),
	as required by the definition of $\Upsilon$.
	
	We prove $\Upsilon_1$).
	First, we show by induction the following claim:
	For any $i$, there exists a sequence of states $r_0, \hat r_0, r_1, \hat r_1, \dots, r_i$
	s.t., for any $k < i$,
	$\hat r_k\ R\ r_k$,
	$\hat r_k \goesto {a_k} r_{k+1}$
	and $q^F_k \in F \implies \hat r_k \in F$,
	and $q_i\ \tau(R)\ r_i$.

	For $i = 0$, just take $r_0 := q_0$.
	For $i \geq 0$, assume $r_0, \hat r_0, r_1, \hat r_1, \dots, r_i$ has already been built.
	Since $\hat q_i \goesto a_i q_{i+1}$ and $\hat q_i\ \tau(R)\ q^F_i$,
	by the definition of $\tau$ we have that there exists $\hat q^F_i \goesto a_i q'$
	for some $\hat q^F_i$ and $q'$ with $\hat q^F_i\ R\ q^F_i$ and $q_{i+1}\ \tau(R)\ q'$.
	But $q^F_i\ \tau(R)\ q_i$ and, by induction hypothesis, $q_i\ \tau(R)\ r_i$.
	Since $\tau(R)$ is transitive (by Lemma~\ref{lem:tau_transitive}),
	we get $q^F_i\ \tau(R)\ r_i$,
	so there exists $\hat r_i \goesto {a_i} r_{i+1}$
	with $\hat r_i\ R\ r_i$ and $q'\ \tau(R)\ r_{i+1}$.
	Again by transitivity, we get $q_{i+1}\ \tau(R)\ r_{i+1}$.
	Moreover, if $q^F_i \in F$, then since $R$ respects final states,
	we have $\hat q^F_i \in F$,
	and, by the definition of $\tau(R)$,
	we finally derive $\hat r_i \in F$.
	This concludes the inductive step, and the claim is proved.
	
	From the claim above, and by the fact that $\Upsilon(R)$ holds,
	it follows that there exists an infinite sequence of finite paths
	$\pi_0, \pi_1, \dots$ s.t. $\pi_i(0) \in I$ and $r_i\ R\ \last{\pi_i}$.
	Since $\tau$ is increasing on $R$ ($R$ being transitive, by Lemma~\ref{lem:tau_increasing}),
	$r_i\ \tau(R)\ \last{\pi_i}$ holds as well.
	By $q_i\ \tau(R)\ r_i$ and transitivity,
	we obtain $q_i\ \tau(R)\ \last{\pi_i}$.
	Therefore, the same sequence $\pi_0, \pi_1, \dots$ can be used to show $\Upsilon_1$) for $\tau(R)$.
	
	Finally, for point $\Upsilon_2$),
	assume that $q^F_i \in F$ for infinitely many $i$'s.
	By the claim above, $\hat r_i \in F$ for infinitely many $i$'s,
	and, by $\Upsilon_2$) applied to $R$ (by taking $r^F_i := \hat r_i$, $R$ being reflexive),
	we finally infer $\Psi(\pi_0, \pi_1, \dots)$,
	which concludes the proof.
\end{proof}
}

\subsection{Delayed-like refinement transformers}
\label{sec:taude}

We show that the refinement transformer approach can yield relations even coarser than $\tau_1$.
Our first attempt is to generalize the direct-like winning condition of $\tau_0$ to a delayed one.
Let $\taudenaught$ be the same as $\tau_0$ except for the different winning condition,
which now is:
%
$	\forall i \geq 0 \cdot q_i \!\in\! F \!\!\implies\!\! \exists j\geq i \cdot \hat s_j \!\in\! F$.
%
Clearly, $\taudenaught$ inherits the same transitivity issues of $\tau_0$.
Unfortunately, the approach of taking appealing fragments is not sound here,
due to the weaker winning condition.
See Figure~\ref{fig:taudenaught_not_GFQ} in the Appendix for a counterexample.


We overcome these issues by dropping $\taudenaught$ altogether,
and directly generalize $\tau_1$ (instead of $\tau_0$) to a delayed-like notion.
The \emph{delayed refinement transformer} $\taude$ is like $\tau_1$,
except for the new winning condition:
%
$	\forall i \geq 0 \cdot \hat q_i \!\in\! F \!\!\implies\!\! \exists j\geq i \cdot \hat s_j \!\in\! F$. 
%
Notice that	$\taude(R)$ is at least as coarse as $\tau_1(R)$,
and incomparable with $\tau_0(R)$.
Once $R$ is given, $\taude(R)$ can be computed in polynomial time.
See Appendix~\ref{app:computing_taude}.

\begin{lemma}\label{lem:taude_transitive}
	For any $R$,
	$\taude(R)$ is transitive.
\end{lemma}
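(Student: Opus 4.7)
The plan is to combine Duplicator's winning strategies in two $\taude(R)$-games, exploiting the symmetric jumping structure of $\tau_1$ that $\taude$ inherits. Suppose $q_1\ \taude(R)\ q_2$ and $q_2\ \taude(R)\ q_3$, witnessed by Duplicator's strategies $\sigma_{12}$ and $\sigma_{23}$ in the games from $\pair{q_1}{q_2}$ and $\pair{q_2}{q_3}$ respectively. I will build a winning Duplicator strategy $\sigma_{13}$ in the game from $\pair{q_1}{q_3}$ by maintaining two virtual plays $\rho_{12}$ and $\rho_{23}$ lock-stepped with the actual play.

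When Spoiler plays $(\pair{s_1}{s_3}, \quadruple{s_1}{s_3}{a}{s'_3})$ using a proxy $\hat s_3$ with $s_3\ R\ \hat s_3 \goesto{a} s'_3$, I replay it verbatim as Spoiler's move in $\rho_{23}$ from $\pair{s_2}{s_3}$. Consulting $\sigma_{23}$ yields Duplicator's response with some proxy $\hat s_2$ (so $s_2\ R\ \hat s_2$) and successor $s'_2$ (so $\hat s_2 \goesto{a} s'_2$). Crucially, this data is exactly what Spoiler needs to play $(\pair{s_1}{s_2}, \quadruple{s_1}{s_2}{a}{s'_2})$ in $\rho_{12}$ from $\pair{s_1}{s_2}$ using the very same proxy $\hat s_2$. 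Applying $\sigma_{12}$ gives Duplicator's response with proxy $\hat s_1$ and successor $s'_1$, and I adopt this as $\sigma_{13}$'s response in the actual game, with proxy $\hat s_1$.

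For the winning condition, observe how the proxies align: Spoiler's proxies in $\rho_{23}$ coincide with those in the actual game, Duplicator's proxies in $\rho_{23}$ coincide with Spoiler's proxies in $\rho_{12}$, and Duplicator's proxies in $\rho_{12}$ coincide with $\sigma_{13}$'s proxies. Hence, if $\hat s_3$ is final at some round $i$, the winning condition of $\sigma_{23}$ supplies $j \geq i$ at which Duplicator's proxy in $\rho_{23}$ is final; this proxy is also Spoiler's round-$j$ proxy in $\rho_{12}$, so the winning condition of $\sigma_{12}$ supplies $k \geq j \geq i$ at which $\sigma_{13}$'s proxy is final, meeting the delayed condition.

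The main, though essentially routine, obstacle is the bookkeeping: one must verify that the proxy chosen by Duplicator in one virtual game becomes a legal Spoiler proxy in the next virtual game. This works precisely because $\tau_1$ (and hence $\taude$) lets Spoiler jump via $R$ as well, restoring the symmetry that broke transitivity for $\tau_0$ and $\taudenaught$.
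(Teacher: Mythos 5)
Your proof is correct and follows the same approach the paper takes: compose the two Duplicator winning strategies by maintaining lock-stepped virtual plays, pass Spoiler's proxy through the first inner game, reuse Duplicator's proxy there as Spoiler's proxy in the second, and chain the two delayed winning conditions ($i \leq j \leq k$). The paper states this more tersely (deferring the formal "logbook" bookkeeping to a citation), but the decomposition and the key observation — that the symmetric $R$-jumping for Spoiler is exactly what makes a Duplicator proxy in one inner game a legal Spoiler proxy in the other — are identical.
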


\begin{theorem}\label{thm:taude_GFQ}
	If $R$ is a jumping-safe $F$-respecting preorder,
	then $\taude(R)$ is jumping-safe.
\end{theorem}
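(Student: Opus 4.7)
The plan is to lift the initial $\taude(R)$-jumping path $\pi = q_0\ \taude(R)\ q^F_0\ \taude(R)\ \hat q_0 \goesto{a_0} q_1\ \taude(R)\ q^F_1\ \taude(R)\ \hat q_1 \goesto{a_1} q_2 \cdots$ over $w$ to an initial $R$-jumping path $\pi'$ via Duplicator's winning strategies in $\taude(R)$ games, and then to invoke the jumping-safety of $R$ to obtain the required finite paths $\pi_i$. I build $\pi'$ inductively, maintaining the invariant $t_i\ \taude(R)\ q_i$ with $t_0 := q_0 \in I$. By the invariant and the $\taude(R)$-chain from $\pi$, transitivity (Lemma~\ref{lem:taude_transitive}) gives $t_i\ \taude(R)\ \hat q_i$; Duplicator's winning strategy from $\pair{t_i}{\hat q_i}$ is obtained as the composition of three subgame strategies---the continuation of the earlier-level composition for $\pair{t_i}{q_i}$, and fresh winning strategies for $\pair{q_i}{q^F_i}$ and $\pair{q^F_i}{\hat q_i}$. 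Letting Spoiler play the trivial $R$-jump $\hat q_i\ R\ \hat q_i$ (valid by reflexivity) and the transition $\hat q_i \goesto{a_i} q_{i+1}$, Duplicator's composed response yields $\hat t_i, t_{i+1}$ with $t_i\ R\ \hat t_i$, $\hat t_i \goesto{a_i} t_{i+1}$, and the invariant $t_{i+1}\ \taude(R)\ q_{i+1}$ preserved.

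Setting $r^F_i := \hat t_i$ (valid by reflexivity of $R$), this produces the initial $R$-jumping path $\pi' = t_0\ R\ \hat t_0\ R\ \hat t_0 \goesto{a_0} t_1\ R\ \hat t_1\ R\ \hat t_1 \goesto{a_1} t_2 \cdots$. Applying the jumping-safety of $R$ to $\pi'$ yields initial finite paths $\pi_0, \pi_1, \ldots$ with $\last{\pi_i}\ R\ t_i$; combined with $R \subseteq \taude(R)$ and transitivity, this gives $\last{\pi_i}\ \taude(R)\ q_i$, which matches the first half of the definition of jumping-safety for $\taude(R)$.

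The hard step is transferring fairness: I must show that if $\pi$ is fair (i.e.\ $q^F_i \in F$ for infinitely many $i$) then so is $\pi'$, so that $R$'s jumping-safety delivers a coherent sequence. I view the entire construction as a single infinite play of a compound game in which the virtual subgames $V^{(j)}_3 := \pair{q^F_j}{\hat q_j}$ and $V^{(j)}_2 := \pair{q_j}{q^F_j}$ for every $j$ run concurrently, with $V^{(j)}_*$ gaining one round per real level $\geq j$. At real level $n$, the nested flow from outer Spoiler-side to inner Duplicator-side is $V^{(n)}_3, V^{(n)}_2, V^{(n-1)}_3, V^{(n-1)}_2, \ldots, V^{(0)}_3, V^{(0)}_2$, each virtual Duplicator response feeding the next virtual Spoiler move; the innermost $V^{(0)}_2$'s Duplicator response is exactly the real Duplicator response, so $\hat t_n$ coincides with Duplicator's jumped-to in $V^{(0)}_2$ at round $n+1$. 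In the limit every virtual game sees an infinite play, so its delayed winning condition becomes active. When $q^F_i \in F$, Duplicator's $R$-jump in $V^{(i)}_3$'s first round takes $q^F_i$ to some $u \in F$ (by $F$-respectingness of $R$); this $u$ is precisely Spoiler's jumped-to in $V^{(i)}_2$'s first round, so the delayed condition of $V^{(i)}_2$ forces Duplicator's jumped-to there to be in $F$ at some later round. That accepting state then appears as Spoiler's jumped-to in $V^{(i-1)}_3$, triggering its delayed condition, and the cascade continues through $V^{(i-1)}_2, V^{(i-2)}_3, \ldots, V^{(0)}_3, V^{(0)}_2$, each layer adding only a finite delay. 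The cascade ends with Duplicator's jumped-to in $V^{(0)}_2$ lying in $F$ at some round $K \geq i + 1$, i.e.\ $\hat t_{K-1} \in F$ with $K - 1 \geq i$. Iterating over the infinitely many $i$ with $q^F_i \in F$ yields witnesses $K - 1 \geq i$ that are therefore unbounded, so $\hat t_j \in F$ infinitely often and $\pi'$ is fair.

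The main obstacle is precisely this fairness propagation: delayed acceptance is vacuous on any single round, so the argument depends on viewing the level-by-level construction as one infinite concurrent play whose virtual sub-plays each satisfy their own delayed winning condition. $F$-respectingness of $R$ is the indispensable ingredient that converts an accepting $q^F_i$ into an accepting Spoiler jumped-to in $V^{(i)}_2$ and thereby kicks off the cascade at level $i$; the subsequent hops are pure applications of the delayed winning conditions of the inner virtual games.
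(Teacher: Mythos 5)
Your proposal is correct and follows essentially the same strategy as the paper's proof. Both arguments maintain an invariant $t_i\ \taude(R)\ q_i$ (the paper calls the left endpoint $s_i^0$), build it by composing Duplicator winning strategies for the chain of $\taude(R)$-games arising at each level (the paper displays this as a triangular ``diagram''), use $F$-respectingness to turn each accepting $q^F_i$ into an accepting Spoiler-proxy that kicks off a cascade through the finitely many delayed games alive at level $i$, and then apply jumping-safety of $R$ to the resulting fair $R$-jumping path $\pi'$. The only real difference is bookkeeping: you keep the two fresh games $\pair{q_i}{q^F_i}$ and $\pair{q^F_i}{\hat q_i}$ as separate entries in the composition chain, so your chain grows by two per level, whereas the paper merges $\pair{s_i^i}{q_i}$ and $\pair{q_i}{q^F_i}$ into one column-game $\pair{s_i^i}{q^F_i}$ via transitivity, so its chain grows by one per level; the cascade argument is unaffected since either chain has finite length at each level. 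Your phrasing of the fairness cascade in terms of virtual concurrent sub-plays $V^{(j)}_2, V^{(j)}_3$ is a nice explicit account of what the paper's right-to-left, top-to-bottom propagation through the diagram is doing implicitly.
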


\section{Proxy simulations}
\label{sec:proxy}

We apply the theory of transformers from Section~\ref{sec:transformers} to a specific $F$-respecting preorder,
namely backward direct simulation,
obtaining \emph{proxy simulations}.
Notice that proxy simulation-equivalent states need not have the same language;
yet, proxy simulations are GFQ (and computable in polynomial time).

\subsection{Direct proxy simulation}

Let \emph{direct proxy simulation}, written $\diproxysim$,
be defined as $\diproxysim := [\tau_1(\bwdisim)]^{-1}$.
%
\begin{theorem}
	$\diproxysim$ is a polynomial time GFQ preorder
 	at least as coarse as $(\bwdisim)^{-1}$.
\end{theorem}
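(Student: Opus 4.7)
The plan is to verify each assertion by reducing it to the infrastructure already developed. I will unpack the definition $\diproxysim = [\tau_1(\bwdisim)]^{-1}$ and first check the three preconditions needed for the general theory in Section~\ref{sec:transformers} to apply with $R := \bwdisim$: by definition $\bwdisim$ is a preorder, by clause 2) of its definition it is $F$-respecting, and by Fact~\ref{fact:bwdisim_jumping_safe} it is jumping-safe and computable in polynomial time.

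Next I would argue that $\tau_1(\bwdisim)$ is an appealing, improving fragment of $\tau_0(\bwdisim)$, relying on the remark immediately after Lemma~\ref{lem:tau1_maximal} that $\tau_1(R)$ is always appealing and that $\tau_1$ is improving on transitive $R$. Hence $\bwdisim \subseteq \tau_1(\bwdisim) \subseteq \tau_0(\bwdisim)$, and reflexivity of $\tau_1(\bwdisim)$ is inherited from the reflexivity of $\bwdisim$ together with this inclusion, while transitivity is built into appealing fragments; so $\tau_1(\bwdisim)$ is a preorder. Taking inverses preserves preorders, giving that $\diproxysim$ is a preorder. The comparison with $(\bwdisim)^{-1}$ is immediate: inverting $\bwdisim \subseteq \tau_1(\bwdisim)$ yields $(\bwdisim)^{-1} \subseteq \diproxysim$.

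For the GFQ claim I would chain Theorem~\ref{thm:tau0_GFQ} and Theorem~\ref{thm:Upsilon_is_GFQ}: with $R = \bwdisim$ and $T = \tau_1(\bwdisim)$, the hypotheses of Theorem~\ref{thm:tau0_GFQ} are met, so $\tau_1(\bwdisim)$ is jumping-safe, and hence GFQ. The one step that needs a small argument is that GFQ is preserved under taking inverses; this is because the induced equivalence $\approx_R = R^* \cap (R^*)^{-1}$ coincides with $\approx_{R^{-1}}$, so the quotient automaton is the same, and therefore $\diproxysim = [\tau_1(\bwdisim)]^{-1}$ is GFQ.

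Finally, for the complexity claim, I would note that $\bwdisim$ is computable in polynomial time by Fact~\ref{fact:bwdisim_jumping_safe}, and that once $\bwdisim$ is in hand the $\tau_1(\bwdisim)$ simulation game is played on $O(|Q|^2)$ Spoiler positions and $O(|Q|^3 |\Sigma|)$ Duplicator positions with a direct (safety) winning condition, so computing its winning region takes polynomial time by standard fixpoint iteration. Inversion is free. I expect the only subtle point to be the inverse-preservation of GFQ, which I would spell out explicitly using the symmetry of the induced equivalence; everything else is a direct application of results already in place.
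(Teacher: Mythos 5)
Your proposal is correct and takes essentially the same approach the paper leaves implicit: the theorem is a direct assembly of $\bwdisim$ being a jumping-safe $F$-respecting preorder computable in polynomial time (Fact~\ref{fact:bwdisim_jumping_safe}, clause~2 of the definition), $\tau_1(\bwdisim)$ being an appealing improving fragment of $\tau_0(\bwdisim)$ (Lemmas~\ref{lem:tau1_transitive} and~\ref{lem:tau1_increasing}), the chain through Theorems~\ref{thm:tau0_GFQ} and~\ref{thm:Upsilon_is_GFQ}, and the polynomial-time safety game for $\tau_1$. Your explicit observation that inverting a relation leaves the induced equivalence $\approx_R = R^* \cap (R^*)^{-1}$ unchanged (and hence preserves GFQ) is a small but genuine gap in the paper's exposition that you filled correctly.
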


\paragraph{Proxies vs mediators.}

Direct proxy simulation and mediated preorder \cite{mediating:fsttcs2009} are in general incomparable.
While proxy simulation is at least as coarse as backward direct simulation,
mediated preorder is at least as coarse as \emph{forward} direct simulation.
(We have seen in Section~\ref{sec:transformers} that this is somehow unavoidable,
since one cannot hope to generalize simultaneously both forward and backward simulation.)

One notable difference between the two notions is that proxies are ``dynamic'', while mediators are ``static'':
While Dupicator chooses the proxy only \emph{after} Spoiler has selected her move,
mediators are chosen uniformly w.r.t. Spoiler's move.

\begin{figure}
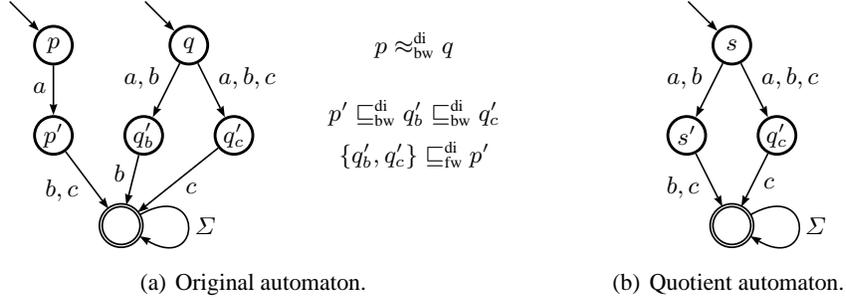

	
	\centering
	\subfigure[Original automaton.] {
	
		\label{fig:direct_proxy_simulation_a}
	
		\VCDraw{
			\begin{VCPicture}{(-1,-1)(10,4)}
	
			\State[p]{(0,4)}{P} \Initial[nw]{P}
			\State[p']{(0,2)}{P'}
			\FinalState[]{(1.5,0)}{P''}
		
			\EdgeR{P}{P'}{a}
			\EdgeR{P'}{P''}{b,c}
			\LoopE[0.5]{P''}{\Sigma}

			\State[q]{(3,4)}{Q} \Initial[nw]{Q}
			\State[q'_b]{(2,2)}{Q'b}
			\State[q'_c]{(4,2)}{Q'c}

			\EdgeR{Q}{Q'b}{a,b}
			\EdgeL{Q}{Q'c}{a,b,c}
			\EdgeR{Q'b}{P''}{b}
			\EdgeL{Q'c}{P''}{c}

			\ChgStateLineStyle{none}
			\State[p \bwdisimeq q ]{(8,4)}{lbl}
			\State[\begin{array}{cc}p' \bwdisim q'_b \bwdisim q'_c \\ \rule{0pt}{3ex}\{q'_b, q'_c\} \fwdisim p' \end{array}]{(8,2)}{lbl}
		
			\RstStateLineStyle
			\end{VCPicture}
		}
	}
	\qquad\qquad
	\subfigure[Quotient automaton.] {
	
		\label{fig:direct_proxy_simulation_b}
	
		\VCDraw{
			\begin{VCPicture}{(-1.5,-1)(3.5,4)}

			\State[s]{(1,4)}{Q} \Initial[nw]{Q}
			\State[s']{(0,2)}{Q'b}
			\State[q'_c]{(2,2)}{Q'c}
			\FinalState[]{(1,0)}{P''}
		
			\LoopE[0.5]{P''}{\Sigma}

			\EdgeR{Q}{Q'b}{a,b}
			\EdgeL{Q}{Q'c}{a,b,c}
			\EdgeR{Q'b}{P''}{b,c}
			\EdgeL{Q'c}{P''}{c}

			\end{VCPicture}
		}
	}
	
	\caption{
	Direct proxy simulation quotients.
	}
	\label{fig:direct_proxy_simulation}
	
	\vspace{-4ex}
	
\end{figure}

In Figure~\ref{fig:direct_proxy_simulation_a} we show a simple example where $\diproxysim$ achieves greater reduction.
Recall that mediated preorder $M$ is always a subset of $\fwdisim \! \circ \bwdisimrev$ \cite{mediating:fsttcs2009}.
In the example, static mediators are just the trivial ones already present in forward simulation.
Thus, $\fwdisim \! \circ \bwdisimrev = \fwdisim$
and mediated preorder $M$ collapses to forward simulation.
On the other side, $p \diproxysimeq q$ and $p' \diproxysimeq q'_b$.
Letting $s = [p,q]$ and $s' = [p',q'_b]$,
we obtain the quotient in Figure~\ref{fig:direct_proxy_simulation_b}.


\subsection{Delayed proxy simulation}

Another difference between the mediated preorder approach \cite{mediating:fsttcs2009} and the approach through proxies
is that proxies directly enable a delayed simulation-like generalization (see Section~\ref{sec:taude}).
Again, we fix backward delayed simulation $\bwdisim$ as a starting refinement,
and we define \emph{delayed proxy simulation} as $\deproxysim := [\taude(\bwdisim)]^{-1}$.

\begin{wrapfigure}{r}{0.45\textwidth}
	\centering
	\VCDraw{
		\begin{VCPicture}{(0.5,-1.5)(5.5,5)}

		\LargeState

		\FinalState[q_0]{(3,6)}{q0} \Initial[nw]{q0}
		\State[q_{k\text{-}1}]{(0,4)}{qk1}
		\State[s]{(3,2.5)}{s}
		\State[q_1]{(6,4)}{q1}
		\State[q_2]{(6,1)}{q2}
		\State[q_3]{(3,-1)}{q3}
		\ChgStateLineStyle{none}
		\State[\rotatebox{-15}{$\ddots$}]{(0,1)}{dots}

		\LoopN[0.82]{q0}{b}
		\ArcL{q0}{q1}{a}
		\ArcL{q1}{q2}{a}
		\ArcL{q2}{q3}{a}
		\ArcL{q3}{dots}{a}
		\ArcL{dots}{qk1}{a}
		\ArcL{qk1}{q0}{a,b}

		\ChgCLoopAngle{30}
		\CLoopSW[0.6]{s}{b}

		\ArcL{s}{q0}{b}
		\ArcL{q0}{s}{a}
		\ArcL{s}{q1}{b}
		\ArcL{q1}{s}{a}
		\ArcL{s}{q2}{b}
		\ArcL{q2}{s}{a}
		\ArcL{s}{q3}{b}
		\ArcL{q3}{s}{a}
		\ArcL{s}{qk1}{b}
		\ArcL{qk1}{s}{a}


		\end{VCPicture}
	}
	\caption{
	}
	\label{fig:dexysim_better}
	
	\vspace{-10ex}
	
\end{wrapfigure}
\begin{theorem}
	$\deproxysim$ is a polynomial time GFQ preorder. 
\end{theorem}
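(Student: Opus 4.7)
The plan is to instantiate the general machinery of Section~\ref{sec:transformers} with $R := \bwdisim$ as the base refinement, and then unpack why the resulting relation satisfies all three properties claimed (preorder, GFQ, polynomial-time).

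First, I would verify that $\bwdisim$ meets the hypotheses of Theorem~\ref{thm:taude_GFQ}. By definition of direct backward simulation (clauses 2 and 3 above Fact~\ref{fact:bwdisim_jumping_safe}), $q \bwdisim s$ forces $s \in F$ whenever $q \in F$, so $\bwdisim$ is $F$-respecting; it is a preorder as usual for simulation relations, and Fact~\ref{fact:bwdisim_jumping_safe} guarantees it is jumping-safe. Hence Theorem~\ref{thm:taude_GFQ} applies and yields that $\taude(\bwdisim)$ is a jumping-safe relation, which by Theorem~\ref{thm:Upsilon_is_GFQ} is GFQ.

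Next I would check that $\taude(\bwdisim)$ is a preorder. Transitivity is immediate from Lemma~\ref{lem:taude_transitive}. For reflexivity, given a starting position $\pair{s}{s}$, Duplicator plays the copycat strategy: after Spoiler picks proxy $\hat q$ with $s \bwdisim \hat q$ and transition $\hat q \goesto a q'$, Duplicator responds with the same proxy $\hat s := \hat q$ (legal since $s \bwdisim \hat q$) and the same transition, reaching $\pair{q'}{q'}$. The delayed winning condition $\hat q_i \in F \Rightarrow \hat s_i \in F$ holds trivially because $\hat s_i = \hat q_i$ along every play. Since $\taude(\bwdisim)$ is a preorder, so is its inverse $\deproxysim$, and because the equivalence induced by a preorder coincides with the equivalence induced by its inverse (both equal $\taude(\bwdisim) \cap [\taude(\bwdisim)]^{-1}$), the quotient automaton is the same, so GFQ transfers from $\taude(\bwdisim)$ to $\deproxysim$.

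For polynomial-time computability, I would chain two polynomial steps: compute $\bwdisim$ in polynomial time (Fact~\ref{fact:bwdisim_jumping_safe}), then compute $\taude(\bwdisim)$ in polynomial time by the argument referenced in Section~\ref{sec:taude} (Appendix~\ref{app:computing_taude}); inverting the resulting relation is trivially polynomial. Finally, I would briefly note coarseness: since $\bwdisim \subseteq \tau_1(\bwdisim) \subseteq \taude(\bwdisim)$, inverting gives that $\deproxysim$ is at least as coarse as $\diproxysim$, and in particular can be strictly coarser by the example in Figure~\ref{fig:dexysim_better}. The only subtlety to dwell on is the invocation of Theorem~\ref{thm:taude_GFQ} — everything else is bookkeeping — so the main conceptual weight sits in the already-established jumping-safety invariant of the delayed transformer, not in this theorem itself.
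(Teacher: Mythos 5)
Your proof is correct and follows exactly the route the paper intends: instantiate the transformer machinery with $R := \bwdisim$, check that $\bwdisim$ is a jumping-safe, $F$-respecting preorder, apply Theorem~\ref{thm:taude_GFQ} and Theorem~\ref{thm:Upsilon_is_GFQ} for GFQ, Lemma~\ref{lem:taude_transitive} plus a reflexivity check for the preorder property, and Appendix~\ref{app:computing_taude} plus Fact~\ref{fact:bwdisim_jumping_safe} for polynomial time. Two small remarks: reflexivity follows directly from Lemma~\ref{lem:taude_increasing} (since $\bwdisim$ is transitive and reflexive, $\mathrm{id} \subseteq \bwdisim \subseteq \taude(\bwdisim)$), so your copycat argument, while sound, is not strictly needed; and your observation that inverting the relation leaves the induced equivalence---and hence the quotient---unchanged is the right bridge from $\taude(\bwdisim)$ being GFQ to $\deproxysim$ being GFQ, and is worth keeping explicit since the paper is silent on it.
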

Notice that delayed proxy simulation is at least as coarse as direct proxy simulation.
Moreover, quotients w.r.t. $\deproxysim$ can be smaller than direct forward/backward/proxy and delayed simulation quotients by an arbitrary large factor. See Figure~\ref{fig:dexysim_better}:
Forward delayed simulation is just the identity,
and no two states are direct backward or proxy simulation equivalent.
But $q_i \bwdisim s$ for any $0 \!<\! i \!\leq\! k - 1$.
This causes any two outer states $q_i, q_j$ to be $\deproxysim$-equivalent.
Therefore, the $\deproxysim$-quotient automaton has only 2 states.

\section{Conclusions and Future Work}

We have proposed novel refinements for quotienting B\"uchi automata:
fixed-word delayed simulation and direct/delayed proxy simulation.
Each one has been shown to induce quotients smaller than previously known notions.

We outline a few directions for future work.
First, we would like to study practical algorithms for computing fixed-word delayed simulation,
and to devise efficient fragments thereof---one promising direction is to look at self-respecting fragments,
which usually have lower complexity.
Second, we would like to exploit the general correctness argument developed in Section~\ref{sec:correctness}
in order to get efficient purely backward refinements (coarser than backward direct simulation).
Finally, experiments on cases of practical interest are needed for an empirical evaluation of the proposed techniques.

{\bf\noindent Acknowledgment.}
We thank Richard Mayr and Patrick Totzke for helpful discussions,
and two anonymous reviewers for their valuable feedback.

\bibliography{citeulike}

\begin{thebibliography}{10}
\providecommand{\url}[1]{\texttt{#1}}
\providecommand{\urlprefix}{URL }

\bibitem{mediating:fsttcs2009}
{A}bdulla, P., {C}hen, Y.F., {H}olik, L., {V}ojnar, T.: {M}ediating for
  {R}eduction. In: {FSTTCS}. pp. 1--12. {S}chloss
  {D}agstuhl--{L}eibniz-{Z}entrum fuer {I}nformatik (2009)

\bibitem{aziz:minimizing}
Aziz, A., Singhal, V., Swamy, G.M., Brayton, R.K.: {Minimizing Interacting
  Finite State Machines}. Tech. Rep. UCB/ERL M93/68, UoC, Berkeley (1993)

\bibitem{multipebble:concur10}
Clemente, L., Mayr, R.: {Multipebble Simulations for Alternating Automata -
  (Extended Abstract)}. In: CONCUR. LNCS, vol. 6269, pp. 297--312.
  Springer-Verlag (2010), \url{http://dx.doi.org/10.1007/978-3-642-15375-4\_21}

\bibitem{DHW:simulation}
Dill, D.L., Hu, A.J., Wont-Toi, H.: {Checking for Language Inclusion Using
  Simulation Preorders}. In: CAV. LNCS, vol. 575. Springer-Verlag (1991),
  \url{http://dx.doi.org/10.1007/3-540-55179-4\_25}

\bibitem{etessami:hierarchy02}
Etessami, K.: {A Hierarchy of Polynomial-Time Computable Simulations for
  Automata}. In: CONCUR. LNCS, vol. 2421, pp. 131--144. Springer-Verlag (2002),
  \url{http://dx.doi.org/10.1007/3-540-45694-5\_10}

\bibitem{etessami:etal:fairsimulations:05}
Etessami, K., Wilke, T., Schuller, R.A.: {Fair Simulation Relations, Parity
  Games, and State Space Reduction for B\"{u}chi Automata}. SIAM J. Comput.
  34(5),  1159--1175 (2005),
  \url{http://epubs.siam.org/sam-bin/dbq/article/42067}

\bibitem{wilke:fritz:simulations:05}
Fritz, C., Wilke, T.: {Simulation Relations for Alternating B\"{u}chi
  Automata}. Theor. Comput. Sci.  338(1-3),  275--314 (2005),
  \url{http://dx.doi.org/10.1016/j.tcs.2005.01.016}

\bibitem{gramlich:minimizing:journal:2007}
Gramlich, G., Schnitger, G.: {Minimizing NFA's and Regular Expressions}.
  Journal of Computer and System Sciences  73(6),  908--923 (2007),
  \url{http://www.sciencedirect.com/science/article/B6WJ0-4MMP64W-1/2/769c421c%
0456a3f502681ab6a5d1c329}

\bibitem{fairminimization:02}
Gurumurthy, S., Bloem, R., Somenzi, F.: {Fair Simulation Minimization}. In:
  CAV. LNCS, vol. 2404, pp. 610--624. Springer-Verlag (2002),
  \url{http://portal.acm.org/citation.cfm?id=647771.734423}

\bibitem{fairsimulation:02}
Henzinger, T.A., Kupferman, O., Rajamani, S.K.: {Fair Simulation}. Information
  and Computation  173,  64--81 (2002),
  \url{http://dx.doi.org/10.1006/inco.2001.3085}

\bibitem{fairbisimulation:00}
Henzinger, T.A., Rajamani, S.K.: {Fair Bisimulation}. In: TACAS. LNCS, vol.
  1785, pp. 299--314. Springer-Verlag (2000),
  \url{http://portal.acm.org/citation.cfm?id=646484.691752}

\bibitem{ravikumar:hard:1991}
Jiang, T., Ravikumar, B.: {Minimal NFA Problems are Hard}. In: Albert, J.,
  Monien, B., Artalejo, M. (eds.) Automata, Languages and Programming, Lecture
  Notes in Computer Science, vol. 510, pp. 629--640. Springer Berlin /
  Heidelberg (1991), \url{http://dx.doi.org/10.1007/3-540-54233-7\_169}

\bibitem{piterman:generalized06}
Juvekar, S., Piterman, N.: {Minimizing Generalized B\"{u}chi Automata}. In:
  CAV. LNCS, vol. 4414, pp. 45--58. Springer-Verlag (2006),
  \url{http://dx.doi.org/10.1007/11817963\_7}

\bibitem{kupfermanvardi:fair_verification}
Kupferman, O., Vardi, M.: {Verification of Fair Transition Systems}. In: CAV,
  LNCS, vol. 1102, pp. 372--382. Springer-Verlag (1996),
  \url{http://citeseer.ist.psu.edu/viewdoc/summary?doi=10.1.1.29.9654}

\bibitem{KV01:weak}
Kupferman, O., Vardi, M.: {Weak Alternating Automata Are Not That Weak}. ACM
  Trans. Comput. Logic  2,  408--429 (Jul 2001),
  \url{http://dx.doi.org/10.1145/377978.377993}

\bibitem{fwbw_simulations1995}
Lynch, N.A., Vaandrager, F.W.: {Forward and Backward Simulations. Part I:
  Untimed Systems}. Information and Computation  121(2),  214--233 (1995),
  \url{http://citeseer.ist.psu.edu/viewdoc/summary?doi=10.1.1.12.3241}

\bibitem{Milner:newbook}
Milner, R.: {Communication and Concurrency}. Prentice-Hall (1989)

\bibitem{raimi:phdthesis}
Raimi, R.S.: {Environment Modeling and Efficient State Reachability Checking}.
  Ph.D. thesis, The University of Texas at Austin (1999)

\bibitem{schewe:DBA:minimization}
Schewe, S.: {Beyond Hyper-Minimisation---Minimising DBAs and DPAs is
  NP-Complete}. In: Lodaya, K., Mahajan, M. (eds.) FSTTCS. LIPIcs, vol.~8, pp.
  400--411. Schloss Dagstuhl--Leibniz-Zentrum fuer Informatik, Dagstuhl,
  Germany (2010), \url{http://drops.dagstuhl.de/opus/volltexte/2010/2881}

\bibitem{efficientltl:2000}
Somenzi, F., Bloem, R.: {Efficient B\"{u}chi Automata from LTL Formulae}. In:
  CAV, LNCS, vol. 1855, pp. 248--263. Springer-Verlag (2000),
  \url{http://dx.doi.org/10.1007/10722167\_21}

\bibitem{vardi:alternating}
Vardi, M.: {Alternating Automata and Program Verification}. In: Computer
  Science Today, LNCS, vol. 1000, pp. 471--485. Springer-Verlag (1995),
  \url{http://dx.doi.org/10.1007/BFb0015261}

\end{thebibliography}

\newpage
\appendix

\section{Proofs and additional material for Section~\ref{sec:fw}}
\label{app:fw}

\begin{figure}
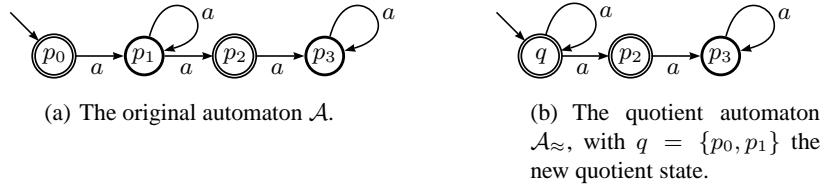

	
	\centering
	\subfigure[The original automaton $\NBA A$.] {
	
		\VCDraw{
			\begin{VCPicture}{(0,-1)(6,1)}
		
			\FinalState[p_0]{(0,0)}{P0} \Initial[nw]{P0}
			\State[p_1]{(2,0)}{P1}
			\FinalState[p_2]{(4,0)}{P2}
			\State[p_3]{(6,0)}{P3}
	
			\EdgeR{P0}{P1}{a}
			\LoopNE[0.6]{P1}{a}
			\EdgeR{P1}{P2}{a}
			\EdgeR{P2}{P3}{a}
			\LoopNE[0.6]{P3}{a}
	
			\end{VCPicture}
		}
	
		\label{fig:de-cont-not-GFQ-orig}
	
	}
	\qquad\qquad\qquad\qquad
	\subfigure[The quotient automaton $\quot{\NBA A}{\approx}$, with $q = \{p_0, p_1\}$ the new quotient state.] {
		\VCDraw{
			\begin{VCPicture}{(0,-1)(6,1)}
	
			\FinalState[q]{(0,0)}{P0} \Initial[nw]{P0}
			\FinalState[p_2]{(2,0)}{P2}
			\State[p_3]{(4,0)}{P3}

			\LoopNE[0.6]{P0}{a}
			\EdgeR{P0}{P2}{a}
			\EdgeR{P2}{P3}{a}
			\LoopNE[0.6]{P3}{a}

			\end{VCPicture}
		}
		\label{fig:de-cont-not-GFQ-quot}
	}
	\caption{
	An example showing that delayed containment cannot be employed for quotienting.
	We have that $p_0$ is delayed containment equivalent to $p_1$.
	Notice that the automaton $\NBA A$ in \subref{fig:de-cont-not-GFQ-orig} does not accept $a^\omega$,
	but the quotient automaton $\quot{\NBA A}{\approx}$ in \subref{fig:de-cont-not-GFQ-quot},
	obtained by identifying $p_0$ and $p_1$,
	does.
	\ignore{
	Incidentally,
	$p_0$ is not fixed-word delayed simulation equivalent to $p_1$ 
	(as it is easy to check).
	Thus, the example also shows that the inclusion between fixed-word simulation and containment is, in general, strict.
	\newline
	FIXME: moreover, the same example also shows that de-containment cannot be used in the Ramsey-based approach of the CAV paper...
	}
	}
	
	\label{fig:de-cont-not-GFQ}
	
\end{figure}

We postpone the proof of Theorem~\ref{thm:fx_de_sim-GFQ} until Section~\ref{app:fx_de_sim-GFQ}.

\begin{figure}
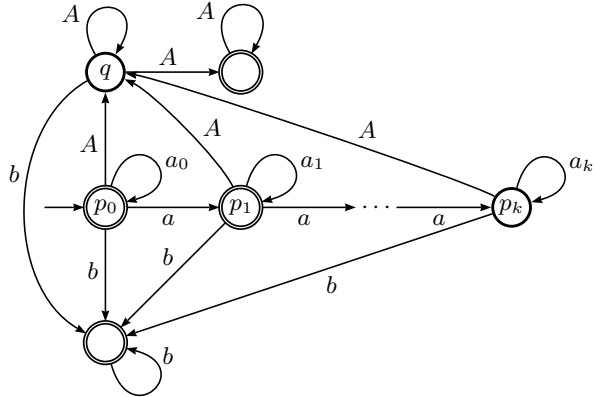

	
	\centering
	
		\VCDraw{
			\begin{VCPicture}{(0,-2)(9,6)}


			\FinalState{(0,-1)}{Q1}

			\FinalState[p_0]{(0,2)}{P0} \Initial[w]{P0}
			\FinalState[p_1]{(3,2)}{P1}
			\ChgStateLineStyle{none}
			\State[\cdots]{(6,2)}{dots}
			\RstStateLineStyle
			\State[p_k]{(9,2)}{Pk}

			\State[q]{(0,5)}{Q}
			\FinalState{(3,5)}{Q2}

			\LoopNE[0.6]{P0}{a_0}
			\EdgeR{P0}{P1}{a}
			\EdgeL{P0}{Q}{A}
			\EdgeR{P0}{Q1}{b}
			
			\LoopNE[0.6]{P1}{a_1}
			\EdgeR{P1}{dots}{a}
			\VArcR{arcangle=-20,ncurv=0.4}{P1}{Q}{A}
			\EdgeR{P1}{Q1}{b}
			
			\EdgeR{dots}{Pk}{a}
			
			\LoopNE[0.6]{Pk}{a_k}
			\VArcR{arcangle=-10,ncurv=0.2}{Pk}{Q}{A}
			\EdgeL{Pk}{Q1}{b}
			
			\LoopN{Q}{A}
			\EdgeL{Q}{Q2}{A}
			\VArcR{arcangle=-60, ncurv=0.75}{Q}{Q1}{b}
			
			\LoopN{Q2}{A}
			
			\LoopSE[0.78]{Q1}{b}

			\end{VCPicture}
		}
	
	\caption{
	Fixed-word delayed simulaton quotients can achieve arbitrarily high compression ratios.
	}
	
	\label{fig:fx-de-quot-succinct}
	
\end{figure}

\subsection{Alternating B\"uchi automata}
\label{app:ABA}

Below, we give a self-contained definition of alternating B\"uchi automata.
The syntax follows the presentation of \cite{vardi:alternating},
while tbe semantics adheres to \cite{wilke:fritz:simulations:05}.

For a set $A$, let $\mathcal B^+(A)$ be the set of positive boolean formulas over $A$,
that is, $\mathcal B^+(A)$ is the smallest set containing $A \cup \{\true, \false \}$ and closed under the operations $\wedge$ and $\vee$.
For a formula $\varphi \in \mathcal B^+(A)$ and a set $X \subseteq A$,
we write $X \models \varphi$ iff the truth assignment assigning $\true$ to elements in $X$ and $\false$ to the elements in $A \setminus X$
satisfies $\varphi$.
An alternating B\"uchi automaton (ABA) is a tuple $\ABA A = \ABAtuple$,
where $A$ is a finite set of states,
$\Sigma$ is a finite set of input symbols,
$\delta: A \times \Sigma \mapsto \mathcal B^+(A)$ is the transition relation
and $\alpha \subseteq A$ is the set of accepting states.
Acceptance of an ABA $\ABA A$ is best defined via games \cite{wilke:fritz:simulations:05}. 
In this context, the two players are usually named Automaton and Pathfinder.
Given an infinite word $w = a_0 a_1 \cdots \in \Sigma^\omega$ and a distinguished starting state $p_I$,
the \emph{acceptance game for $w$ from $p_I$} is a game where
$P_0 = Q \times \omega$ is the set of Automaton's positions,
$P_1 = Q \times 2^Q \times \omega$ is the set of Pathfinder's positions,
$(p_I, 0)$ is the initial position,
and transitions are determined as follows.
Automaton can select a transition $(\pair p i, \triple p {\set p'} i)$
iff $\set p' \models \delta(p, a_i)$,
and Pathfinder can select a transition $(\triple p {\set p'} i, \pair {p'} {i+1})$
iff $p' \in \set p'$.
Finally, the winning condition consists of those paths visiting $\alpha$ infinitely often.
A state $p \in A$ accepts $w \in \Sigma^\omega$ iff Automaton wins the acceptance game for $w$ from $p$.
A state $p$ is universal iff it accepts every word $w \in \Sigma^\omega$.

\subsection{Proof of Theorem~\ref{thm:fx_sim_equiv}}
\label{app:fx_sim_equiv}

\paragraph{Preliminaries on ordinals.}

Let $\omega$ be the least infinite ordinal,
and let $\omega_1$ be the set of all countable ordinals.
We denote abitrary ordinals by $\alpha$ or $\beta$,
and limit ordinals by $\lambda$ or $\mu$.
In this paper, $0$ is considered to be a limit ordinal.

\paragraph{Preliminaries on trees.}

Let $[n] = \{ 0, 1, \dots, n - 1 \}$.
A tree domain is a non-empty, prefix-closed subset $V$ of $[n]^*$.
With $\prefixle$ we denote the prefix order on words;
if $u \prefixle u'$, then $u'$ is called a descendant of $u$
and $u$ is an ancestor of $u'$.
In particular, if $u' = uc$ for some $c\in\Nat$,
then $u'$ is a child of $u$.
A (labelled) $L$-tree is a pair $(V, t)$, where $V$ is a tree domain and $t : V \mapsto L$
is a mapping which assigns a label from $L$ to any node in the tree.

\paragraph{The ranking construction.}

Len $\NBA{Q} = \NBAtuple$ be an automaton,
and let $n$ be the cardinality of $Q$.
Given an infinite word $w=a_0a_1\cdots\in\Sigma^\omega$,
we associate to any state $q \in Q$ a tree domain $T^w_q$ and a $Q$-tree $(T^w_q, t^w_q)$,
the \emph{unravelling} of $Q$ from $q$ while reading $w$,
by applying the following two rules:
\begin{itemize}
	\item $\varepsilon \in T^w_q$ and $t^w_q(\epsilon) = q$.
	\item If $u$ has length $i$, $u\in T^w_q$, $t^w_q(u) = p$ and $\Delta(p, a_i) = \{ p'_0, p'_1, \dots, p'_{k-1} \}$,
	then, for any $j$ s.t. $0 \leq j < k$, $uj \in T^w_q$ and $t^w_q(uj) = p'_j$.
\end{itemize}
%

It is easy to see that if two nodes at the same level have the same label,
then they generate isomorphic subtrees.
Therefore, we can ``compress'' $(T^w_q, t^w_q)$ into an infinite DAG $G^w_q=(V, E)$, where
$V \subseteq Q \times \Nat$ is such that $\pair q l \in V$ iff there exists a node in $(T^w_q, t^w_q)$ at level $l$ with label $q$,
and $(\pair q l, \pair {q'} {l+1}) \in E$ iff there exist two nodes $u$ and $u'$,
labelled with $q$ and $q'$, respectively,
s.t. $u'$ is a child of $u$ in $(T^w_q, t^w_q)$.
We say that a vertex $\pair q l$ is \emph{accepting} iff $q\in F$.

For any $G \subseteq G^w_q$,
we say that a vertex $\pair q l$ is a \emph{dead end} in $G$ iff it has no successor in $G$,
and we say that it is \emph{inert} in $G$ iff no accepting vertex can be reached from $\pair q l$ in $G$.
In particular, an inert vertex is not accepting.
The \emph{girth} of $G$ at level $l$ is the maximal number of vertices of the form $\pair q l$ in $G$,
and the \emph{width} of $G$ is the maximal girth over infinitely many levels.

We build a nonincreasing transfinite sequence of DAGs $\{ G_\alpha \st \alpha < \omega_1 \}$ as follows:
\begin{align*}
	G_0 &= G^w_q \\
	G_{\alpha+1} &= G_\alpha\ \setminus\ \{ \pair q l \st \pair q l \textrm{ is a dead end in } G_\alpha \} \\
	G_\lambda &= H_\lambda \ \setminus\ \{ \pair q l \st \pair q l \textrm{ is inert in } H_\lambda \},
\end{align*}
where, for any ordinal $\alpha$,
$H_\alpha = \bigcap_{\beta < \alpha} G_\beta$.
Notice that $H_{\alpha + 1} = G_\alpha$;
and $\alpha \leq \beta$ implies $G_\beta \subseteq G_\alpha$.

Assume that there is no path in $G^w_q$ with an infinite number of accepting vertices.
As a direct consequence of K\"onig's Lemma,
we have that when moving from $H_\lambda$ to $G_\lambda$
an infinite path is removed from the graph.
Therefore, the width of $G_\lambda$ is strictly less than the width of $H_\lambda$.
Since the width of $G^w_q$ is (uniformly) bounded by $\omega$,
it follows that $H_{\omega^2}$ is empty,
and thus $G_{\omega^2}$ is empty as well.
Therefore, each vertex is either a dead end in $G_\alpha$ or inert in $H_\lambda$.
In the former case $\pair q l$ is in $G_\alpha$ but not in $G_{\alpha+1}$,
whereas in the latter case $\pair q l$ is in $H_\lambda$ but not in $G_\lambda$.
Accordingly, we associate an ordinal \emph{rank} to every vertex $\pair q l$ in $G^w_q$:
%
\begin{align}\tag*{$(\mathsf{Rank})$}\label{eq:rank}
	\rank {q, l} = \sup_{\alpha < \omega^2} \{ \alpha \st \pair q l \in H_\alpha \} \ . 
\end{align}

Therefore, under the assumption that $G^w_q$ does not contain any fair path,
no vertex receives rank $\omega^2$.
On the other side, if $G^w_q$ contained a fair path,
then there exists an infinite path of non-inert vertices starting at $\pair q 0$:
In this case, the ranking construction ``does not terminate''
and stabilizes (at most) at a nonempty $G_{\omega^2} = G_\alpha \neq \emptyset$ for all $\alpha \geq \omega^2$.
Thus, vertices in $G_{\omega^2}$ would receive rank $\omega^2$ according to \ref{eq:rank}.
Since no conflict can arise,
we drop any assumption about fair paths thereafter,
and we uniformly apply \ref{eq:rank} in either case.

\begin{remark}
	It is clear from \ref{eq:rank} that no ordinal larger than $\omega^2$ is actually used in our construction.
	In fact, we could have given an equivalent presentation in terms of pairs of natural numbers ordered lexicographically.
	However, we have chosen to use ordinals $\leq\omega^2$ for technical convenience.
\end{remark}

\begin{remark}
	A vertex $\pair q l$ is in $H_\alpha$ iff it has rank $\geq \alpha$,
	and it is in not in $G_\alpha$ iff it has rank $\leq \alpha$.
	Therefore, $\rank {q, l} = \alpha \iff \pair q l \in H_\alpha \setminus G_\alpha$.
	
	
\end{remark}

\begin{lemma}\label{lem:rank_accepting}
	If a vertex $\pair q l$ is accepting, then it has rank $\alpha + 1$.
	Furthermore, if it has rank $\lambda + 1$, then it is accepting.
\end{lemma}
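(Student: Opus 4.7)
The plan is to reduce both claims to an analysis of the transfinite removal process at successor versus limit steps. The key tool is the characterisation $\rank{q,l} = \alpha$ iff $\pair q l \in H_\alpha \setminus G_\alpha$: at a successor step $\alpha + 1$, since $H_{\alpha+1} = G_\alpha$, this set is exactly the dead ends of $G_\alpha$; at a limit step $\lambda$, by the construction of $G_\lambda$ it is exactly the inert vertices of $H_\lambda$. I would also note at the outset that, since $\Delta$ is total, every vertex in $G_0 = G^w_q$ has an outgoing edge and so $G_0$ has no dead ends; consequently no vertex has rank $0$ or $1$.

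For the first part, suppose $\pair q l$ is accepting. The key observation is that an accepting vertex is never inert in any subgraph that contains it, because it trivially reaches itself. This is precisely the contrapositive of the paper's remark that inert vertices are not accepting. Hence $\pair q l$ is never removed at a limit step of the construction, so its rank cannot be a nonzero limit ordinal. Since $0$ is already excluded, the rank must be a successor $\alpha + 1$.

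For the second part, fix a limit $\lambda$ and suppose $\rank{q,l} = \lambda + 1$. The case $\lambda = 0$ is vacuous by the observation above. Otherwise $\pair q l \in G_\lambda$ but is a dead end in $G_\lambda$. Being in $G_\lambda$ means $\pair q l$ is not inert in $H_\lambda$, so there is a path $v_0 = \pair q l, v_1, \ldots, v_k$ in $H_\lambda$ with $v_k$ accepting. The crucial step is a propagation argument: every suffix $v_i, \ldots, v_k$ witnesses that $v_i$ is not inert in $H_\lambda$, so each $v_i$ survives into $G_\lambda$ and the whole path lies in $G_\lambda$. But $\pair q l$ is a dead end in $G_\lambda$, forcing $k = 0$, whence $\pair q l = v_k$ is accepting.

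The main obstacle I anticipate is isolating the propagation step in the second part: one has to recognise that inertness is preserved along witnessing paths, so removing inert vertices cannot disconnect the path from $\pair q l$ to an accepting vertex. Once this is formalised, both parts follow cleanly from the structure of the transfinite ranking.
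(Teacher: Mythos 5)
Your proof is correct, and the second part takes a genuinely different route from the paper's. For the first claim, both you and the paper reduce to the observation that an accepting vertex is never inert (it reaches itself), hence never removed at a limit step; you add the observation that totality of $\Delta$ rules out rank $0$ and $1$, which is used to exclude rank $0$ so that what remains must be a successor. (A small imprecision: rank $0$ is impossible simply because every vertex of the DAG lies in $G_0 = H_1$, independent of dead ends; the no-dead-ends observation is what rules out rank $1$.) For the second claim, the paper argues differently: it invokes, without proof, the fact that $H_\lambda$ has no dead ends (which requires a separate finite-branching argument — all successors of a would-be dead end of $H_\lambda$ must have been deleted by some bounded stage $\beta^* < \lambda$), then observes that every $H_\lambda$-successor of the dead end $\pair q l$ must be inert, so if $\pair q l$ were non-accepting it would itself be inert in $H_\lambda$, contradicting $\pair q l \in G_\lambda$. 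Your suffix-propagation argument — a witnessing path to an accepting vertex in $H_\lambda$ has every suffix witnessing non-inertness of its own source, so the entire path survives into $G_\lambda$ and must collapse to length $0$ because $\pair q l$ is a dead end there — reaches the same conclusion without needing dead-end-freeness of $H_\lambda$ at all, which makes it more self-contained than the paper's version. It does, however, implicitly use that the $G_\alpha$ are induced subgraphs so that a path whose vertices all survive remains a path; this is true of the construction but worth a word.
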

\begin{proof}
	The first part follows from the fact that an accepting vertex $\pair q l$ is not inert:
	Therefore, $\pair q l$ is a dead end in $G_\alpha$,
	so $\pair q l \not \in G_{\alpha + 1}$ and $\rank {q, l} = \alpha + 1$.
	
	For the second part,
	assume $\rank {q, l} = \lambda + 1$, i.e.,
	$\pair q l \in G_\lambda \setminus G_{\lambda + 1}$.
	Therefore, $\pair q l$ is a dead end in $G_\lambda$.
	Since $G_\lambda \subseteq H_\lambda$,
	$\pair q l$ is in $H_\lambda$ as well.
	But $H_\lambda$ has no dead ends,
	therefore $\pair q l$ has at least one successor $\pair {q'} {l+1}$ in $H_\lambda$.
	But $\pair q l$ is a dead end in $G_\lambda$,
	therefore any such successor $\pair {q'} {l+1}$ is not in $G_\lambda$.
	Therefore, $\pair {q'} {l+1}$ is inert in $H_\lambda$.
	
	By contradiction, assume $\pair q l$ that is not accepting.
	Since it has only inert successors $\pair {q'} {l+1}$ in $H_\lambda$,
	it is itself inert in $H_\lambda$.
	But $\pair q l \in G_\lambda$,
 	so $\pair q l$ is \emph{not} inert in $H_\lambda$.
	This is a contradiction, therefore $\pair q l $ is accepting.
	\qed
\end{proof}

We say that a vertex $\pair {q'} {l+1}$ is a \emph{maximal successor} of $\pair q l$
if its rank is maximal amongst all successors of $\pair q l$,
and a sequence $\pair {q_0} l \pair {q_1} {l+1} \cdots \pair {q_h} {l+h}$
is a \emph{maximal path} if, for any $0 \geq k < h$,
$\pair {q_{k+1}} {l+k}$ is a maximal successor of $\pair {q_k} {l+k-1}$.

We define a predecessor and a floor operation on ordinals.
For an ordinal $\alpha$,
its \emph{predecessor} $\alpha - 1$ is either $\alpha$ itself if $\alpha$ is a limit ordinal,
or $\beta$ if $\alpha = \beta + 1$ for some $\beta$;
its \emph{floor} $\floor \alpha  := \sup _{\lambda < \alpha} \lambda$ is the largest limit ordinal strictly smaller than $\alpha$.
Notice that, for $0 < \alpha < \omega^\omega$,
$\floor \alpha < \alpha$.

\begin{lemma}\label{lem:rank_nonincreasing}
	Let vertex $\pair q l$ have rank $\alpha$.
	Then, a) every successor $\pair {q'} {l+1}$ has rank at most $\alpha - 1$,
	and b) there exists a maximal successor attaining rank $\alpha - 1$.
	As a direct consequence,
	c) every node $\pair {q'} {l'}$ reachable from $\pair q l$
	has a smaller rank $\alpha' \leq \alpha$.
\end{lemma}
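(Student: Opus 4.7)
The plan is to use the characterization given in the remark just before the lemma, namely $\rank{q,l} = \alpha$ iff $\pair{q}{l} \in H_\alpha \setminus G_\alpha$, and to case-split each claim on whether $\alpha$ is a successor ordinal $\beta+1$ or a nonzero limit ordinal. This split aligns with the two peeling operations defining $\{G_\alpha\}$: removal of dead ends at successor stages and removal of inert vertices at limit stages.

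For (a), if $\alpha = \beta+1$ then $\pair{q}{l}$ is a dead end in $G_\beta$, so no successor lies in $G_\beta$ and every successor has rank at most $\beta = \alpha-1$. If $\alpha$ is a limit, so $\alpha-1 = \alpha$, I argue by contradiction: any successor of rank $> \alpha$ would lie in $G_\alpha \subseteq H_\alpha$, and since $\pair{q}{l} \in H_\alpha$ as well, the edge between them is preserved in $H_\alpha$; a witness of non-inertness for that successor in $H_\alpha$ would then also witness non-inertness for $\pair{q}{l}$, contradicting $\pair{q}{l} \notin G_\alpha$. Part (c) falls out as a routine induction on path length from (a), since the rank drops strictly at successor stages and cannot exceed $\alpha$ at limit stages.

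For (b), by (a) it suffices to exhibit a single successor of rank $\alpha-1$; any such successor is automatically maximal. When $\alpha = (\gamma+1)+1$ is the successor of a successor, membership of $\pair{q}{l}$ in $G_{\gamma+1}$ gives a successor in $G_\gamma$, and dead-endness of $\pair{q}{l}$ in $G_{\gamma+1}$ forces that successor out of $G_{\gamma+1}$, pinning its rank at $\gamma+1 = \alpha-1$. When the relevant predecessor is a limit $\lambda$ (covering both $\alpha = \lambda+1$ and $\alpha = \lambda$), $\pair{q}{l}$ lies in every $G_\gamma$ for $\gamma < \lambda$ and hence has some successor in each such $G_\gamma$; since $\pair{q}{l}$ has only finitely many successors in $G^w_q$ and the $G_\gamma$ are nested, a pigeonhole argument promotes this to a single successor lying in $\bigcap_{\gamma<\lambda} G_\gamma = H_\lambda$, which by (a) has rank exactly $\alpha-1$. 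This limit-stage promotion is the main obstacle of the whole argument: it is the only step using anything beyond direct inspection of the peeling construction, and it relies essentially on the finite branching of the unravelling $G^w_q$.
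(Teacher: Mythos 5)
Your proof is correct, and for parts (a) and (c) it runs essentially parallel to the paper's argument. For part (b), though, you take a genuinely different route. The paper proves (b) for all successor $\alpha = \beta+1$ uniformly via a max-rank device: it sets $\beta^* := \max\{\rank{q',l+1} : \pair q l \to \pair{q'}{l+1}\}$, observes that $\pair q l$ is then a dead end in $G_{\beta^*}$, and forces $\beta^* = \beta$ by comparing $\beta^*+1$ against the assumed rank $\beta+1$; for limit $\alpha = \lambda$, it appeals to the fact that $H_\lambda$ has no dead ends, \emph{asserting} this without proof. You instead split (b) by the shape of the \emph{predecessor} $\alpha-1$: when $\alpha$ is two above an ordinal you read off the witness directly from $G_{\gamma+1}\setminus G_{(\gamma+1)+1}$, and when the predecessor is a limit $\lambda$ you handle $\alpha=\lambda$ and $\alpha=\lambda+1$ together via the pigeonhole argument on the nested family $\{G_\gamma\}_{\gamma<\lambda}$. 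That pigeonhole step is exactly a proof of the "$H_\lambda$ has no dead ends" claim the paper leaves implicit, so your proof is more self-contained on this point, at the cost of a three-way rather than two-way case split and of invoking (a) as an ingredient in (b). One small wording gap: "lies in every $G_\gamma$ ... and hence has a successor in $G_\gamma$" skips a step — being in $G_\gamma$ alone does not give a successor in $G_\gamma$; you need to note that $\pair q l\in G_{\gamma+1}$ (true, since $\gamma+1<\lambda$), which precisely says $\pair q l$ is not a dead end in $G_\gamma$.
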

\begin{proof}
	We split the proof in two cases, depending on whether $\alpha$ is a successor or limit ordinal.
	Let $\alpha$ be a successor ordinal $\beta + 1$.
	Then, $\pair q l$ is a dead end in $G_\beta$,
	and thus it has no successor in $G_\beta$.
	Therefore, each successor $\pair {q'} {l+1}$ has rank $\leq \beta$.
	Moreover, we show that at least one successor has rank exactly equal to $\beta$.
	To this end, let $\beta^* \leq \beta$ be the maximum rank amongst $\pair q l$'s successors.
	Notice that no successor $\pair {q'} {l+1}$ is in $G_{\beta^*}$.
	As $G_\beta \subseteq G_{\beta^*}$,
	it follows that	$\pair q l$ is a dead end in $G_{\beta^*}$.
	Therefore, $\pair q l$ is not in $G_{\beta^* + 1}$,
	which implies it has rank at most $\beta^* + 1 \leq \beta + 1$.
	But $\rank {q, l} = \beta + 1$ by assumption.
	Therefore, $\beta^* = \beta$, as required.
	
	Otherwise, let $\alpha$ be a limit ordinal $\lambda$.
	Thus, $\pair q l$ is inert in $H_\lambda$.
	Let $\pair {q'} {l+1}$ be a successor of $\pair q l$.
	If $\pair {q'} {l+1}$ is not in $H_\lambda$,
	then, since $G_\lambda \subseteq H_\lambda$,
	$\pair {q'} {l+1}$ is not in $G_\lambda$ either.
	Thus, $\pair {q'} {l+1}$ has rank $\leq \lambda$ in this case.
	Otherwise, let $\pair {q'} {l+1}$ be in $H_\lambda$.
	Since $\pair q l$ is inert in $H_\lambda$,
	it follows that $\pair {q'} {l+1}$ is inert in $H_\lambda$ as well.
	Therefore, $\pair {q'} {l+1}$ gets rank exactly equal to $\lambda$ in this case.
	Finally, since $H_\lambda$ does not contain dead ends,
	there exists at least one such inert successor in $H_\lambda$.
	\qed
\end{proof}
\begin{lemma}\label{lem:rank_eventually_accepting}
	If a vertex $\pair {q_0} l$ has a successor ordinal rank $\alpha + 1$,
	then there exists a maximal path $\pair {q_0} l \pair {q_1} {l+1} \cdots \pair {q_h} {l+h}$
	ending in $\pair {q_h} {l+h}$ of rank $\lambda + 1$
	with $\floor{\alpha + 1} \leq \lambda$.
\end{lemma}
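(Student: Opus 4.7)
The approach is to walk greedily along maximal successors, using Lemma~\ref{lem:rank_nonincreasing}(b), and halt as soon as the rank becomes a successor of a limit ordinal.

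Starting from $\pair{q_0}{l}$ with rank $\alpha+1$, construct the path inductively. Suppose the current node has rank $\gamma+1$ for some ordinal $\gamma$. If $\gamma$ is a limit ordinal (including $0$, which counts as a limit by convention), then the current node already has rank $\lambda+1$ with $\lambda = \gamma$, so stop and take $h$ to be the current length. Otherwise, $\gamma$ is a successor, say $\gamma = \beta+1$, so the current rank is $\beta+2$. By Lemma~\ref{lem:rank_nonincreasing}(b) there is a maximal successor of rank $\beta+1$, still a successor ordinal; extend the path by this successor and iterate.

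Termination follows from well-foundedness of the ordinals: the ranks along the path are strictly decreasing as long as the iteration continues, so only finitely many steps can occur before the stopping case triggers. For the bound on $\lambda$, write $\alpha+1 = \mu + n$ where $\mu$ is a limit ordinal (possibly $0$) and $n \geq 1$ is a natural number; such a decomposition exists uniquely for every positive ordinal $\leq \omega^2$. Then $\floor{\alpha+1} = \mu$ directly from the definition of floor. The descent produces the rank sequence $\mu+n,\ \mu+(n-1),\ \ldots,\ \mu+1$ and halts precisely at $\mu+1$, since the predecessor of $\mu+1$ is the limit $\mu$. Hence $\lambda = \mu = \floor{\alpha+1}$, and in fact the conclusion holds with equality.

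The main obstacle is purely bookkeeping: aligning the normal-form decomposition $\alpha+1 = \mu+n$ with the ``predecessor'' behaviour of maximal successors given by Lemma~\ref{lem:rank_nonincreasing}. Since all ranks are bounded by $\omega^2$, the requisite ordinal arithmetic is elementary, and no new ideas beyond Lemma~\ref{lem:rank_nonincreasing} are required; a transfinite induction on $\alpha$ would give an equally clean alternative presentation.
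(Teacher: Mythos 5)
Your proof is correct and, at its core, takes the same route as the paper: repeatedly apply Lemma~\ref{lem:rank_nonincreasing}(b) to walk along maximal successors, dropping the rank by exactly one per step, and stop the first time the rank has the form $\lambda+1$ with $\lambda$ a limit. The paper packages this as a transfinite induction on $\alpha$ while you unfold it as an explicit greedy loop with the normal-form decomposition $\alpha+1 = \mu + n$; your version has the minor advantage of showing the bound holds with equality $\lambda = \floor{\alpha+1}$, but the underlying argument is the same.
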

\begin{proof}
	We proceed by ordinal induction.
	If $\alpha$ is a limit ordinal $\lambda$,
	the claim holds immediately:
	Take $h = 0$; clearly, $\lambda = \floor{\lambda + 1}$.
	
	Otherwise, let $\alpha$ be a successor ordinal $\beta + 1$.
	That is, vertex $\pair {q_0} l$ has rank $\alpha + 1 = (\beta + 1) + 1$.
	By Lemma~\ref{lem:rank_nonincreasing} b),
	$\pair {q_0} l$ has a maximal successor $\pair {q_1} {l+1}$ of rank $\beta + 1 = \alpha$.
	By induction, there exists a maximal path $\pair {q_1} {l+1} \cdots \pair {q_h} {l+h}$ with $h > 0$,
	ending in $\pair {q_h} {l+h}$ of rank $\lambda + 1$
	with $\floor{\beta + 1} \leq \lambda$.
	But $\beta = \alpha + 1$,
	thus $\floor{\beta + 1} = \floor{\alpha + 1} \leq \lambda$.
	\qed
\end{proof}

\begin{lemma}\label{lem:rank_limit}
	If a vertex $\pair {q_0} l$ has a nonzero limit ordinal rank $\lambda$,
	then there exists a path $\pair {q_0} l \pair {q_1} {l+1} \cdots \pair {q_h} {l+h}$ with $h \geq 1$
	ending in $\pair {q_h} {l+h}$ of rank $\alpha + 1$
	with $\floor\lambda \leq \alpha$.
\end{lemma}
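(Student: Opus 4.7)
The plan is to exploit the fact that $\rank{q_0, l} = \lambda$ is a \emph{limit} ordinal together with the elementary observation that limit and successor ordinals are disjoint classes. First I would unpack the definition of rank: $\rank{q_0, l} = \lambda$ means $\pair{q_0}{l} \in H_\lambda \setminus G_\lambda$, so $\pair{q_0}{l}$ is inert in $H_\lambda$ but was not removed at any earlier limit stage. Since $\floor\lambda < \lambda$, we have $\pair{q_0}{l} \in H_\lambda \subseteq H_{\floor\lambda + 1} = G_{\floor\lambda}$, so in particular $\pair{q_0}{l}$ is \emph{not} inert in $H_{\floor\lambda}$.

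By the definition of inertness, non-inertness in $H_{\floor\lambda}$ means there is a path $\pair{q_0}{l} \pair{q_1}{l+1} \cdots \pair{q_h}{l+h}$ lying entirely inside $H_{\floor\lambda}$ whose endpoint $\pair{q_h}{l+h}$ is accepting. Because $\lambda$ is a (nonzero) limit ordinal, Lemma~\ref{lem:rank_accepting} tells us that $\pair{q_0}{l}$ itself cannot be accepting, so the path has length $h \geq 1$, as required.

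It remains to check the rank of $\pair{q_h}{l+h}$. Since this vertex is accepting, Lemma~\ref{lem:rank_accepting} gives $\rank{q_h, l+h} = \alpha + 1$ for some ordinal $\alpha$. Since $\pair{q_h}{l+h} \in H_{\floor\lambda}$, the definition of rank yields $\alpha + 1 \geq \floor\lambda$. The desired bound $\floor\lambda \leq \alpha$ then follows from the key observation that $\alpha + 1$ is a successor ordinal while $\floor\lambda$ is a limit ordinal, so they cannot be equal; thus $\alpha + 1 > \floor\lambda$, whence $\alpha \geq \floor\lambda$.

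The only subtle point is this last inequality: one must resist the temptation to argue ``$\alpha = (\alpha + 1) - 1 \geq \floor\lambda - 1$'', which is meaningless for limit ordinals, and instead appeal to the fact that successor and limit ordinals never coincide. Everything else amounts to careful bookkeeping with the transfinite construction of the $G_\alpha$'s and $H_\alpha$'s.
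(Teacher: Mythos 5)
Your proof is correct, and it is essentially the contrapositive of the paper's argument, built around the same key observation. You argue directly: from $\rank{q_0,l}=\lambda$ you deduce $\pair{q_0}{l}\in H_\lambda \subseteq H_{\floor\lambda+1} = G_{\floor\lambda}$, so $\pair{q_0}{l}$ is not inert in $H_{\floor\lambda}$; hence some accepting vertex is reachable from it inside $H_{\floor\lambda}$, and since that vertex lies in $H_{\floor\lambda}$ its (necessarily successor, by Lemma~\ref{lem:rank_accepting}) rank $\alpha+1$ satisfies $\alpha+1 \geq \floor\lambda$. The paper instead proceeds by contradiction: assuming no descendant has successor rank $\geq \floor\lambda$, every accepting vertex reachable from $\pair{q_0}{l}$ would have rank $<\floor\lambda$ and so would be absent from $H_{\floor\lambda}$, making $\pair{q_0}{l}$ inert in $H_{\floor\lambda}$ and thus of rank $\floor\lambda < \lambda$, contradicting $\rank{q_0,l}=\lambda$. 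Both hinge on non-inertness of $\pair{q_0}{l}$ in $H_{\floor\lambda}$, so the difference is stylistic rather than substantive. One small advantage of your version is that you explicitly justify $h \geq 1$ (via Lemma~\ref{lem:rank_accepting}: $\pair{q_0}{l}$ has limit rank and thus cannot itself be accepting), whereas the paper leaves this implicit. Your closing remark about the ordinal-arithmetic pitfall is also well placed: the clean way to get $\alpha \geq \floor\lambda$ from $\alpha+1 \geq \floor\lambda$ is precisely to note that $\floor\lambda$, being a limit, cannot equal the successor $\alpha+1$.
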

\begin{proof}
	Let $\pair {q_0} l$ have rank $\lambda > 0$.
	By contradiction,
	assume $\pair {q_0} l$ has no descendant $\pair {q'} {l'}$ of rank $\alpha + 1$ with $\floor\lambda \leq \alpha$.
	That is, all descendants $\pair {q'} {l'}$ of successor ordinal rank $\alpha + 1$ have $\alpha < \floor\lambda$,
	which is the same as $\alpha + 1 < \floor\lambda$.
	By definition, $\pair {q_0} l$ is inert in $H_\lambda \subseteq H_{\floor\lambda}$.
	We show that $\pair {q_0} l$ is inert in $H_{\floor\lambda}$ as well.
	This is a contradiction, since $\lambda$ is nonzero,
	therefore $\pair {q_0} l$ would get rank $\floor\lambda < \lambda$.
	
	To this end, we show that any vertex reachable from $\pair {q_0} l$ in $H_{\floor\lambda}$ is non-accepting.
	For such a vertex $\pair {q'} {l'}$ to be accepting,
	by Lemma~\ref{lem:rank_accepting},
	it is necessary to have successor rank $\alpha + 1 < \floor\lambda$.
	Clearly, $\pair {q'} {l'} \not\in H_{\floor\lambda}$.
	Therefore, $\pair {q_0} l$ is inert in $H_{\floor\lambda}$.	
	\qed
\end{proof}

\newcommand{\myrank}[2]{\mathsf{rank}_{#1}^w({#2})}

\ignore{
\begin{lemma}\label{lem:rank_preserved}
	For an automaton $\NBA Q$, let $q_0, s_0 \in Q$
  	and let $w\in\Sigma^\omega$ be an infinite word.
	For two states $\hat q, \hat s \in Q$ and a level $l \geq 0$,
	if $\myrank {q_0} {\hat q, l} \leq \myrank {s_0} {\hat s, l}$,
	then there exists a (maximal) successor $\pair {\hat s'} {l+1}$ of $\pair {\hat s} l$
	such that, for any successor $\pair {\hat q'} {l+1}$ of $\pair {\hat q} l$,
	$\myrank {q_0} {\hat q', l+1} \leq \myrank {s_0} {\hat s', l+1}$.
\end{lemma}
}

\begin{lemma}\label{lem:ordinal_implies_fx_sim}
	Let $w\in\Sigma^\omega$.
	If $\myrank {q_0} {q_0, 0} \leq \myrank {s_0} {s_0, 0}$, then $q_0 \wdesim w s_0$.
\end{lemma}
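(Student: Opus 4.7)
The plan is to construct a Duplicator winning strategy in the game $\fxsimgame{w}{q_0}{s_0}$ using the ordinal ranks as a progress measure. Writing $\sigma_k := \myrank{q_0}{q_k,k}$ and $\rho_k := \myrank{s_0}{s_k,k}$ along a play, the strategy will preserve the invariant $\sigma_k \leq \rho_k$, which holds at $k=0$ by hypothesis.

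The first step is to check that this invariant can always be preserved. For any Spoiler move to $q_{k+1}$, Lemma~\ref{lem:rank_nonincreasing}(a) yields $\sigma_{k+1} \leq \sigma_k - 1$; a short case analysis on limit versus successor shows that the $(-1)$-operation on ordinals is monotone, so $\sigma_{k+1} \leq \rho_k - 1$. Lemma~\ref{lem:rank_nonincreasing}(b) then supplies a maximal successor $s_{k+1}$ of $s_k$ of rank exactly $\rho_k - 1$, giving $\rho_{k+1} \geq \sigma_{k+1}$.

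The second step is to verify the delayed winning condition. Suppose $q_i \in F$; by Lemma~\ref{lem:rank_accepting}, $\sigma_i = \alpha + 1$ for some $\alpha$, and hence $\rho_i \geq \alpha + 1$. If $\rho_i$ is itself a successor ordinal, Duplicator switches to the maximal path produced by Lemma~\ref{lem:rank_eventually_accepting}, which in finitely many rounds reaches a vertex of rank $\lambda + 1$ --- hence accepting, by Lemma~\ref{lem:rank_accepting}. If $\rho_i$ is a limit ordinal $\lambda$, pure maximal play leaves Duplicator stuck at rank $\lambda$ (since $\lambda - 1 = \lambda$), so instead she follows the descending path of Lemma~\ref{lem:rank_limit} down to a successor-rank vertex of rank at least $\floor\lambda + 1$, and then resumes the previous sub-case from there.

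The hard part will be the limit case. Because our ranks live in $\omega^2$ rather than in $\omega$, the descent produced by Lemma~\ref{lem:rank_limit} only crosses one ``limit layer'' at a time, landing in the window $[\floor\lambda + 1, \lambda)$; inside that window Spoiler's current rank may still be any successor value, so Duplicator must schedule her descent so that $\rho_k \geq \sigma_k$ survives at every intermediate round. The cleanest formalisation I can see is by transfinite induction on $\rho_i$: at a successor rank the conclusion is immediate via Lemma~\ref{lem:rank_eventually_accepting}, whereas at a limit rank $\lambda$ Duplicator first plays maximal replies, forcing Spoiler --- who by Lemma~\ref{lem:rank_nonincreasing}(a) cannot linger at any successor rank --- to slide down into the target window, and only then takes the descending path of Lemma~\ref{lem:rank_limit}, after which the inductive hypothesis yields the required accepting visit while preserving the invariant.
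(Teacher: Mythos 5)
Your proposal is correct and follows essentially the same route as the paper's own proof: both preserve the rank ordering $\sigma_k \leq \rho_k$ as an invariant via maximal play (using Lemma~\ref{lem:rank_nonincreasing}), and both identify the key subtlety that the descent of Lemma~\ref{lem:rank_limit} must be \emph{delayed} until Spoiler's rank has fallen to a limit ordinal $\leq \floor{\lambda}$, after which Lemma~\ref{lem:rank_eventually_accepting} produces the required accepting vertex. The paper organizes this as a ``normal mode''/``obligation mode'' strategy while you phrase it as transfinite induction on $\rho_i$, but these coincide: after a single descent Duplicator is already at a successor rank, so only one level of your induction is ever exercised (and your ``slide down into the target window'' should more precisely read ``slide down to a limit ordinal below $\floor\lambda + 1$,'' which is what makes the subsequent descent safe — exactly as the paper computes with $\lambda' \leq \floor\lambda \leq \alpha'$).
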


\begin{proof}
	Assume $\myrank {q_0} {q_0, 0} \leq \myrank {s_0} {s_0, 0}$.
	We show that Duplicator has a winning strategy in $\fxsimgame w {q_0} {s_0}$.
	For any round $i$,
	let $\conf {q_i} {s_i}$ be the current configuration of the simulation game,
	and	let the rank of Spoiler and Duplicator at round $i$ be
	$\myrank {q_0} {q_i, i}$ and $\myrank {s_0} {s_i, i}$, respectively.
	Intuitively, Duplicator wins by ensuring both a \emph{safety} and a \emph{liveness} condition.
	The safety condition requires Duplicator to always preserve the ordering between ranks.
	I.e., at round $i$,
	$\myrank {q_0} {q_i, i} \leq \myrank {s_0} {s_i, i}$.
	The liveness condition enforces Duplicator to (eventually) visit an accepting state if Spoiler does so.
	
	Duplicator plays in two modes, \emph{normal mode} and \emph{obligation mode}.
	In normal mode Duplicator only enforces the safety condition,
	while in obligation mode Duplicator needs to satisfy the liveness condition,
	while still preserving the safety condition.
	
	In normal mode, we asssume that Duplicator's rank is a limit ordinal,
	and, by Lemma~\ref{lem:rank_nonincreasing},
	Duplicator can preserve the rank by always selecting maximal successors.
	We say that Duplicator \emph{plays maximally} during normal mode.
	The game stays in normal mode as long as Spoiler is not accepting.
	Whenever $q_i \in F$ at round $i$,
	then Duplicator switches to obligation mode.
	%
	Suppose that the current rank of Duplicator at round $i$ is a limit ordinal $\lambda$.
	Since $q_i \in F$,
	by Lemma~\ref{lem:rank_accepting} Spoiler's rank is a successor ordinal $\alpha + 1 < \lambda$.
	W.l.o.g. we assume that Spoiler plays maximally during obligation mode.
	By Lemma~\ref{lem:rank_eventually_accepting},
	there exists a maximal path $\pair {q_i} i \pair {q_{i+1}} {i+1} \cdots \pair {q_j} j$
	s.t. Spoiler's rank at round $j \geq i$ is $\lambda' + 1$.
	A further move by Spoiler extends the previous path to $\pair {q_{j+1}} {j+1}$.
	By Lemma~\ref{lem:rank_nonincreasing} b),
	Spoiler's rank at round $j + 1$ is now $\lambda'$,
	and by part c) of the same lemma, $\lambda' \leq \alpha + 1$.
	By part b), Duplicator can play a maximal path
	$\pair {s_i} i \pair {s_{i+1}} {i+1} \cdots \pair {s_{j+1}} {j+1}$
	s.t. Duplicator's rank at round $j + 1$ is $\lambda$.
	Thus, $\lambda' < \lambda$, which implies $\lambda' \leq \floor\lambda$.
	So, let $\conf {q_{j+1}} {s_{j+1}}$ be the configuration at round $j + 1$.
	By Lemma~\ref{lem:rank_limit}, Duplicator can play a path
	$\pair {s_{j+1}} {j+1} \pair {s_{j+2}} {j+2} \cdots \pair {s_k} k$ with $k > j + 1$
	and s.t. Duplicator's rank at round $k$ is $\alpha' + 1$ with $\floor\lambda \leq \alpha'$.
	Therefore, $\lambda' \leq \alpha'$.
	By Lemma~\ref{lem:rank_eventually_accepting},
	Duplicator can extend the previous path with a maximal path
	$\pair {s_k} k \pair {s_{k+1}} {k+1} \cdots \pair {s_h} h$
	s.t. Duplicator's rank at round $h > k$ is $\lambda'' + 1$
	with $\floor{\alpha'+1} \leq \lambda''$.
	By Lemma~\ref{lem:rank_accepting},
	$s_h \in F$, thus Duplicator has satisfied the pending obligation.
	At round $h + 1$, Duplicator's rank is $\lambda''$ by Lemma~\ref{lem:rank_nonincreasing} b),
	and the game can switch to normal mode.
	Notice that $\lambda' \leq \alpha' < \alpha' + 1$ implies $\lambda' \leq \floor{\alpha + 1}$.
	Therefore, $\lambda' \leq \lambda''$ and the safety condition is satisfied.
	\qed
\end{proof}

\begin{lemma}\label{lem:nfx_sim_implies_ordinal}
	Let $w\in\Sigma^\omega$ and $k \geq 1$.
	If $q_0 \wkdesim w k s_0$, then $\myrank {q_0} {q_0, 0} \leq \myrank {s_0} {s_0, 0}$
\end{lemma}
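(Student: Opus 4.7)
My plan is to prove the contrapositive: assuming $\myrank{q_0}{q_0,0}>\myrank{s_0}{s_0,0}$, I construct a winning strategy for Spoiler in $\kfxsimgame{k}{w}{q_0}{s_0}$ that is uniform in $k$. Together with Lemma~\ref{lem:ordinal_implies_fx_sim}, this yields Theorem~\ref{thm:fx_sim_equiv}. The first step is to extend the rank of the preceding construction to Duplicator's pebble sets by $\myrank{s_0}{\set s,l}:=\max_{s\in\set s}\myrank{s_0}{s,l}$. Since $\set s_i\goto{a_i}\set s_{i+1}$ forces every state of $\set s_{i+1}$ to descend from some state of $\set s_i$, Lemma~\ref{lem:rank_nonincreasing}(a) shows that the predecessor of $\myrank{s_0}{\set s_i,i}$ is an upper bound for $\myrank{s_0}{\set s_{i+1},i+1}$ regardless of how pebbles split and merge. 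This is the point where the multi-pebble feature is neutralized: extra pebbles give Duplicator no rank-theoretic leverage beyond the single best one.

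Spoiler's strategy mirrors the two-mode Duplicator strategy of Lemma~\ref{lem:ordinal_implies_fx_sim}, played from the other side. In \emph{navigation mode} she uses Lemma~\ref{lem:rank_nonincreasing}(b) to play maximally through $G^w_{q_0}$, Lemma~\ref{lem:rank_limit} to jump from a limit rank to a nearby successor rank, and Lemma~\ref{lem:rank_eventually_accepting} together with Lemma~\ref{lem:rank_accepting} to reach an accepting vertex of rank $\lambda+1$. Arriving at such an accepting $q_i$ raises a delayed-acceptance obligation on Duplicator, after which Spoiler repeats the scheme, producing an infinite sequence of accepting visits whose witnessing ordinals $\lambda_1>\lambda_2>\cdots$ strictly descend inside $\myrank{q_0}{q_0,0}$.

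The correctness argument proceeds by transfinite induction on Spoiler's current rank. The invariant to maintain is that, at the round $i$ where Spoiler raises her most recent obligation, $\myrank{s_0}{\set s_i,i}\leq\lambda$. Under this invariant, Lemma~\ref{lem:rank_limit} applied pebble by pebble to $\set s_i$ controls how many rounds Duplicator has before some pebble is forced into a rank strictly below $\floor{\lambda}$; since the good-since condition of the $k$-pebble game requires \emph{every} pebble of $\set s_j$ at the supposed discharge round $j\geq i$ to trace back through an accepting ancestor, at least one pebble must fail the requirement before Spoiler's next obligation arrives. The main obstacle is the bookkeeping when several obligations are simultaneously pending, because Spoiler may raise new ones faster than Duplicator can clear them; I intend to handle this by attaching to Duplicator's configuration an ordinal invariant, essentially the oldest outstanding witnessing rank tagged with the current pebble-set rank, that strictly decreases at each failed discharge. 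Since all ranks are bounded by $\omega^2$ this descent cannot continue forever, which contradicts the assumed winning Duplicator strategy and gives the required rank inequality.
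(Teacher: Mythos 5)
Your proposal has the right skeleton and matches the paper's: prove the contrapositive, extend the rank to pebble sets by taking the maximum, have Spoiler navigate $G^w_{q_0}$ via Lemmas~\ref{lem:rank_nonincreasing}, \ref{lem:rank_limit}, \ref{lem:rank_eventually_accepting}, \ref{lem:rank_accepting} so as to repeatedly reach accepting vertices and raise obligations, and close by a well-foundedness argument. However, the mechanism you describe in the middle paragraph is backwards, and this is a genuine gap. You organize the argument around \emph{failed} discharges (``at least one pebble must fail the requirement before Spoiler's next obligation arrives'', an ordinal ``that strictly decreases at each failed discharge''). That is not what happens, and cannot happen: since we are assuming for contradiction that Duplicator has a winning $k$-pebble strategy, Duplicator \emph{does} discharge every obligation. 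The engine of the proof is that each \emph{successful} discharge costs Duplicator rank: at the discharge round $j$, every pebble traces back through an accepting ancestor since round $i$; by Lemma~\ref{lem:rank_accepting} accepting vertices have successor ordinal rank, and by Lemma~\ref{lem:rank_nonincreasing} ranks only decrease along paths, so every pebble in $\set s_j$ has rank strictly below Duplicator's previous (limit) rank $\mu$. After stabilizing maximally, Duplicator sits at a new limit rank $\mu' < \mu$, while Spoiler has preserved a limit rank $\lambda' \geq \mu > \mu'$. Iterating, Duplicator's rank strictly descends through limit ordinals; since the transition relation is total, the smallest attainable limit rank is $\omega$, and a rank-$\omega$ vertex is inert in $H_\omega = G^w_{s_0}$, so once Duplicator's pebbles are all at rank $\omega$ the next obligation is \emph{the one and only} unmet one. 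There is no infinite sequence of failed discharges to measure, and no descending ordinal attached to failures.

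Two further points. First, your invariant should be strict, $\myrank{s_0}{\set s_i, i} < \lambda$ (in fact the paper keeps both sides at limit ordinals with Duplicator's strictly below Spoiler's); with $\leq$ alone the descent argument does not get off the ground, since the case $\mu = \lambda$ gives Spoiler no room. Second, the bookkeeping you anticipate for overlapping obligations is unnecessary: Spoiler simply plays maximally between obligations, preserving her limit rank, and raises a fresh obligation only \emph{after} the previous one has been discharged and Duplicator has stabilized. Since visiting $F$ itself costs Spoiler rank (accepting vertices have successor rank), this patient schedule is exactly what Spoiler wants, and it guarantees at most one outstanding obligation at any time. Your closing observation that ``this descent cannot continue forever'' is correct in spirit and would indeed work with the witnessing ranks on Spoiler's side (they also strictly decrease each cycle and are bounded below by $\omega$), but the mechanism that drives the descent is the rank loss incurred by Duplicator upon a successful discharge, not a sequence of failures.
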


\begin{proof}
	We prove the contrapositive.
	Assume $\myrank {q_0} {q_0, 0} \not\leq \myrank {s_0} {s_0, 0}$.
	Since ordinals are linearly ordered,
	this means $\myrank {q_0} {q_0, 0} > \myrank {s_0} {s_0, 0}$.
	We have to show $q_0 \not\wkdesim w k s_0$, for arbitrary $k \geq 1$.
	Take $n$ to be the size of the automaton.
	We actually prove that Duplicator does not win even with $n$ pebbles,
	i.e., $q_0 \not\wkdesim w n s_0$.

	For any  round $i$,
	let $\conf {q_i} {\set s_i}$ be the current configuration of the simulation game $\kfxsimgame n w {q_0} {s_0}$.
	(For simplicity, we omit the third component.)
	Notice that $\set s_i$ identifies a \emph{subset} of vertices at level $i$ in $G^w_{s_0}$:
	$\set s_i \subseteq \{ s \st \pair s i \in G^w_{s_0} \}$.
	We extend the notion of rank to sets of vertices by taking the maximal rank.
	That is, the rank of Duplicator at round $i$ is $\sup_{s \in \set s_i} \myrank {s_0}  {s, i}$.
	As before, Spoiler's rank is just $\myrank {q_0} {q_i, i}$.
	
	We assume that, at round 0, every pebble has limit rank.
	If not, Spoiler can enforce such a situation by waiting a suitable number of rounds.
	(I.e., by playing maximally according to Lemma~\ref{lem:rank_nonincreasing}.)
	So, let's Spoiler have limit rank $\lambda$ and Duplicator have limit rank $\mu$,
	with $\lambda > \mu$.
	We assume that Duplicator always plays maximally,
	unless she is forced to act differently.
	By Lemma~\ref{lem:rank_limit},
	Spoiler can play a path $\pair {q_0} 0 \pair {q_1} 1 \cdots \pair {q_i} i$ with $i > 0$,
	s.t. her rank at round $i$ is $\alpha + 1$ and $\alpha \geq \floor\lambda$.
	From $\lambda > \mu$ we have $\floor\lambda \geq \mu$,
	which implies $\alpha \geq \mu$.
	By Lemma~\ref{lem:rank_eventually_accepting},
	Spoiler can extend the previous path with a maximal path
	$\pair {q_i} i \pair {q_{i+1}} {i+1} \cdots \pair {q_j} j$ with $j > i$,
	s.t. her rank at round $j$ is $\lambda' + 1$ and $\lambda' \geq \floor{\alpha + 1}$.
	By Lemma~\ref{lem:rank_accepting},
	$q_j \in F$.
	From $\alpha + 1 > \alpha \geq \mu$ we have $\floor{\alpha + 1} \geq \mu$,
	which implies $\lambda' \geq \mu$.
	By performing a further maximal step,
	Spoiler reaches state $\pair {q_{j+1}} {j+1}$,
	thus attaining rank $\lambda'$.
	From now on, Spoiler plays maximally.
	
	Since Duplicator was playing maximally,
	in the meanwhile she replied to Spoiler with a sequence
	$\pair {\set s_0} 0 \pair {\set s_1} 1 \cdots \pair {\set s_{j+1}} {j+1}$
	s.t. she has rank $\mu$ at round $j+1$.
	
	Now, let $\conf {q_{j+1}} {\set s_{j+1}}$ be the current configuration,
	and remember that Duplicator has a pending obligation.
	That is, Duplicator has to ensure that at some future round $k$
	all pebbles are good since round $j+1$.
	Let $\set s_k$ be the position of pebbles at round $k$.
	This implies that every state in $\set s_k$ has an accepting predecessor since round $j+1$.
	By Lemma~\ref{lem:rank_accepting},
	accepting pebbles receive successor ranks,
	and, since ranks are nonincreasing along paths in $G^w_{s_0}$ (by Lemma~\ref{lem:rank_nonincreasing}),
	it follows that every pebble in $\set s_k$ has rank $< \mu$.
	That is, Duplicator's rank at round $k$ is $< \mu$.
	Since Duplicator has now satisfied the pending obligation,
	she will again play maximally, from round $k$ on.
	By Lemma~\ref{lem:rank_nonincreasing},
	all pebbles eventually stabilize to a limit rank.
	Since there is a finite number of pebbles,
	it follows that at some round $h \geq k$ Duplicator's rank is $\mu' < \mu$.
	Let $\set s_h$ be the position of Duplicator's pebbles at round $h$.
	
	In the meanwhile Spoiler replied with a maximal path
	$\pair {q_{j+1}} {j+1} \cdots \pair {q_h} {h}$,
	preserving rank $\lambda' \geq \mu > \mu'$ until round $h$.
	Therefore, $\lambda' > \mu'$ and the situation at round $h$ is identical to the initial situation at round 0.
	
	Since ordinals are well-founded,
	Spoiler can iterate the whole procedure
	and after a finite number of repetitions Duplicator hits the trap rank $\omega$.
	At that point, Spoiler would have a limit rank $\lambda'' \!>\! \omega$,
	so she will just force one more obligation,
	which would remain unmet
	(vertices of rank $\omega$ have no accepting successor).
	Thus, Spoiler wins.
	\qed
\end{proof}

\appendixstatement{Theorem}{thm:fx_sim_equiv}{
	For any NBA $\NBA Q$, $k \geq 1$ and states $q,s \in Q$,
	$q \fxkdesim k s$ iff $q \fxdesim s$.
}
\begin{proof}
	By combining the previous two lemmas, we get
	\begin{align*}
		q \fxdesim s \implies q \fxkdesim k s \implies \left( \myrank {q} {q, 0} \leq \myrank {s} {s, 0} \right) \implies q \fxdesim s \ ,
	\end{align*}
	where the first implication holds by the definition of $\fxkdesim k$,
	and the last two by Lemmas~\ref{lem:nfx_sim_implies_ordinal} and \ref{lem:ordinal_implies_fx_sim},
	respectively.
\end{proof}

\section{Proofs for Section~\ref{sec:correctness}}

\appendixstatement{Lemma}{lem:increasing_seq_invariant_under_subseq}{
	Let $w \in \Sigma^\omega$ and $\pi_0, \pi_1, \dots$ as in Definition~\ref{def:coherent_sequences}.
	If $\Pi = \pi_0, \pi_1, \dots$ is coherent,
	then any infinite subsequence $\Pi' = \pi_{f(0)}, \pi_{f(1)}, \dots$ thereof is coherent.
}

\begin{proof}
	Let $\Pi := \pi_0, \pi_1, \dots$ be an infinite coherent sequence,
	and let $\Pi' := \pi_{f(0)}, \pi_{f(1)}, \dots$ be any infinite subsequence thereof,
	for some $f : \Nat \mapsto \Nat$ with $f(0) < f(1) < \cdots$.
	We have to show
	\begin{align*}
		\forall i' \cdot \exists j' \cdot \exists h' \cdot \forall k' \geq h' \cdot j' < |\pi_{f(k')}| \wedge \countf{\pi_{f(k')}}{j'} \geq i' \ .
	\end{align*}
	Let $i' \in \Nat$.
	By taking $i := i'$, by the coherence of $\Pi$,
	there exists $j, h$ s.t
	\begin{align*}
		(*)\ \forall k \geq h \cdot j < |\pi_k| \wedge \countf{\pi_k}{j} \geq i'.
	\end{align*}
	Let $h'$ be the minimal $m$ s.t. $f(m) \geq h$.
	For any $k' \geq h'$,
	we have $f(k') \geq f(h') \geq h$.
	Thus, by letting $k := f(k')$ in $(*)$,
	we obtain $j < |\pi_{f(k')}| \wedge \countf{\pi_{f(k')}}{j} \geq i'$.
	Take $j' := j$.
	Since $k'\geq h'$ was arbitrary,
	we have proved that $\Pi'$ is coherent.
	\qed
\end{proof}

\appendixstatement{Lemma}{lem:increasing_seq_of_paths}{
	For $w \in \Sigma^\omega$,
	let $\Pi = \pi_0, \pi_1, \dots$ be a coherent sequence of paths over (prefixes of) $w$.
	Then, there exists a fair path $\rho$ over $w$.
	Moreover, if all $\pi_i$'s are initial,
	then $\rho$ is initial.
}
\begin{proof}
	Let $\Pi := \pi_0, \pi_1, \dots$ be a coherent sequence.
	We prove by induction the following claim:
	For $l\in\Nat$,
	$R(l)$ holds iff there exists a finite sequence of finite paths
	$\rho_0 \prefixle \rho_1 \prefixle \dots \prefixle \rho_l$,
	with $\rho_l$ of length $m_l$,
	and an infinite subsequence $\Pi_l := \pi_{f_l(0)}, \pi_{f_l(1)}, \dots$ of $\Pi$
	with $f_l(0) < f_l(1) < \cdots$,
	such that
	\begin{align}
		(a)\ \countf{\rho_l}{m_l} \geq l &&
%
		(b)\ \Pi_l \textrm{ is coherent } &&
%
		(c)\ \forall k \cdot \rho_l \prefixleq \pi_{f_l(k)} \ .
	\end{align}
	
	For the base case $l = 0$,
	take $\rho_0 := \varepsilon$ of length $m_0 := 0$,
	and $f_0(i) = i$ for any $i$.
	Then, $\Pi_0 = \Pi$ and $R(0)$ holds.
	
	For the inductive step,
	assume $R(l-1)$ holds.
	That is, there exist $\rho_0 \prefixle \rho_1 \prefixle \dots \prefixle \rho_{l-1}$,
	with $\rho_{l-1}$ of length $m_{l-1}$,
	and $\Pi_{l-1} = \pi_{f_{l-1}(0)}, \pi_{f_{l-1}(1)}, \dots$
	with $\rho_{l-1} \prefixleq \pi_{f_{l-1}(k)}$ for any $k$.
	Since $\Pi_{l-1}$ is coherent,
	by taking $i := l$,
	there exist $j$ and $h$ s.t.,
	for any $\pi$ in the sequence $\pi_{f_{l-1}(h)}, \pi_{f_{l-1}(h+1)}, \dots$,
	$\pi$ has length at least $j$ and
	$\countf{\pi}{j} \geq l$.
	Since the various $\pi$'s are branches in a finitely-branching tree,
	it follows that at any fixed depth $d$ there are only finitely many different branches of length $d$.
	Therefore, there exists a least one such finite branch which is shared by infinitely many $\pi$'s.
	For $d = j$, we get that there exists a finite path $\rho'$ of length $j$
	s.t. $\countf{\rho'}{j} \geq l$ and $\rho' \prefixleq \pi$ for infinitely many such $\pi$'s.
	Let	$\Pi_l := \pi_{g(f_{l-1}(h))}, \pi_{g(f_{l-1}(h+1))}, \dots$ be this infinite subsequence.
	We assume w.l.o.g. that $m_{l-1} < j$,
	and, consequently, $\rho_l \prefixle \rho'$.
	Take $f_l := g \circ f_{l-1}$,
	$\rho_l := \rho'$ and $m_l := j$.
	Then, (a) and (c) are satisfied by construction,
	while (b) follows by Lemma~\ref{lem:increasing_seq_invariant_under_subseq}.
	This proves $R(l)$,
	concluding the inductive step.
	
	Therefore, one can build the infinite sequence of finite paths
	$\varepsilon = \rho_0 \prefixle \rho_1 \prefixle \cdots$
	such that, for any $l$, $\rho_l$ visits at least $l$ final states.
	Take $\rho$ to be the limit of the $\rho_l$'s.
	Finally, since $\rho_1 \prefixle \pi_{f_1(0)}$ by property $(c)$,
	it follows that if all $\pi_i$'s are initial,
	then so is $\pi_{f_1(0)}$,
	and thus $\rho$.
	\qed	
\end{proof}

\appendixstatement{Theorem}{thm:Upsilon_is_GFQ}{
	Let $R$ be a jumping-safe preorder.
	Then, $R$ is good for quotienting.
}
\begin{proof}
	Assume $R$ is jumping-safe
	and let $\approx_R$ be the equivalence induced by $R$.
	We have to show $\omegalang{\NBA Q} = \omegalang{\quot{\NBA{Q}}{\approx_R}}$.
	The direction $\omegalang{\NBA Q} \subseteq \omegalang{\quot{\NBA{Q}}{\approx_R}}$ holds by Lemma~\ref{lem:quot_contains}.
	
	For the other direction, assume $w \in \omegalang{\quot{\NBA{Q}}{\approx_R}}$,
	with $w = a_0a_1\cdots \in \Sigma^\omega$.
	Let $\pi_{\approx_R} = [q_0] \goesto {a_0} [q_1] \goesto {a_1} [q_2] \cdots$
	be an accepting run over $w$ in $\quot{\NBA{Q}}{\approx_R}$.
	By the definition of quotient,
	for any $i$,
	there exist states $q_i, q^F_i, \hat q_i \in Q$
	s.t. $q_i\ R\ q^F_i\ R\ \hat q_i$
	and $\hat q_i \goesto {a_i} q_{i+1}$.
	That is, $\pi_{\approx_R}$ induces a jumping path $\pi$ as in Equation~\ref{eq:jumping_path}.
	Moreover, $q^F_i$ can be taken in $F$ if $[q_i]$ is accepting.
	Since $[q_0]$ is initial, we assume w.l.o.g. that $q_0 \in I$.
	Since $R$ is jumping-safe and $\pi$ is both initial and fair,
	there exists a coherent sequence of initial paths $\pi_0, \pi_1, \dots$ over prefixes of $w$.
	By Lemma~\ref{lem:increasing_seq_of_paths},
	there exist an (non-jumping) accepting run over $w$ in $Q$.
	Therefore, $w \in \omegalang{\NBA Q}$.
	\qed
\end{proof}

\section{Proofs for Section~\ref{sec:transformers}}

\appendixstatement{Lemma}{lem:tau0_summary}{
	For a preorder $R$, $R \subseteq R \circ \tau_0(R) \subseteq \tau_0(R)$.
}
\begin{proof}
	Directly from Lemmas~\ref{lem:tau0_increasing} and \ref{lem:tau0_closed} below.
\end{proof}

%
\begin{lemma}\label{lem:tau0_increasing}
	For any reflexive $R$,
	$R \subseteq \tau_0(R)$.
\end{lemma}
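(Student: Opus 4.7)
The plan is to exhibit a concrete winning strategy for Duplicator in the $\tau_0(R)$ game starting from any initial position $\langle s, q\rangle$ with $s\ R\ q$, where $R$ is reflexive. The natural idea is to have Duplicator always pick Spoiler's \emph{current} state as the proxy, and simply mirror Spoiler's transition.

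In more detail, suppose Spoiler is at configuration $\langle s_i, q_i\rangle$ and plays a move $q_i \goesto{a_i} q_{i+1}$. I would have Duplicator select the proxy $\hat s_i := q_i$ together with the transition $\hat s_i \goesto{a_i} q_{i+1}$, and consequently $s_{i+1} := q_{i+1}$. For this to be a legal Duplicator move one must check that $s_i\ R\ \hat s_i$: at round $0$ this is given by the assumption $s\ R\ q$, and at every subsequent round it reduces to $q_i\ R\ q_i$, which holds by reflexivity of $R$. So the strategy is well-defined on all conform plays, producing the infinite sequence $\langle s, q\rangle, \langle q_1, q_1\rangle, \langle q_2, q_2\rangle, \dots$ with $\hat s_i = q_i$ for every $i$.

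To verify the winning condition $\forall i \geq 0 \cdot q_i \in F \implies \hat s_i \in F$, observe that by construction $\hat s_i = q_i$, so the implication is trivially satisfied. Hence Duplicator wins, which gives $s\ \tau_0(R)\ q$ and proves $R \subseteq \tau_0(R)$. There is no real obstacle here; the statement is essentially a sanity check that the $\tau_0$ construction does not lose the underlying relation, and reflexivity of $R$ is exactly what allows Duplicator to ``copy'' Spoiler from round one onward.
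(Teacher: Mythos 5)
Your proof is correct and matches the paper's argument: Duplicator picks Spoiler's current state $q_i$ as the proxy, mirrors the transition, and uses reflexivity to keep this legal in every round; the winning condition holds vacuously since $\hat s_i = q_i$. You simply spell out the inductive invariant ($s_i = q_i$ for $i\geq 1$) that the paper leaves implicit.
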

\begin{proof}
	Let $T := \tau_0(R)$,
	and assume $s\ R\ q$.
	We have to show $s\ T\ q$.
	Let's Spoiler select $a$ and $q'$ s.t.
	$q \goesto a q'$.
	Since $s\ R\ q$ by assumption,
	Duplicator can directly take $\hat s := q$.
	Trivially $q \in F \implies \hat s \in F$,
	as required by the winning condition.
	\qed
\end{proof}
%
%
\begin{lemma}\label{lem:tau0_closed}
	For any transitive $R$,
	$R \circ \tau_0(R) \subseteq \tau_0(R)$.
\end{lemma}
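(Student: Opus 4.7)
The plan is to take a play of the $\tau_0(R)$-simulation game starting at $\pair{s}{q}$ and reduce it, after Duplicator's very first move, to a play starting at $\pair{s'}{q}$ where $s'$ is a witness of $s\ R\ s'\ \tau_0(R)\ q$. Transitivity of $R$ is what lets Duplicator ``absorb'' the extra $R$-step into her proxy choice, so that from round two onward she can replay a fixed winning strategy of the original game.

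Concretely, suppose $s\ R\ s'$ and $s'\ \tau_0(R)\ q$, and let $\sigma$ be a winning Duplicator strategy from position $\pair{s'}{q}$. I will exhibit a winning strategy $\sigma'$ from $\pair{s}{q}$. On Spoiler's first move $q \goesto{a_0} q_1$, consult $\sigma$ at the corresponding Duplicator position $\quadruple{s'}{q}{a_0}{q_1}$; it prescribes a proxy $\hat s_0$ with $s'\ R\ \hat s_0$, a successor $s_1$ with $\hat s_0 \goesto{a_0} s_1$, and satisfies $q \in F \Rightarrow \hat s_0 \in F$. I define $\sigma'$ to pick the \emph{same} proxy $\hat s_0$ and the same successor $s_1$. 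This move is legal for Duplicator from $\pair{s}{q}$ because $s\ R\ s'\ R\ \hat s_0$ and transitivity of $R$ yields $s\ R\ \hat s_0$.

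After this first round both games are in position $\pair{s_1}{q_1}$, so from round one onwards $\sigma'$ simply mimics $\sigma$ verbatim. The resulting play in the game from $\pair{s}{q}$ visits exactly the same configurations $\pair{s_i}{q_i}$ and picks the same proxies $\hat s_i$ as the $\sigma$-conform play from $\pair{s'}{q}$. Since the $\tau_0(R)$ winning condition is the pointwise implication $\forall i\geq 0 \cdot q_i \in F \Rightarrow \hat s_i \in F$, and this implication is inherited on every round from the fact that $\sigma$ is winning, $\sigma'$ is winning from $\pair{s}{q}$. Hence $s\ \tau_0(R)\ q$, which is exactly the inclusion claimed.

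I do not expect any real obstacle: the only subtlety is checking that the first move is legal, which is precisely where the transitivity hypothesis on $R$ enters. No reflexivity is needed for this direction.
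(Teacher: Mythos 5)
Your proof is correct and is essentially the paper's own argument: on Spoiler's first move, use the proxy supplied by the winning strategy for the inner pair, absorb the extra $R$-step via transitivity ($s\ R\ s'\ R\ \hat s_0 \Rightarrow s\ R\ \hat s_0$), observe the positions then coincide, and let Duplicator replay the same strategy thereafter. The paper states this more tersely (it describes only the first round and concludes Duplicator wins), but the key step and its justification are identical to yours.
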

\begin{proof}
	Let $T := \tau_0(R)$,
	and assume $\bar s\ R\ s\ T\ q$.
	We have to show $\bar s\ T\ q$.
	Let's Spoiler select $a$ and $q'$ s.t.
	$q \goesto a q'$.
	Since $s\ T\ q$ by assumption,
	Duplicator can select $\hat s$ s.t. $s\ R\ \hat s$
	and $\hat s \goesto a s'$, for some $s'$.
	Then, by transitivity, $\bar s\ R\ \hat s$.
	As $q \in F \implies \hat s \in F$ (by $s\ T\ q$),
	we conclude that Duplicator wins from $\bar s$ as well,
	thus $\bar s\ T\ q$.
	\qed
\end{proof}
\ignore{ 
For transitive $R$,
$\tau(R)$ is closed w.r.t. $R$ both to the left and to the right.
\begin{lemma}\label{lem:tau_closed}
	For any transitive $R$,
	$R \circ \tau(R) \circ R \subseteq \tau(R)$.
\end{lemma}
\begin{proof}
	Let $T := \tau(R)$.
	By Lemma~\ref{lem:tau_increasing},
	we have $R \subseteq T$.
	Thus, 
	\begin{align*}
		R \circ T \circ R \subseteq T \circ T \circ T \subseteq T \ ,
	\end{align*}
	where the second inclusion follows from the transitivity of $T$ (see Lemma~\ref{lem:tau_transitive}).
\end{proof}
}
%
%
\appendixstatement{Theorem}{thm:tau0_GFQ}{
	Let $R$ a $F$-respecting preorder,
	and let $T \subseteq \tau_0(R)$ be an appealing, improving fragment of $\tau_0(R)$.
	If $R$ is jumping-safe, then $T$ is jumping-safe.
}
\begin{proof}
	Assume that $R$ is jumping-safe and $F$-respecting,
	and let $T$ be an appealing, improving fragment of $\tau_0(R)$.
	That is, $T$ is a self-respecting and transitive fragment of $\tau_0(R)$,
	with $R \subseteq T$.
	We have to show that $T$ is jumping-safe.
	To this end, let $w = a_0a_1\cdots \in \Sigma^\omega$, and let the following be an initial $T$-jumping path
	\begin{align*}
		\pi = q_0\ T\ q^F_0\ T\ \hat q_0 \goesto {a_0} q_1\ T\ q^F_1\ T\ \hat q_1 \goesto {a_1} q_2 \cdots ,\quad q_0 \in I \ .
	\end{align*}
	
	First, we show by induction the following claim:
	For any $i \geq 0$, there exists a finite initial path
	\begin{align*}
		\rho_i = r_0\ R\ \hat r_0 \goesto {a_0} r_1\ R\ \hat r_1 \goesto {a_1} \cdots r_i, \quad r_0 \in I \ ,
	\end{align*}
	s.t. $r_i\ T\ q_i$, and,
	for any $0 \leq k < i$,
	$q^F_k \in F \implies \hat r_k \in F$.

	For $i = 0$, just take $r_0 := q_0$.
	For $i \geq 0$, assume $\rho_i = r_0\ R\ \hat r_0 \goesto {a_0} r_1 \cdots r_i$ has already been built.
	Since $q^F_i\ T\ \hat q_i \goesto {a_i} q_{i+1}$,
	by the definition of $T$ there exists $\hat q^F_i \goesto {a_i} q'$
	for some $\hat q^F_i$ and $q'$ with $q^F_i\ R\ \hat q^F_i$ and $q'\ T\ q_{i+1}$.
	But $q_i\ T\ q^F_i$ and, by induction hypothesis, $r_i\ T\ q_i$.
	Since $T$ is transitive,
	we get $r_i\ T\ q^F_i$,
	so there exists $\hat r_i \goesto {a_i} r_{i+1}$
	with $r_i\ R\ \hat r_i$ and $r_{i+1}\ T\ q'$.
	Again by transitivity, we get $r_{i+1}\ T\ q_{i+1}$.
	Moreover, if $q^F_i \in F$, then since $R$ respects final states,
	we have $\hat q^F_i \in F$,
	and, by the definition of $T$,
	we finally derive $\hat r_i \in F$.
	Thus, we have just built $\rho_{i+1} = r_0\ R\ \hat r_0 \goesto {a_0} r_1 \cdots r_i\ R\ \hat r_i \goesto {a_i} r_{i+1}$.
	This concludes the inductive step,
	and the claim is proved.
	
	From the claim above,
	let $\rho$ be the infinite initial $R$-jumping sequence resulting by taking limit of the $\rho_i$'s.
	Since $R$ is jumping-safe,
	there exists an infinite sequence of initial finite paths
	$\pi_0, \pi_1, \dots$ s.t. $\last{\pi_i}\ R\ r_i$.
	By assumption $R \subseteq T$,
	so $\last{\pi_i}\ T\ r_i$ holds as well.
	By $r_i\ T\ q_i$ and transitivity,
	we obtain $\last{\pi_i}\ T\ q_i$.
	Therefore, the same sequence $\pi_0, \pi_1, \dots$ can be taken as a witness for $T$ being jumping-safe.
	
	Finally, assume that $\pi$ is fair,
	i.e., $q^F_i \in F$ for infinitely many $i$'s.
	By the claim above, $\hat r_i \in F$ for infinitely many $i$'s,
	therefore $\rho$ is fair as well.
	Since $R$ is jumping-safe
	(by taking $r^F_i := \hat r_i$, $R$ being reflexive),
	we finally infer that $\pi_0, \pi_1, \dots$ is coherent,
	which concludes the proof.
	\qed
\end{proof}

\appendixstatement{Lemma}{lem:tau0_notimproving}{
	For any reflexive $R$, let $T \subseteq \tau_0(R)$ be any appealing fragment of $\tau_0(R)$.
	Then, $\tau_0(T) \subseteq \tau_0(R)$.
	That is, at the second iteration $\tau_0$ does not introduce any new fragment which could not be found before.
}
\begin{proof}
	Let $R$ be reflexive.
	Let $T$ be an appealing (= transitive and self-respecting) fragment of $V_0 := \tau_0(R)$,
	and let $V_1 := \tau_0(T)$.
	We have to show $V_1 \subseteq V_0$.
	To this end, let $s\ V_1\ q$ 
	and let Spoiler choose a transition $q \goesto a q'$.
	By the definition of $V_1$,
	there exist $s\ T\ \bar s$ and $\bar s'$
	with $\bar s \goesto a \bar s'$ and $\bar s'\ V_1\ q'$.
	By the definition of $T$,
	there exist $s\ R\ \hat s$ and $s'$
	with $\hat s \goesto a s'$ and $s'\ T\ \bar s'$ (since $T$ is self-respecting).
	$T$ being transitive, from $s'\ V_1\ \bar s'\ T\ q'$ and from Lemma~\ref{lem:tau0_closed},
	we get $s'\ V_1\ q'$.
	Thus, we let Duplicator choose $\hat s$ and $s'$ above,
	as required by the definition of $V_0$.
	Duplicator is winning as $q \in F$ implies $\bar s \in F$,
	and the latter implies $\hat s \in F$,
	the first implication holding by the definition of $V_1$,
	and the second by $T$.
	Therefore, $s\ V_0\ q$.
	\qed
\end{proof}

\begin{lemma}\label{lem:tau1_transitive}
	For any relation $R$, $\tau_1(R)$ is transitive.
\end{lemma}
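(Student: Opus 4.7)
The plan is a standard composition of winning strategies, exploiting the fact that $\tau_1$ is symmetric: both players may jump along $R$, so Duplicator's response in one game syntactically matches a Spoiler move in a subsequent game. This is precisely what fails for $\tau_0$ and is the reason $\tau_1$ was introduced.

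Assume $s_2\ \tau_1(R)\ s_1$ and $s_1\ \tau_1(R)\ s_0$. Fix winning Duplicator strategies $\sigma_{10}$ for the $\tau_1(R)$-game started from $\pair{s_2}{s_1}$ wait---careful with orientation: since we wrote $s\ \tau_1(R)\ q$ when Duplicator at $s$ wins against Spoiler at $q$, we have a winning strategy $\sigma_{10}$ from $\pair{s_1}{s_0}$ and a winning strategy $\sigma_{21}$ from $\pair{s_2}{s_1}$. We build a Duplicator strategy $\sigma_{20}$ from $\pair{s_2}{s_0}$ whose states are triples $\triple{s_2^{(i)}}{s_1^{(i)}}{s_0^{(i)}}$, where $s_1^{(i)}$ is a ``ghost'' mediating position that is not visible in the actual game but used to drive $\sigma_{10}$ and $\sigma_{21}$.

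The round works as follows. Suppose Spoiler at position $\pair{s_2^{(i)}}{s_0^{(i)}}$ plays a move of the form $s_0^{(i)}\ R\ \hat s_0^{(i)} \goesto{a_i} s_0^{(i+1)}$. Feed this exact move into $\sigma_{10}$, playing from $\pair{s_1^{(i)}}{s_0^{(i)}}$: by definition of $\tau_1(R)$, it returns a Duplicator reply $s_1^{(i)}\ R\ \hat s_1^{(i)} \goesto{a_i} s_1^{(i+1)}$. Now observe that the move $s_1^{(i)}\ R\ \hat s_1^{(i)} \goesto{a_i} s_1^{(i+1)}$ has exactly the shape of a legal \emph{Spoiler} move in the $\tau_1(R)$-game from $\pair{s_2^{(i)}}{s_1^{(i)}}$. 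Feed this move into $\sigma_{21}$ to obtain a Duplicator reply $s_2^{(i)}\ R\ \hat s_2^{(i)} \goesto{a_i} s_2^{(i+1)}$. Declare this to be $\sigma_{20}$'s reply, and update the triple to $\triple{s_2^{(i+1)}}{s_1^{(i+1)}}{s_0^{(i+1)}}$.

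It remains to check that $\sigma_{20}$ wins every conform play, i.e.\ that $\hat s_0^{(i)} \in F \implies \hat s_2^{(i)} \in F$ at every round $i$. This is a straightforward two-step implication: since the play projected onto coordinates $(1,0)$ is $\sigma_{10}$-conform, $\hat s_0^{(i)} \in F$ forces $\hat s_1^{(i)} \in F$; since the play projected onto coordinates $(2,1)$ is $\sigma_{21}$-conform, $\hat s_1^{(i)} \in F$ forces $\hat s_2^{(i)} \in F$. Hence $\sigma_{20}$ is winning from $\pair{s_2}{s_0}$, giving $s_2\ \tau_1(R)\ s_0$. The only delicate point is the observation in the second paragraph that a Duplicator reply in the game between $s_1$ and $s_0$ is simultaneously a legal Spoiler move in the game between $s_2$ and $s_1$; all the rest is bookkeeping.
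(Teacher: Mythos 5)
Your proof is correct and takes essentially the same route as the paper: compose the two games through a mediating ghost state, using the key observation — which you rightly isolate — that because both players jump along $R$ in $\tau_1$, a Duplicator reply in the game from $\pair{s_1}{s_0}$ is syntactically a legal Spoiler move in the game from $\pair{s_2}{s_1}$, and final-state obligations chain through the middle component. The paper states the argument coinductively (a single round, implicitly maintaining the invariant $s'\ T\ r'\ T\ p'$); you make the same argument with explicit strategy composition, which is just more verbose bookkeeping.
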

\begin{proof}
	Let $T := \tau_1(R)$, and let $s\ T\ r\ T\ p$.
	We have to show $s\ T\ p$.
	Let Spoiler choose $a$ and $\hat p$ and $p'$ s.t.
	$p\ R\ \hat p$ and $\hat p \goesto a p'$.
	We have to show
	1) that Duplicator can choose $\hat s$ and $s'$ s.t.
	$s\ R\ \hat s$ and $\hat s \goesto a s'$,
	and 2) $\hat p \in F \implies \hat s \in F$.
	For 1), from $r\ T\ p$ it follows that there exist $\hat r$ and $r'$ s.t.
	$r\ R\ \hat r$ and $\hat r \goesto a r'$.
	Then, from $s\ T\ r$ one can directly find the required $\hat s$ and $s'$.
	For 2), assume $\hat p \in F$.
	From $r\ T\ p$ it follows that the $\hat r$ found above is in $F$ as well.
	Finally, $\hat s \in F$ follows from $s\ T\ r$ in a similar way.
	\qed
\end{proof}
\begin{lemma}\label{lem:tau1_increasing}
	For any transitive $R$,
	$R \subseteq \tau_1(R)$.
\end{lemma}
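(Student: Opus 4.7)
The plan is to exhibit an explicit winning strategy for Duplicator in the $\tau_1(R)$ game starting from any position $\pair s q$ with $s\ R\ q$, and then check both legality and the winning condition. Fix such $s$ and $q$.

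The strategy for Duplicator is simply to \emph{copy Spoiler's jump}: whenever Spoiler plays from $\pair{s_i}{q_i}$ by selecting a proxy $\hat q_i$ with $q_i\ R\ \hat q_i$ and a transition $\hat q_i \goesto{a_i} q_{i+1}$, Duplicator responds with $\hat s_i := \hat q_i$ and $s_{i+1} := q_{i+1}$. It remains to verify two points: legality (each move satisfies $s_i\ R\ \hat s_i$ and $\hat s_i \goesto{a_i} s_{i+1}$) and the winning condition.

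Legality of the transition $\hat s_i \goesto{a_i} s_{i+1}$ is immediate since $\hat s_i \goesto{a_i} s_{i+1}$ is literally $\hat q_i \goesto{a_i} q_{i+1}$, which Spoiler has already justified. For the jump condition $s_i\ R\ \hat s_i$, I split on $i$. At $i = 0$ we have $s_0 = s$ and the condition becomes $s\ R\ \hat q_0$; this follows from $s\ R\ q$ and $q\ R\ \hat q_0$ by the transitivity of $R$. For $i \geq 1$, the copying strategy maintains the invariant $s_i = q_i$, so $s_i\ R\ \hat s_i$ reduces to $q_i\ R\ \hat q_i$, which is exactly the legality of Spoiler's move.

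Finally, the winning condition $\hat q_i \in F \implies \hat s_i \in F$ is trivially satisfied, as the strategy guarantees $\hat s_i = \hat q_i$. Thus Duplicator wins from $\pair s q$, so $s\ \tau_1(R)\ q$. The only potentially delicate point in this argument is that the lemma assumes only transitivity (not reflexivity) of $R$; reflexivity is not needed because once Duplicator copies Spoiler's proxy at the first round, the two tokens coincide at every subsequent round, and the required jump condition is inherited directly from Spoiler's own legality.
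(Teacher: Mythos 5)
Your proof is correct and follows essentially the same strategy as the paper's: at the first round use transitivity of $R$ to get $s\ R\ \hat q$ and then have Duplicator adopt Spoiler's proxy as her own, which trivially discharges the direct-like winning condition. You go slightly further than the paper in spelling out the invariant $s_i = q_i$ for $i \geq 1$ (so that later rounds inherit legality directly from Spoiler, without any need for reflexivity of $R$), whereas the paper leaves the continuation after the first round implicit; this is a welcome bit of care, since the lemma as stated assumes only transitivity.
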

\begin{proof}
	Let $T := \tau_1(R)$,
	and assume $s\ R\ q$.
	We have to show $s\ T\ q$.
	Let's Spoiler select $a$ and $\hat q$ and $q'$ s.t.
	$q\ R\ \hat q$ and $\hat q \goesto a q'$.
	Since $s\ R\ q$ by assumption,
	and from $R$ being transitive,
	we have $s\ R\ \hat q$.
	Thus Duplicator can directly take $\hat s := \hat q$.
	Finally, trivially $\hat q \in F \implies \hat s \in F$,
	as required by the winning condition.
	\qed
\end{proof}

\appendixstatement{Lemma}{lem:tau1_maximal}{
	For any $R$,
	let $T \subseteq \tau_0(R)$ be any appealing fragment of $\tau_0(R)$.
	If $R \subseteq T$ (i.e., $R$ is improving),
	then $T \subseteq \tau_1(R)$.
}
\begin{proof}
	Let $R, T$ as in the statement of the lemma,
	and let $V := \tau_1(R)$.
	We have to show $T \subseteq V$.
	Let $q\ T\ p$,
	and let Spoiler choose $\hat p$ and $p'$
	with $p\ R\ \hat p$ and $\hat p \goesto a p'$,
	as required by the definition of $V$.
	Then, as $R \subseteq T$ by assumption,
	and $T$ being transitive,
	we have $q\ T\ \hat p$.
	Therefore, by the definition of $T$,
	Duplicator can choose $\hat q$ and $q'$
	with $q\ R\ \hat q$ and $\hat q \goesto a q'$.
	Since $T$ is self-respecting,
	we have $p'\ T\ q'$.
	Finally, $\hat p \in F \implies \hat s \in F$ by the definition of $T$.
	Therefore, Duplicator is winning, and $q\ V\ p$.
	\qed
\end{proof}

\begin{figure}
	
	\centering
	
		\VCDraw{
			\begin{VCPicture}{(0,-1)(9,4)}


			\FinalState[q_1]{(1.5,4)}{Q1}
			\State[q_0]{(4.5,4)}{Q0} \Initial[ne]{Q0}

			\State[q_2]{(0,2)}{Q2}
			\State[q_3]{(3,2)}{Q3}
			\FinalState[q_4]{(6,2)}{Q4}

			\FinalState[q_5]{(1.5,0)}{Q5}
			\State[q_6]{(4.5,0)}{Q6}

			
			\EdgeR{Q0}{Q1}{a}
			\EdgeR{Q0}{Q2}{a}
			\EdgeR{Q0}{Q3}{a}
			\EdgeR[0.3]{Q0}{Q4}{a}
			
			\EdgeR{Q1}{Q2}{a}
			
			\LoopNW[0.3]{Q2}{a}
			\EdgeL[0.6]{Q2}{Q3}{a}
			
			\EdgeR{Q3}{Q5}{a}
			
			\EdgeR{Q4}{Q3}{a}

			\EdgeL{Q5}{Q6}{a}
			
			\EdgeR{Q4}{Q3}{a}
			\LoopNE[0.6]{Q6}{a}
			
			\end{VCPicture}
		}
	
	\caption{
	Quotienting w.r.t. appealing fragments of $\taudenaught$ is incorrect,
	already for unary automata.
	We have $q_3 \bwdisim q_2$ and $q_4 \bwdisim q_2$,
	and the relation $T := \{(q_i, q_i) \st 0 \leq i \leq 6 \} \cup \{ (q_i, q_6) \st 0 \leq i \leq 6 \} \cup \{ (q_i, q_j) \st 2 \leq i,j \leq 4 \} \cup \{ (q_i, q_5) \st 2 \leq i \leq 4\}$ is an appealing fragment of $\taudenaught(\bwdisim)$.
	(In particular, $q_3\ \taudenaught(\bwdisim)\ q_4$ since $q_3$ can ``jump'' to $q_2$.)
	The equivalence induced by $T$ identifies the states $q_2, q_3, q_4$,
	but this is incorrect as the resulting automaton would accept the spurious word $a^\omega$.
	}
	
	\label{fig:taudenaught_not_GFQ}
	
\end{figure}

\appendixstatement{Lemma}{lem:taude_transitive}{
	For any $R$,
	$\taude(R)$ is transitive.
}
\begin{proof}
	A complete and formal proof of transitivity requires the machinery of logbooks and composition of (winning) strategies,
	which is a standard tool for delayed simulation
	(for more details see, e.g., \cite{wilke:fritz:simulations:05}). 
	Here, we highlight the ingredients pertinent to $\taude$.
	
	Let $T := \taude(R)$,
	and let $r\ T\ q\ T\ p$.
	We have to show $r\ T\ p$.
	Let $G_0$ be the game between $r$ and $q$,
	let $G_1$ be the game between $q$ and $p$,
	and let $G$ be the outer game between $r$ and $p$.

	The idea is that Duplicator plays $G$ and at the same time updates $G_0, G_1$ accordingly.
	At round $i$, if the $G$-configuration is $\conf {r_i} {p_i}$,
	then there exists $q_i$ s.t. the $G_0$ configuration is $\conf {r_i} {q_i}$
	and the $G_1$ configuration is $\conf {q_i} {p_i}$.
	
	Let Spoiler choose $\hat p$ and transition $\hat p \goesto {a_i} p_{i+1}$,
	with $p_i\ R\ \hat p$.
	Since $G_1$-Duplicator is winning,
	there exist $\hat q$ and transition $\hat q \goesto {a_i} q_{i+1}$,
	with $q_i\ R\ \hat q$.
	Similarly, $G_0$-since Duplicator is winning,
	there exist $\hat r$ and transition $\hat r \goesto {a_i} r_{i+1}$,
	with $r_i\ R\ \hat r$.
	Thus, Duplicator can proceed in $G$ by taking the last transition above.
	The configuratons are updated as follows:
	The game $G_0$ goes to $\conf {r_{i+1}} {q_{i+1}}$,
	$G_1$ goes to $\conf {q_{i+1}} {p_{i+1}}$
	and $G$ goes to $\conf {r_{i+1}} {p_{i+1}}$.
	
	We now argue that the strategy above is winning.
	W.l.o.g. we assume that the games $G_0, G_1$ are updated according to a fixed winning strategy.
	We show that Duplicator is winning in $G$.
	Assume $\hat p_i \in F$.
	Since $G_1$-Duplicator is playing according to a winning strategy,
	there exists $k \geq i$ s.t. $\hat q_k \in F$.
	Similarly, as $G_0$-Duplicator is playing according to a winning strategy,
	there exists $j \geq k \geq i$ s.t. $\hat r_j \in F$.
	Thus, take $j \geq i$ s.t. $\hat r_j \in F$, as required.
	\qed
\end{proof}

\begin{lemma}\label{lem:taude_increasing}
	For any transitive $R$,
	$R \subseteq \taude(R)$.
\end{lemma}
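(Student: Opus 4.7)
The plan is to argue exactly as in Lemma~\ref{lem:tau1_increasing}, observing that the delayed winning condition is implied by the direct one when Duplicator copies Spoiler's proxy verbatim. Concretely, let $T := \taude(R)$ and suppose $s\ R\ q$; I will exhibit a winning Duplicator strategy in the $T$-game from $\pair s q$. Whenever Spoiler selects a proxy $\hat q$ with $q\ R\ \hat q$ together with a transition $\hat q \goesto a q'$, Duplicator responds by mirroring: she picks $\hat s := \hat q$ and uses the very same transition, so that $s' := q'$. In the initial round the required condition $s\ R\ \hat s$ reduces to $s\ R\ \hat q$, which follows from $s\ R\ q\ R\ \hat q$ by transitivity of $R$. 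After this round the configuration becomes $\pair {q'} {q'}$, so in all subsequent rounds the condition $q'\ R\ \hat p$ that Duplicator needs for her mirror response is literally the condition Spoiler just used to make her own move; hence the strategy stays well-defined.

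For the winning condition, note that by construction $\hat s_i = \hat q_i$ at every round $i$, so $\hat q_i \in F$ trivially implies $\hat s_i \in F$, i.e., the delayed obligation $\exists j \geq i \cdot \hat s_j \in F$ is fulfilled with $j := i$. Therefore Duplicator wins from $\pair s q$, which gives $s\ T\ q$ and hence $R \subseteq \taude(R)$.

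I do not expect any real obstacle: the mirroring strategy avoids needing reflexivity of $R$, because Duplicator never uses the invariant $q'\ R\ q'$ — she only ever plays the proxy Spoiler has just validated. As an even shorter alternative, one may simply chain $R \subseteq \tau_1(R) \subseteq \taude(R)$, where the first inclusion is Lemma~\ref{lem:tau1_increasing} and the second is the observation (already recorded in the paper just after the definition of $\taude$) that the delayed winning condition is weaker than the direct one on identical play sequences.
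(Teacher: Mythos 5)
Your proposal is correct, and the ``even shorter alternative'' you give at the end is precisely the paper's own proof: it just observes $R \subseteq \tau_1(R)$ (Lemma~\ref{lem:tau1_increasing}) and $\tau_1(R) \subseteq \taude(R)$ (since the delayed winning condition is weaker than the direct one). Your main mirroring argument simply inlines the proof of Lemma~\ref{lem:tau1_increasing} and re-verifies the (weaker) delayed condition directly, so it amounts to the same thing unfolded; both are valid, and your remark that transitivity is only used once in the opening round while subsequent rounds mirror at identical positions is accurate.
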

\begin{proof}
	Immediate from $R \subseteq \tau_1(R)$ by Lemma~\ref{lem:tau1_increasing},
	and $\tau_1(R) \subseteq \taude(R)$ by definition.
	\qed
\end{proof}

\appendixstatement{Theorem}{thm:taude_GFQ}{
	If $R$ is a jumping-safe $F$-respecting preorder,
	then $\taude(R)$ is jumping-safe.
}
\begin{proof}
	%
	Assume that $R$ is a jumping-safe, $F$-respecting preorder,
	and let $T := \taude(R)$.
	We have to show that $T$ is jumping-safe.
	During the proof we refer to Figure~\ref{fig:fxde_transformer_proof_idea},
	hereafter called ``the diagram''.
	Let $w = a_0a_1\cdots \in \Sigma^\omega$,
	and let $\pi$ be an initial $T$-jumping path
	\begin{align*}
			\pi = q_0\ T\ q^F_0\ T\ \hat q_0 \goesto {a_0} q_1\ T\ q^F_1\ T\ \hat q_1 \goesto {a_1} q_2 \cdots ,\quad q_0 \in I \ .
	\end{align*}
	See the blue path in the diagram.
	We inductively show how to build the rest of the diagram,
	and then we use this construction for showing that $T$ is jumping-safe.
	
	Formally, we inductively build a sequence
	$\rho_0, \rho_1, \dots, \rho_i$
	such that, for any $k \leq i$,
	$\rho_k$ is a $T$-ordered $k+4$-tuple of states representing the $k$-th layer of the diagram,
	\begin{align*}
		\rho_k = s_k^0\ T\ s_k^1\ T\ \cdots\ T\ s_k^{k-1}\ T\ s_k^k\ T\ q_k\ T\ q_k^F\ T\ \hat q_k \ .
	\end{align*}
	%
	Two successive layers are in relations with transitions as follows (cf. the diagram):
	\begin{align*}
		\forall \left(1 \leq h \leq k\right) \cdot s_k^h \jgoesto {a_k} s_{k+1}^h, \quad
		q_k^F \jgoesto {a_k} s_{k+1}^{k+1}, \quad
		\hat q_k \goesto {a_k} q_{k+1} \ ,
	\end{align*}
	where the dashed arrow $x \jgoesto a y$ represents an $R$-jumping transition via some suitable proxy.
	That is, $x \jgoesto a y$ iff there exists a proxy $\hat x$ s.t. $x\ R\ \hat x$ and $\hat x \goesto a y$.
	
	For $i = 0$, just take $s_0^0 := q_0$.
	Then, the invariant is clearly satisfied,
	as $q_0\ T\ q^F_0$ by assumption
	and $q_0 \T\ q_0$ by $T$ being reflexive.
	
	For $i \geq 0$,	assume
	$\rho_0, \rho_1, \dots, \rho_i$ has already been built.
	By induction hypothesis,
	$\rho_i$ is the following $T$-ordered tuple:
	\begin{align*}
		\rho_i = s_i^0\ T\ s_i^1\ T\ \cdots\ T\ s_i^{i-1}\ T\ s_i^i\ T\ q_i\ T\ q_i^F\ T\ \hat q_i \ ,
	\end{align*}
	The next layer $\rho_{i+1}$,
	\begin{align*}
		\rho_{i+1} = s_{i+1}^0\ T\ s_{i+1}^1\ T\ \cdots\ T\ s_{i+1}^{i-1}\ T\ s_{i+1}^i\ T\ s_{i+1}^{i+1}\ T\ q_{i+1}\ T\ q_{i+1}^F\ T\ \hat q_{i+1} \ ,
	\end{align*}
	is obtained as follows.
	The last three components $q_{i+1}, q_{i+1}^F,\hat q_{i+1}$ are fixed by the $T$-jumping path $\pi$.
	The rest is determined next.
	Since $\hat q_i \goesto {a_i} q_{i+1}$,
	we propagate the transition down the chain,
	by using the definition of $T$---%
	as indicated by the zigzag arrows in the diagram.
 	As $q^F_i\ T\ \hat q_i$,
	there exists an $R$-jumping transition $q^F_i \jgoesto {a_i} q'\ T\ q_{i+1}$.
	Take $s_{i+1}^{i+1} := q'$.
	Similarly, from $s_i^i\ T\ q_i\ T\ q^F_i$
	there exists $s_i^i \jgoesto {a_i} q''\ T\ s_{i+1}^{i+1}$.
	Take $s_{i+1}^i := q''$.
	Clearly, one can build all the remaining states down to $s_{i+1}^0$	in the same way,
	thus completing layer $i+1$ in the diagram.
	This concludes the inductive step in the definition of $\rho_{i+1}$.
	
	\begin{unremark}
		We assume that each time a new $T$-game starts from configuration $\pair {q_i^F} {q_i}$,
		Duplicator fixes a winning strategy,
		and alway plays accordingly. 
	\end{unremark}
	
	We now prove that final states are ``propagated'' in the diagram right-to-left, top-to-bottom:
	Formally, we show that,
	for any $i \geq 0$,
	if $q^F_i \in F$,
	then there exists $j \geq i$ s.t. $\hat s_j^0 \in F$,
	where $\hat s_j^0$ is the proxy witnessing $s_j^0 \jgoesto {a_j} s_{j+1}^0$.
	Assume $q^F_i \in F$.
	Then, since $R$ is $F$-respecting,
	$\hat q^F_i \in F$,
	where $\hat q^F_i$ is the proxy witnessing $q^F_i \jgoesto {a_i} s_{i+1}^{i+1}$.
	Since $s_i^i\ T\ q^F_i$, by the definition of $\taude$ and by the above remark,
	there exists $j_0 \geq i$ s.t. $\hat s_{j_0}^i \in F$,
	where $\hat s_{j_0}^i$ is the proxy witnessing $s_{j_0}^i \jgoesto {a_{j_0}} s_{{j_0}+1}^i$.
	But $s_{j_0}^{i-1}\ T\ s_{j_0}^i$,
	therefore there exists $j_1 \geq j_0$ s.t. $\hat s_{j_1}^{i-1} \in F$, and so on \dots
	until we reach index $j_i \geq j_{i-1}$,
	for which $\hat s_{j_i}^0 \in F$.
	Thus, take $j := j_i$.
	
	We are finally ready to prove that $T$ is jumping-safe.
	Notice that the leftmost path in the diagram represents an initial $R$-jumping path $\pi'$,
	\begin{align*}
		\pi' = s_0^0\ R\ \hat s_0^0 \goesto {a_0} s_1^0\ R\ \hat s_1^0 \goesto {a_1} \cdots,\ s_0^0 = q_0 \in I \ .
	\end{align*}
	Since $R$ is jumping-safe,
	there exists an infinite sequence of initial finite paths
	$\pi_0, \pi_1, \dots$ s.t. $\last{\pi_i}\ R\ s_i^0$.
	Since $R$ is transitive,
	$R \subseteq T$ by Lemma~\ref{lem:taude_increasing}.
	Therefore, $\last{\pi_i}\ T\ s_i^0$.
	By $s_i^0\ T\ q_i$ and transitivity,
	we obtain $\last{\pi_i}\ T\ q_i$.
	Therefore, the same sequence $\pi_0, \pi_1, \dots$ can be taken as a witness for $T$ being jumping-safe. 	
	Finally, since $\pi$ is fair, i.e., $q^F_i \in F$ for infinitely many $i$'s,
	then $\pi'$ is fair,
	as final states are ``propagated'' (shown above).
	Since $R$ is jumping-safe,
	we conclude that $\pi_0, \pi_1, \dots$ is coherent.
	\qed
\end{proof}

\begin{figure}
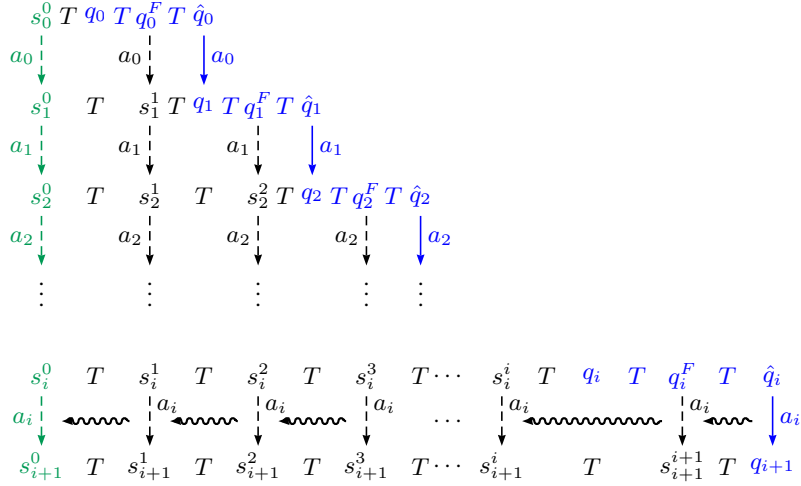

	
	\centering
	\VCDraw{
		\begin{VCPicture}{(0,-1)(16,10)}
			
		\ChgStateLineStyle{none}

		\ChgStateLabelColor{ForestGreen}
		\State[s_0^0]{(0.0,10)}{s00}
		\RstStateLabelColor
		\State[T]{(0.6,10)}{lbl}
		\ChgStateLabelColor{blue}
		\State[q_0]{(1.2,10)}{q0}
		\State[T]{(1.8,10)}{lbl}
		\State[q_0^F]{(2.4,10)}{q0f}
		\State[T]{(3.0,10)}{lbl}
		\State[\hat q_0]{(3.6,10)}{q0hat}
		\RstStateLabelColor

		\ChgStateLabelColor{ForestGreen}
		\State[s_1^0]{(0.0,8)}{s10}
		\RstStateLabelColor
		\State[T]{(1.2,8)}{lbl}
		\State[s_1^1]{(2.4,8)}{s11}
		\State[T]{(3,8)}{lbl}
		\ChgStateLabelColor{blue}
		\State[q_1]{(3.6,8)}{q1}
		\State[T]{(4.2,8)}{lbl}
		\State[q_1^F]{(4.8,8)}{q1f}
		\State[T]{(5.4,8)}{lbl}
		\State[\hat q_1]{(6.0,8)}{q1hat}
		\RstStateLabelColor
		
		\ChgEdgeLineColor{blue}
		\ChgEdgeLabelColor{blue}
		\EdgeL{q0hat}{q1}{a_0}
		\RstEdgeLabelColor
		\RstEdgeLineColor
		\ChgEdgeLineStyle{dashed}
		\EdgeR{q0f}{s11}{a_0}
		\ChgEdgeLineColor{ForestGreen}
		\ChgEdgeLabelColor{ForestGreen}
		\EdgeR{s00}{s10}{a_0}
		\RstEdgeLabelColor
		\RstEdgeLineColor
		\RstEdgeLineStyle

		\ChgStateLabelColor{ForestGreen}
		\State[s_2^0]{(0.0,6)}{s20}
		\RstStateLabelColor
		\State[T]{(1.2,6)}{lbl}
		\State[s_2^1]{(2.4,6)}{s21}
		\State[T]{(3.6,6)}{lbl}
		\State[s_2^2]{(4.8,6)}{s22}
		\State[T]{(5.4,6)}{lbl}
		\ChgStateLabelColor{blue}
		\State[q_2]{(6.0,6)}{q2}
		\State[T]{(6.6,6)}{lbl}
		\State[q_2^F]{(7.2,6)}{q2f}
		\State[T]{(7.8,6)}{lbl}
		\State[\hat q_2]{(8.4,6)}{q2hat}
		\RstStateLabelColor
		
		\ChgEdgeLineColor{blue}
		\ChgEdgeLabelColor{blue}
		\EdgeL{q1hat}{q2}{a_1}
		\RstEdgeLabelColor
		\RstEdgeLineColor
		\ChgEdgeLineStyle{dashed}
		\EdgeR{s11}{s21}{a_1}
		\EdgeR{q1f}{s22}{a_1}
		\ChgEdgeLineColor{ForestGreen}
		\ChgEdgeLabelColor{ForestGreen}
		\EdgeR{s10}{s20}{a_1}
		\RstEdgeLabelColor
		\RstEdgeLineColor
		\RstEdgeLineStyle
		
		\State[\vdots]{(0.0,4)}{s30}
		\State[\vdots]{(2.4,4)}{s31}
		\State[\vdots]{(4.8,4)}{s32}
		\State[\vdots]{(7.2,4)}{s33}
		\State[\vdots]{(8.4,4)}{q3}
		
		\ChgEdgeLineColor{blue}
		\ChgEdgeLabelColor{blue}
		\EdgeL{q2hat}{q3}{a_2}
		\RstEdgeLabelColor
		\RstEdgeLineColor
		\ChgEdgeLineStyle{dashed}
		\EdgeR{s21}{s31}{a_2}
		\EdgeR{s22}{s32}{a_2}
		\EdgeR{q2f}{s33}{a_2}
		\ChgEdgeLineColor{ForestGreen}
		\ChgEdgeLabelColor{ForestGreen}
		\EdgeR{s20}{s30}{a_2}
		\RstEdgeLabelColor
		\RstEdgeLineColor
		\RstEdgeLineStyle

		\ChgStateLabelColor{ForestGreen}
		\State[s_i^0]{(0.0,2)}{si0}
		\RstStateLabelColor
		\State[T]{(1.2,2)}{lbl}
		\State[s_i^1]{(2.4,2)}{si1}
		\State[T]{(3.6,2)}{lbl}
		\State[s_i^2]{(4.8,2)}{si2}
		\State[T]{(6.0,2)}{lbl}
		\State[s_i^3]{(7.2,2)}{si3}
		\State[T]{(8.4,2)}{lbl}
		\State[\cdots]{(9.0,2)}{lbl}
		\State[s_i^i]{(10.2,2)}{sii}
		\State[T]{(11.2,2)}{lbl}
		\ChgStateLabelColor{blue}
		\State[q_i]{(12.2,2)}{qi}
		\State[T]{(13.2,2)}{lbl}
		\State[q_i^F]{(14.2,2)}{qif}
		\State[T]{(15.2,2)}{lbl}
		\State[\hat q_i]{(16.2,2)}{qihat}
		\RstStateLabelColor

		\ChgStateLabelColor{ForestGreen}
		\State[s_{i+1}^0]{(0.0,0)}{sip0}
		\RstStateLabelColor
		\State[T]{(1.2,0)}{lbl}
		\State[s_{i+1}^1]{(2.4,0)}{sip1}
		\State[T]{(3.6,0)}{lbl}
		\State[s_{i+1}^2]{(4.8,0)}{sip2}
		\State[T]{(6.0,0)}{lbl}
		\State[s_{i+1}^3]{(7.2,0)}{sip3}
		\State[T]{(8.4,0)}{lbl}
		\State[\cdots]{(9.0,0)}{lbl}
		\State[s_{i+1}^i]{(10.2,0)}{sipi}
		\State[T]{(12.2,0)}{lbl}
		\State[s_{i+1}^{i+1}]{(14.2,0)}{sipip}
		\State[T]{(15.2,0)}{lbl}
		\ChgStateLabelColor{blue}
		\State[q_{i+1}]{(16.2,0)}{qip}
		\RstStateLabelColor
		
		\ChgEdgeLineColor{blue}
		\ChgEdgeLabelColor{blue}
		\EdgeL{qihat}{qip}{a_i}
		\RstEdgeLabelColor
		\RstEdgeLineColor
		\ChgEdgeLineStyle{dashed}
		\EdgeL[0.2]{si1}{sip1}{a_i}
		\EdgeL[0.2]{si2}{sip2}{a_i}
		\EdgeL[0.2]{si3}{sip3}{a_i}
		\EdgeL[0.2]{sii}{sipi}{a_i}
		\EdgeL[0.2]{qif}{sipip}{a_i}
		\ChgEdgeLineColor{ForestGreen}
		\ChgEdgeLabelColor{ForestGreen}
		\EdgeR{si0}{sip0}{a_i}
		\RstEdgeLabelColor
		\RstEdgeLineColor
		\RstEdgeLineStyle

		\ChgStateFillStatus{none}
		\State[]{(0,1)}{x0}
		\State[]{(2.4,1)}{x1}
		\State[]{(4.8,1)}{x2}
		\State[]{(7.2,1)}{x3}
		\State[\cdots]{(9.0,1)}{lbl}
		\State[]{(10.2,1)}{x4}
		\State[]{(14.2,1)}{x5}
		\State[]{(16.2,1)}{x6}

		\ChgEdgeLineWidth{0.8}
		\ChgZZSize{0.4cm}
		\ZZEdgeL{x6}{x5}{}
		\ZZEdgeL{x5}{x4}{}
		\ZZEdgeL{x3}{x2}{}
		\ZZEdgeL{x2}{x1}{}
		\ZZEdgeL{x1}{x0}{}

		\RstStateLineStyle
		\end{VCPicture}
	}
	\caption{
	Construction for the proof of Theorem~\ref{thm:taude_GFQ}.
	}
	\label{fig:fxde_transformer_proof_idea}
\end{figure}

By using similar techniques,
it is possible to show that repeated application of $\taude$ does not give coarser relations.
This is analogous of what proved in Lemma~\ref{lem:tau0_notimproving} for $\tau_0$.
The proof of this fact is omitted.

\begin{lemma}
	For any preorder $R$, $\taude(\taude(R)) \subseteq \taude(R)$.
\end{lemma}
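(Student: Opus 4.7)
The plan is to adapt the staircase construction from the proof of Theorem~\ref{thm:taude_GFQ}. Let $T := \taude(R)$ and $T^{(2)} := \taude(T)$. Since $R$ is a preorder, Lemma~\ref{lem:taude_increasing} gives $R \subseteq T \subseteq T^{(2)}$, and Lemma~\ref{lem:taude_transitive} makes both $T$ and $T^{(2)}$ transitive. Given $s\ T^{(2)}\ q$, I will construct a Duplicator winning strategy in the $T$-game from $(s, q)$ by running, in parallel, a $T^{(2)}$-game $G^{(2)}$ (with a fixed Duplicator-winning strategy $f$) and a growing staircase of $T$-subgames, analogous to Figure~\ref{fig:fxde_transformer_proof_idea} but with $T$ playing the role that $R$ played there.

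The invariant maintained after outer round $i$ is a chain $s_i^0\ T\ s_i^1\ T\ \cdots\ T\ s_i^i\ T^{(2)}\ q_i$, with $s_i^0$ equal to the outer configuration and $s_i^i$ equal to $G^{(2)}$'s current left-hand state. At the next round, Spoiler's outer move $\hat q_{i+1}, q_{i+2}$ (with $q_{i+1}\ R\ \hat q_{i+1}$) is forwarded to $G^{(2)}$ as a legal $T$-move (since $R \subseteq T$), and $f$ outputs a $T$-proxy $\bar s_{i+1}$ and successor $s_{i+2}^{i+2}$ with $s_{i+1}^{i+1}\ T\ \bar s_{i+1}$ and $\bar s_{i+1}\goesto{a_{i+1}}s_{i+2}^{i+2}$. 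A fresh level-$(i+1)$ $T$-subgame is launched from $(s_{i+1}^{i+1}, \bar s_{i+1})$; its Spoiler plays the reflexive $R$-proxy $\bar s_{i+1}$ together with the transition to $s_{i+2}^{i+2}$, and the winning level-$(i+1)$ Duplicator strategy returns an $R$-proxy $\hat s_{i+1}^{i+1}$ and successor $s_{i+2}^{i+1}$. This move is then propagated downward through the already-running subgames at levels $i, i-1, \ldots, 0$: at each level $h$, Spoiler plays the $R$-proxy and transition just produced by level-$(h+1)$'s Duplicator, and level-$h$'s winning strategy yields $\hat s_{i+1}^h$ and $s_{i+2}^h$. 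Duplicator in the outer $T$-game plays $\hat s_{i+1}^0, s_{i+2}^0$, and the chain is extended by transitivity of $T$ and of $T^{(2)}$.

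The main obstacle lies in the delayed winning condition. Suppose $\hat q_i \in F$ in the outer $T$-game. Then $f$'s winning condition forces $\bar s_k \in F$ for some $k \geq i$. At level-$k$'s round~0, $\bar s_k$ is Spoiler's reflexive proxy, so level-$k$'s delayed condition yields $\hat s_{j_k}^k \in F$ for some $j_k \geq k$. Because the Spoiler at level $h$ plays the proxy produced by level-$(h+1)$'s Duplicator, this accepting proxy becomes Spoiler's proxy at level-$(k-1)$, triggering in turn the delayed obligation there and producing some $\hat s_{j_{k-1}}^{k-1} \in F$ with $j_{k-1} \geq j_k$. Iterating downward through all $k$ levels yields an index $j := j_0 \geq j_k \geq k \geq i$ with $\hat s_j^0 \in F$, which is precisely the outer accepting $R$-proxy needed. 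This right-to-left, top-to-bottom propagation of finality is the same zigzag argument already deployed in the proof of Theorem~\ref{thm:taude_GFQ}, so the remaining details reduce to a routine bookkeeping check.
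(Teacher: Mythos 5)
Your proposal is correct and follows the route the paper itself hints at: the paper omits this proof, noting only that it uses ``similar techniques'' to the preceding results, and you rightly adapt the staircase construction of Theorem~\ref{thm:taude_GFQ} into a strategy composition, launching a fresh $T$-subgame at each round to unroll the $T^{(2)}$-strategy's $T$-proxy into an $R$-proxy and cascading the delayed acceptance obligations right-to-left, top-to-bottom through the staircase exactly as in that theorem. The minor index shifts in your description (e.g.\ writing $\hat q_{i+1}$, $q_{i+2}$ where one expects $\hat q_i$, $q_{i+1}$) are cosmetic, and the chain invariant and the appeal to $R\subseteq T$ via Lemma~\ref{lem:taude_increasing} (using that $R$ is a preorder) are exactly the right ingredients.
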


\ignore{

\begin{proof}[of Lemma~\ref{lem:taude_notimproving}]
	Let $T := \taude(R)$, and assume $s\ \taude(T)\ q$.
	We have to prove $s\ T\ q$.
	
	Since $R$ is transitive, by Lemma~\ref{lem:taude_increasing}
	$R \subseteq T.$
\end{proof}

}

\section{Computing $\taude(R)$}\label{app:computing_taude}

In this section we give an algorithm for computing $\taude(R)$ from Section~\ref{sec:taude},
obtained as an extension of the classical algorithm for computing delayed simulation \cite{etessami:etal:fairsimulations:05}.
We assume that the relation $R$ has already been computed.
We build a game graph where Duplicator has a B\"uchi winning objective.

We enrich configurations from the basic semantic game for $\taude(R)$ with an obligation bit
recording whether Duplicator has to visit an accepting state.
Formally, Spoiler's positions are of the form $\triple s q b$, with $q, s \in Q$ and $b \in \{0,1\}$,
and Duplicator's positions are of the form $\quintuple s q {\hat b} a {q'}$,
with $q, s, q' \in Q$, $a \in \Sigma$ and $\hat b \in \{0,1\}$.
Spoiler can pick a move $(\triple s q b, \quintuple s q {\hat b} a {q'})\in\Gamma_0'$
if there exists $\hat q \in Q$ s.t.
$q R \hat q \goesto a q'$,
and $\hat b = 1$ if $\hat q \in F$ and $b$ otherwise.
Similarly, Duplicator can pick a move $(\quintuple s q {\hat b} a {q'}, \triple {s'} {q'} {b'})\in\Gamma_1'$
if there exists $\hat s \in Q$ s.t.
$s R \hat s \goesto a s'$,
and $b' = 0$ if $\hat s \in F$ and $\hat b$ otherwise.
The objective for Duplicator is to ensure that the winning bit is $0$ infinitely often,
that is, every obligation to visit an accepting state is eventually met.
Formally, the winning condition is
\[W' = \setof{\triple {s_0} {q_0} {b_0} \triple {s_1} {q_1} {b_1} \cdots}{\forall i \geq 0 \cdot \exists j \geq i \cdot b_j = 0} \ .\]

Let $\CPrename$ be a controlled predecessor operator for Duplicator, defined as
\begin{align*}
	\CPre{}{X} = \setof{x}{\forall (x, y) \in \Gamma_0' \cdot \exists (y, z) \in \Gamma_1' \cdot z \in X } \ .
\end{align*}
That is, $x = \triple s q b \in \CPre{}{X}$ if Duplicator can force the game in $X$ in one step from configuration $x$.
Then, the winning region for Duplicator can be computed by evaluating the following fixpoint:
\begin{align*}
	V = \nu X \cdot \mu Y \cdot [b = 0] \cap \CPre{}{X} \cup \CPre{}{Y}\ ,
\end{align*}
where with $[b=0]$ we have indicated the set of configurations with no obligation pending,
i.e., $[b=0] = \setof{\triple q s b}{q, s \in Q, b = 0}$.
Finally, $s\ \taude(R)\ q$ holds iff $\triple s q 0 \in V$.

\section{Proof of Theorem~\ref{thm:fx_de_sim-GFQ}}
\label{app:fx_de_sim-GFQ}

First, we define yet another refinement transformer,
called \emph{fixed-word delayed transformer $\taufxde$},
which is the same as $\taude$,
with the only difference that Spoiler has to reveal the whole input word $w=a_0a_1\cdots$ in advance.
Notice that $\taufxde$, though not efficiently computable in general,
has properties very similar to $\taude$.
In particular, the proof of Theorem~\ref{thm:taude_GFQ} works as it is for the lemma below.

\begin{lemma}\label{lem:taufxde_GFQ}
	If $R$ is a jumping-safe $F$-respecting preorder,
	then, $\taufxde(R)$ is jumping-safe.
\end{lemma}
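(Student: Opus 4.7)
The plan is to replay the proof of Theorem~\ref{thm:taude_GFQ} essentially verbatim, observing that the only structural difference between $\taude(R)$ and $\taufxde(R)$ is that in the latter Spoiler has revealed the entire input word $w = a_0 a_1 \cdots$ at the outset. Since the layered diagram built in the proof of Theorem~\ref{thm:taude_GFQ} already reads a fixed symbol $a_i$ at layer $i$ (the one determined by the initial $T$-jumping path $\pi = q_0\ T\ q^F_0\ T\ \hat q_0 \goesto {a_0} q_1 \cdots$), every auxiliary game invoked during the construction is played on this fixed word $w$. The $\taufxde(R)$-winning strategies guaranteed by $T$-edges can therefore be plugged in exactly where $\taude(R)$-strategies were used before.

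First I would set $T := \taufxde(R)$, extract $w$ from $\pi$, and for every $T$-edge appearing in the diagram fix once and for all a Duplicator winning strategy in the $w$-instance of the $\taufxde(R)$-game---such a strategy exists precisely because $s\ T\ q$ means Duplicator wins for \emph{every} word, in particular for $w$. Then I would build the layers $\rho_i$ inductively by exactly the zig-zag procedure used in Theorem~\ref{thm:taude_GFQ}: the rightmost arrow $\hat q_i \goesto {a_i} q_{i+1}$ is a genuine transition of $\NBA Q$, and each $T$-edge in layer $i$ propagates it leftward into an $R$-jumping transition over the same symbol $a_i$, producing $s_{i+1}^{i+1}, s_{i+1}^i, \dots, s_{i+1}^0$ as before. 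No additional quantification over words ever appears, so the construction is literally the same.

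Next I would verify that final states propagate right-to-left and top-to-bottom in the diagram. If $q^F_i \in F$, then $F$-respectingness of $R$ makes the proxy $\hat q^F_i$ witnessing $q^F_i \jgoesto {a_i} s_{i+1}^{i+1}$ final as well; and because the fixed Duplicator strategies satisfy the delayed winning condition on $w$, successive applications along row $i$ produce indices $j_0 \leq j_1 \leq \cdots \leq j_i$ with an accepting proxy $\hat s_{j_i}^0$ in the leftmost column. Consequently the leftmost column yields an initial $R$-jumping path $\pi' = s_0^0\ R\ \hat s_0^0 \goesto {a_0} s_1^0\ R\ \hat s_1^0 \goesto {a_1} \cdots$ which is fair whenever $\pi$ is fair.

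Finally, $R$ being jumping-safe supplies a coherent sequence of initial finite paths $\pi_0, \pi_1, \dots$ with $\last{\pi_i}\ R\ s_i^0$; since $R \subseteq T$ and $T$ is transitive (Lemmas~\ref{lem:taude_increasing} and \ref{lem:taude_transitive} transfer to $\taufxde$ unchanged, because fixing $w$ uniformly restricts Spoiler in all composed games), we conclude $\last{\pi_i}\ T\ q_i$, witnessing that $\taufxde(R)$ is jumping-safe. The only real obstacle is a bookkeeping one, namely ensuring that a single word $w$ is used coherently across all auxiliary games invoked in the construction; this is automatic, since the symbol at layer $i$ of the diagram is forced by $\pi$ itself.
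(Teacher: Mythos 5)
Your proposal is correct and follows the same route as the paper: the paper itself just remarks (immediately before stating the lemma) that ``the proof of Theorem~\ref{thm:taude_GFQ} works as it is,'' and you have supplied the justification, namely that every auxiliary $T$-game invoked in the layered diagram is already played on the fixed word $w$ determined by the initial $T$-jumping path, so the stronger hypothesis $T = \taufxde(R)$ still yields Duplicator winning strategies exactly where they are needed, and the transitivity/improvement lemmas transfer because the strategy-composition argument goes through word by word.
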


\appendixstatement{Theorem}{thm:fx_de_sim-GFQ}{
	$\fxdesim$ is good for quotienting.
}
\begin{proof}
	Directly from Lemma~\ref{lem:taufxde_GFQ},
	since $\fxdesim$ is (the transpose of) $\taufxde$ applied to the identity relation.
\end{proof}

\end{document}